	\numberwithin{equation}{section}
	\newtheorem{thm}{Theorem}
    \numberwithin{thm}{section} 
	\newtheorem{lemma}[thm]{Lemma}
	\newtheorem{assump}{Assumption}
	\newtheorem*{fact}{Fact}
\newcommand{\eat}[1]{}
\newcommand{\sbk}{\mathsf{SBK}}
\newcommand{\optskp}{\mathsf{OPT}_{\mathsf{SKP}}(\mathcal{J})}
\newcommand{\optsbk}{\mathsf{OPT}_{\mathsf{SBK}}(\mathcal{J})}
\newcommand{\sk}{\mathsf{SKP}}
\newcommand{\probemax}{Probemax}
\newcommand{\pandora}{Pandora's Box}
\newcommand{\probek}{ProbeTop-$k$}
\newcommand{\skrp}{\ensuremath{\operatorname{SK-RP}}\xspace}
\def\etal{\emph{et~al.}\xspace}
\newcommand{\ie}{{\em i.e.,~\xspace}}
\newcommand{\eg}{{\em e.g.,~\xspace}}
\newcommand{\opt}{\mathsf{OPT}}
\newcommand{\nopt}{\widetilde{\mathsf{OPT}}}
\newcommand{\mopt}{\mathsf{MAX}}
\newcommand{\mmopt}{\max_{I\in\mathcal{V}}\mathrm{DP}_1(I,\cA)}
\newcommand{\bP}{\mathbb{P}}
\newcommand{\bT}{\mathbb{T}}
\newcommand{\bE}{\mathbb{E}}
\newcommand{\bC}{\mathbb{C}}
\newcommand{\sC}{\mathsf{C}}
\newcommand{\sCA}{\mathsf{CA}}
\newcommand{\cG}{\mathcal{G}}
\newcommand{\cV}{\mathcal{V}}
\newcommand{\cS}{\mathcal{S}}
\newcommand{\cJ}{\mathcal{J}}
\newcommand{\cA}{\mathcal{A}}
\newcommand{\cB}{\mathcal{B}}
\newcommand{\cT}{\mathcal{T}}
\newcommand{\cR}{\mathcal{R}}
\newcommand{\cD}{\mathcal{D}}
\newcommand{\cI}{\mathcal{I}}
\newcommand{\wt}[1]{\widetilde{#1}}
\newcommand{\ep}{\varepsilon}
\newcommand{\lam}{\lambda}
\newcommand{\lf}{\ensuremath{\operatorname{LF}}\xspace}
\newcommand{\seg}{\mathsf{seg}}
\newcommand{\DP}{\mathrm{DP}}
\newcommand{\dpp}{\mathcal{M}}
\newcommand{\sig}{\mathsf{Sg}}
\title{A PTAS for a Class of Stochastic Dynamic Programs
\footnote{
This work is published in the 45th International Colloquium on Automata, Languages, and Programming (ICALP2018) \cite{icalp2018}.
This research is supported in part by the National Basic Research Program of China Grant 2015CB358700, the National Natural Science Foundation of China Grant 61772297, 61632016, 61761146003, and a grant from Microsoft Research Asia.
}
}
\author[1]{Hao Fu \footnote{Email: fu-h13@mails.tsinghua.edu.cn}}
\author[1]{Jian Li \footnote{Email: lijian83@mail.tsinghua.edu.cn}}
\author[2]{Pan Xu \footnote{Email: panxu@cs.umd.edu}}
\affil[1]{
Institute for Interdisciplinary Information Sciences, 
Tsinghua University, Beijng, China.}
\affil[2]{
Department of Computer Science,  
University of Maryland, College Park, USA.}
\begin{document}
\maketitle

%>>>

\begin{abstract}%<<<
We develop a framework for obtaining polynomial time approximation schemes (PTAS) for a class of stochastic dynamic programs. Using our framework, we obtain the first PTAS for 
the following stochastic combinatorial optimization problems:
\begin{enumerate}
\item {\em \probemax} \cite{Munagala2016}: We are given a set of $n$ items, each item $i\in [n]$ has a value $X_i$ which is an independent random variable with a known (discrete) distribution $\pi_i$. 
We can {\em probe} a subset $P\subseteq [n]$ of items sequentially.
Each time after  {probing} an item $i$, we observe its value realization, which follows the distribution $\pi_i$. 
We can {\em adaptively} probe at most $m$ items and each item can be probed at most once. The reward is the maximum among the $m$ realized values. Our goal is to design an adaptive probing policy such that the expected value of the reward is maximized.
To the best of our knowledge, the best known approximation ratio is $1-1/e$, due to Asadpour \etal~\cite{asadpour2015maximizing}.
We also obtain PTAS for some generalizations and variants of the problem.

%\item \probek: This is a generalization of \probemax. Here, 
%the reward is the summation of top-$k$ values among the $m$ realized %values
%(assuming $k=O(1)$).

%\item {\em Committed \probek}: The setting is the same as \probek, except that 
%the choice of $k$ values should be made right after their value %realizations.

\item {\em Committed \pandora} \cite{wei79,sahil2018}: We are given a set of $n$ boxes. For each box $i\in [n]$, the cost $c_i$ is deterministic and the value $X_i$ is an independent random variable with a known (discrete) distribution $\pi_i$.
Opening a box $i$ incurs a cost of $c_i$. 
We can adaptively choose to open the boxes (and observe their values) or stop.
We want to maximize the expectation of 
the realized value of the last opened box minus the total opening cost.

\item {\em Stochastic Target} \cite{ilhan2011adaptive}: 
Given a predetermined target $\bT$ and $n$ items, 
% probability distributions of $n$ independent random variables $X_i$
we can adaptively insert the items into a knapsack and insert at most $m$ items. 
Each item $i$ has a value $X_i$ which is an independent random variable with a known (discrete) distribution. 
Our goal is to design an adaptive policy such that the probability of the total values of all items inserted being larger than or equal to $\bT$ is maximized. We provide the first bi-criteria PTAS for the problem.
%We give an algorithm such that the probability of the total rewards exceeding $(1-\ep)\bT$ is at least $\opt-\ep$, where $\opt$ is the resultant probability of an optimal adaptive policy.

\item {\em Stochastic Blackjack Knapsack} \cite{levin2014adaptivity}:
We are given a knapsack of capacity $\bC$ and  
probability distributions of $n$ independent random variables $X_i$.
Each item $i\in[n]$ has a size $X_i$ and a profit $p_i$. We can adaptively insert the
items into a knapsack, as long as the capacity constraint is not violated.
We want to maximize the expected total profit of all inserted items.  
If the capacity constraint is violated, we lose all the profit.
We provide the first bi-criteria PTAS for the problem.
%\end{theorem}

%\item such as \probemax~problem, committed \probek~problem, committed \pandora problem, stochastic knapsack with random profits and all-or-nothing stochastic knapsack problem.
\end{enumerate}
\end{abstract}
%>>>

\newpage

\section{Introduction} %<<<
%The classical dynamic programming is a powerful method for solving the problem by combining solutions to sub-problems. At each process, we have the state of the system. When an action or a decision is taken, the system transitions into a new state.
%When the transformations are stochastic rather than deterministic, we obtain the stochastic dynamic programming which was introduced by Bellman and was technique for modeling and solving problems of decision making under uncertainty.
%A dynamic programming (DP) is a mathematical model for sequential making (Dynamic programming problems are also called Markov decision process). It has a wide variety of applications in operation research, logistics, economic, mathematical finance and so on. 

Consider an online stochastic optimization problem with a finite number of rounds. 
There are a set of tasks (or items, boxes, jobs or actions).
In each round, we can choose a task 
and each task can be chosen at most once. 
We have an initial ``state'' of the system
(called the value of the system). At each time period, we can select a task. Finishing the task generates some (possibly stochastic) feedback, including changing the value of the system
and providing some profit for the round. 
Our goal is to design a strategy to maximize our total (expected) profit.   

The above problem can be modeled as a class of stochastic dynamic programs which was introduced by Bellman \cite{bellman1957}. There are many problems in stochastic combinatorial optimization which fit in this model, \eg the stochastic knapsack problem \cite{dean2005adaptivity}, the \probemax~problem \cite{Munagala2016}. 
Formally, the problem is specified by
a $5$-tuple $(\cV,\cA,f,g,h,T)$. 
Here, $\cV$ is the set of all possible values of the system. 
$\cA$ is a finite set of items or tasks which can be selected and each item can be chosen at most once. This model proceeds for at most $T$ rounds.
At each round $t\in[T]$, 
we use $I_t\in\cV$ to denote the current value of the system
and $\cA_t \subseteq \cA$ the set of remaining available items.
If we select an item $a_t\in \cA_t$,
the value of the system changes to $f(I_t,a_t)$. Here $f$ may be stochastic and is assumed to be independent for each item $a_t\in \cA$.
Using the terminology from Markov decision processes,
the state at time $t$ is $s_t=(I_t,\cA_t)\in \cV\times2^{\cA}$.
\footnote{This is why we do not call $I_t$ the state of the system.}
Hence, if we select an item $a_t\in \cA_t$, the evolution of the state is determined by the state transition function $f$: 
\begin{equation}
s_{t+1}=(I_{t+1},\cA_{t+1})=(f(I_t,a_t),\cA_t\setminus a_t) \quad t=1,\ldots, T.
\end{equation}
Meanwhile the system yields a random profit $g(I_t,a_t)$.
The function $h(I_{T+1})$ is the terminal profit function at the end of the process. 
% Without lose of generality, at each time period $t$, we can choose a special item $a=null$ such that $f(I_t,a)=I_t$ and $g(I_t,a)=0$ for any value $I_t\in \cV$.

We begin with the  initial state $s_1=(I_1,\cA)$. We choose an item $a_1\in \cA$. Then the system yields a profit $g(I_1,a_1)$, and moves to the next state $s_2=(I_2,\cA_2)$ where $I_2$ follows the distribution $f(I_1,a_1)$ and $\cA_2=\cA\setminus a_1$. This process is iterated yielding a random sequence
$$
s_1,a_1,s_2,a_2,s_3,\ldots,a_{T},s_{T+1}.
$$
The profits are accumulated over $T$ steps. \footnote{If less than $T$ steps, we can use some special items to fill which satisfy that $f(I,a)=I$ and $g(I,a)=0$ for any value $I\in \cV$.}
The goal is to find a policy that maximizes the expectation of the total profits $\bE\Big[\sum_{t=1}^{T}g(I_t,a_t)+h(I_{T+1})\Big]$.
Formally, we want to determine: 
\begin{align*}
&\DP^{\ast}(s_{1})=\max_{\{a_1,\ldots,a_{T}\}\subseteq \cA}\bE\Big[\sum_{t=1}^{T}g(I_t,a_t)+h(I_{T+1})\Big]\quad\quad (\DP)\\
&\quad\quad \text{subject to: } I_{t+1}=f(I_t,a_t), \quad\quad t=1,\ldots,T.
\end{align*}
By Bellman's equation \cite{bellman1957}, for every initial state $s_1=(I_1,\cA)$, the optimal value $\DP^{\ast}(s_1)$ is given by $\DP_1(I_1,\cA)$. Here $\DP_1$ is the function defined by $\DP_{T+1}(I_{T+1})=h(I_{T+1})$ together with the recursion:
\begin{equation}\label{equ:dp}
\DP_{t}(I_t,\cA_t)=\max_{a_t\in \cA_t}\bE\Big[\DP_{t+1}(f(I_t,a_t),\cA_t\setminus a_t)+g(I_t,a_t)\Big],\quad t=1,\ldots,T.
\end{equation}
When the value and the item spaces are finite, and the expectations can be computed, this recursion yields an algorithm to compute the optimal value. However, since the state space $\cS=\cV\times2^{\cA}$ is exponentially large, this exact algorithm requires exponential time.
Since this model can capture several stochastic optimization problems which are known (or believed) be \#P-hard or even PSPACE-hard, we are interested in obtaining polynomial-time 
approximation algorithms with provable performance guarantees. 

\subsection{Our Results}
%The state space is exponentially large in problem parameters. Hence, it %is very hard to solve this problem in polynomial time. 
In order to obtain a polynomial time approximation scheme (PTAS) 
for the stochastic dynamic program, 
we need the following assumptions.
\begin{assump}\label{cond:main}
In this paper, we make the following assumptions.
\begin{enumerate}
\item The value space $\cV$ is discrete and ordered, and its size $|\cV|$ is a constant. W.l.o.g., we assume $\cV=(0,1,\ldots,|\cV|-1)$.
%We assume that the size of the support $\cV$ of the states is a constant. 
\item The function $f$ satisfies that $f(I_t,a_t)\ge I_t$, which means the value is nondecreasing. 
\item The function $h: \cV\rightarrow R^{\ge 0}$ is a nonnegative function. The expected profit $\bE[g(I_t,a_t)]$ is nonnegative (although the function $g(I_t,a_t)$ may be negative with nonzero probability).
\end{enumerate}
\end{assump}

Assumption (1) seems to be quite restrictive.
However, for several concrete problems
where the value space is not of constant size
(\eg \probemax\ in Section \ref{sec:app}), 
we can discretize the value space and reduce its size to a constant, without losing much profit. 
%We need to count the number of changes in value, which is required to %be a constant.
Assumption (2) and (3) are quite natural for many problems.
Now, we state our main result.

\begin{thm}\label{thm:main}
For any fixed $\ep>0$, if Assumption \ref{cond:main} holds, 
we can find an adaptive policy in polynomial time $n^{2^{O(\ep^{-3})}}$ with 
expected profit at least 
$\opt-O(\ep)\cdot\mopt$
where $\mopt=\mmopt$ and $\opt$ denotes the expected profit of the optimal adaptive policy. 
% is the upper bound of the result of the dynamic function.
\end{thm}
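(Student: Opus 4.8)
The plan is to pass through an intermediate family of \emph{block-adaptive policies} and to establish two things: (a) some block-adaptive policy has expected profit at least $\opt-O(\ep)\cdot\mopt$, and (b) the best block-adaptive policy can be found in time $n^{2^{O(\ep^{-3})}}$. The resource we spend against throughout is the inequality $\DP_t(I_t,\cA_t)\le\DP_1(I_t,\cA)\le\mopt$, valid for every round $t$ and every reachable state $(I_t,\cA_t)$ by Bellman's equation, monotonicity of $f$, Assumption~\ref{cond:main}(3), and the padding convention: it says the profit still to be collected from \emph{any} node of the optimal decision tree is at most $\mopt$, so any small piece we discard or perturb can be charged to $\mopt$.

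The first observation I would use is that the optimal policy has limited adaptivity: since $f(I_t,a_t)\ge I_t$ and $|\cV|$ is a constant, along any realization the current value jumps at most $|\cV|-1$ times, and since a collected $g$-reward never influences the future, the optimal policy branches only on these value jumps --- between jumps it plays a deterministic item sequence whose identity, however, is a fine-grained function of exactly which items have already been used. To tame this I would (i) bucket the items into $O(1/\ep)$ \emph{types} by rounding, for each $v\in\cV$, the distributions of $f(v,\cdot)$ and $g(v,\cdot)$ onto a grid, so that same-type items are interchangeable up to a $(1\pm\ep)$ factor in every quantity the algorithm tracks, while separating out the few ``exceptional'' items whose own expected contribution exceeds $\ep\cdot\mopt$; (ii) compress the state that must be remembered at a block boundary into a \emph{signature} --- the conditional distribution of the current value over $\cV$ together with the number of used items of each type truncated at $\mathrm{poly}(1/\ep)$ --- which takes only $2^{\mathrm{poly}(1/\ep)}$ values; and (iii) cut each root-to-leaf path of the optimal tree into $K=\mathrm{poly}(1/\ep)$ blocks, each contributing at most $\ep\cdot\mopt$ in expectation. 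A block-adaptive policy is then one that, entering a block, reads only the current signature and commits to a within-block plan --- a multiset of types plus a constant number of named exceptional items --- reacting further only to value jumps.

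The heart of the argument, step (a), is to coarsen the optimal policy into a block-adaptive one block by block: replace the exact within-block behavior by the canonical plan attached to the incoming signature, and each used item by an arbitrary unused item of its type. An exchange/averaging argument bounds the change in expected within-block profit and in the induced outgoing signature by $O(\ep)$ times that block's own contribution plus $O(\ep/K)\cdot\mopt$ of discretization slack; summed over the $K$ blocks this is $O(\ep)\cdot\mopt$. I expect the main obstacle to be making this rigorous: one must show the coarsened policy is \emph{feasible} --- it never demands more items of a type than remain, which is exactly what the truncated counts and type-budgets in the signature are designed to guarantee --- and that rounding errors in the signatures do not compound multiplicatively across the $K$ levels, handled by always rounding conservatively and by the $\mopt$ bound on every tail.

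Finally, for step (b), a block-adaptive policy is described by a depth-$K$ decision tree whose nodes carry signatures and whose $2^{O(\ep^{-3})}$ edges each carry a within-block plan; enumerating the assignment of the $n$ items to the $2^{O(\ep^{-3})}$ exceptional-item slots (and the polynomially many truncated type-count choices), and evaluating each resulting policy's expected profit in $\mathrm{poly}(n)$ time by a bottom-up pass over the tree, finds the optimum over block-adaptive policies in time $n^{2^{O(\ep^{-3})}}$. Combining (a) and (b) yields a polynomial-time policy with expected profit $\opt-O(\ep)\cdot\mopt$, as claimed.
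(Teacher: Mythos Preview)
Your overall strategy --- pass through block-adaptive policies, show one is near-optimal, then search over them --- is exactly the paper's. But two load-bearing steps are missing or wrong as stated.

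\textbf{The truncation lemma.} Your cutting criterion (``each block contributing at most $\ep\cdot\mopt$ in expectation'') does not by itself give $\mathrm{poly}(1/\ep)$ blocks. Take $\cV=\{0,1\}$, $h\equiv 0$, $T=n$, and items with $g(0,a_i)=1$ deterministically and $\Pr[f(0,a_i)=1]=1/2$. The optimal policy keeps probing at state $0$, so $\opt\approx 2$ and $\mopt\approx 2$; yet the leftmost root-to-leaf path (state stays $0$ throughout, probability $2^{-n}$) has $\sum_v\cG_v=n$, forcing $\Theta(n/\ep)$ blocks under your rule. The paper handles this with a separate truncation step (Lemma~\ref{lemma:trun}): a martingale argument on $I_t-\sum_{i<t}y_{v_i}$ gives $\bE[\mu(\cR)]\le|\cV|$, so one may prune to $\mu(\cR)\le O(1/\ep)$ on every realization path at cost $O(\ep)\cdot\mopt$. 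Blocks are then cut by \emph{two} criteria: $\mu(M)\le\ep^2$ (at most $O(\ep^{-3})$ cuts, via the truncation bound), and near-constancy of the children's continuation profits $\bP(u_s)$ along the leftmost path (at most $O(\ep^{-2})$ cuts, since these are monotone for an optimal tree). The condition $\mu(M)\le\ep^2$ is also what makes the within-block item order negligible (Lemma~\ref{lemma:appr}); without it there is no order-free ``block plan'' to enumerate.

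\textbf{Signature granularity.} Bucketing items into $O(1/\ep)$ (or even $\mathrm{poly}(1/\ep)$) types by rounding $\Phi_a(I,\cdot)$ and $\cG_a(I)$ to precision $\ep$ lets errors accumulate: a block may contain up to $n$ items, so its aggregate transition probabilities are determined only up to $\pm n\ep$. Truncating the \emph{recorded} type-counts in your state signature does not help, because the block still physically contains all those items. The paper instead rounds each coordinate to multiples of $\ep^4/n$ and works with an additive \emph{block} signature $\sig_I(M)=\sum_{a\in M}\sig_I(a)$, which pins down the block's transition kernel and expected reward to within $O(\ep^4)$ regardless of $|M|$ (Lemma~\ref{lemma:sig}). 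The per-block signature space is then $n^{O(1)}$ rather than a constant, and step~(b) is a dynamic program that decides, item by item, into which antichain of the $2^{O(\ep^{-3})}$ blocks to place it subject to signature and per-path cardinality constraints --- not an enumeration of items into exceptional slots.
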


%Our result can be straightforwardly extended to a more general case when all feasible sets of items we can take form a partition matroid (see Section \ref{sec:partition}). By Theorem \ref{thm:main}, we can derive some new results for several problems (see Section \ref{sec:app}).

\vspace{0.2cm}
\noindent
{\bf Our Approach:}
For the stochastic dynamic program, an optimal adaptive policy $\sigma$ can be represented as a decision tree $\cT$ (see Section \ref{sec:pre} for more details). The decision tree corresponding to the optimal policy may be exponentially large and arbitrarily complicated.
Hence, it is unlikely that one can even represent an optimal decision for the stochastic dynamic program in polynomial space. 
In order to reduce the space, we focus a special class of policies,
called {\em block adaptive policy}.
The idea of {\em block adaptive policy} was first introduced by Bhalgat\etal~\cite{bhalgat2011} and further generalized in \cite{li2013} to the context of the stochastic knapsack. To the best of our knowledge, 
the idea has not been extended to other applications.
In this paper, we make use of the notion of block adaptive policy as well, but we target at the development of a general framework.
For this sake we provide a general model of block policy (see Section \ref{sec:block}).
Since we need to work with the more abstract dynamic program,
our construction of block adaptive policy is somewhat different 
from that in \cite{bhalgat2011, li2013}.
%In this paper, we extend this idea to the more general stochastic dynamic program.

Roughly speaking, in a block adaptive policy,
we take a batch of items simultaneously instead of a single one each time. This can significantly reduce the size of the decision tree. 
Moreover, we show that there exists a block-adaptive policy that approximates the optimal adaptive policy and has only a constant number of blocks on the decision tree (the constant depends on $\ep$). 
%Since we need to work with the more abstract dynamic program,
%our construction of block adaptive policy is somewhat different 
%from that in \cite{bhalgat2011, li2013}.
Since the decision tree corresponding to a block adaptive policy has a constant number of nodes, the number of all topologies of the block decision tree is a constant. Fixing the topology of the decision tree
corresponding to the block adaptive policy, we still need to decide 
the subset of items to place in each block. Again, there is exponential number of possible choices.
For each block, we can define a {\em signature} for it,
%For each block, we used the technique developed for non-adaptive \probemax 
%in \cite{chen2016} (noting that a non-adaptive policy simply chooses a subset of items). 
which allows us to represent a block using polynomially many 
possible signatures. 
The signatures are so defined such that two subsets with the same signature 
have approximately the same reward distribution.
Finally, we show that we can enumerate the signatures of all blocks in polynomial time using dynamic programming 
and find a nearly optimal block-adaptive policy. 
The high level idea is somewhat similar to that in \cite{li2013}, 
but the details are again quite different.

%Although the high-level idea is the same, 
%our construction of block adaptive 
%policies are very different from that in~\cite{bhalgat2011,li2013}.

%In this paper, we provide (additive) PTAS for two problems using the notion of block policies: the \probemax problem and the stochastic knapsack with random profits (see below). 
%since our problems are quite different from the stochastic knapsack problem
%considered there.
%We remark that the idea of block policies may find other applications in stochastic combinatorial optimization.

\subsection{Applications}\label{sec:app}
Our framework can be used to obtain the first PTAS
for the following problems.

\subsubsection{The \probemax~Problem}
In the \probemax~problem, 
we are given a set of $n$ items. 
Each item $i\in [n]$ has a value $X_i$ which is an independent random variable following 
a known (discrete) distribution $\pi_i$. 
We can {\em probe} a subset $P\subseteq [n]$ of items sequentially.
Each time after {\em probing} an item $i$, 
we observe its value realization, 
which is an independent sample from the distribution $\pi_i$. 
We can {\em adaptively} probe at most $m$ items and each item can be probed at most once. The reward is the maximum among the $m$ realized values. Our goal is to design an adaptive probing policy such that the expected value of the reward is maximized.

Despite being a very basic stochastic optimization problem,
we still do not have a complete understanding of the approximability of the \probemax~problem.
It is not even known whether 
it is intractable to obtain the optimal policy. 
For the non-adaptive \probemax~problem (\ie the probed set $P$ is just a priori fixed set), 
it is easy to obtain a $1-1/e$ approximation by noticing that $f(P)=\bE[\max_{i\in P}X_i]$ is a submodular 
function (see e.g., Chen \etal~\cite{chen2016}).
Chen \etal~\cite{chen2016} obtained the first PTAS.
When considering the adaptive policies, Munagala \cite{Munagala2016} provided a $\frac{1}{8}$-approximation ratio algorithm by LP relaxation. His policy is essentially a non-adaptive policy (it is 
related to the contention resolution schemes \cite{vondrak2011submodular,gupta2013}).
They also showed that the {\em adaptivity gap} (the gap between the optimal adaptive policy and optimal non-adaptive policy) is at most 3. 
%(the gap between the values of best adaptive and best non-adaptive policies, see below)
For the \probemax~problem,
the best-known approximation ratio is $1-\frac{1}{e}$. 
Indeed, this can be obtained using the algorithm for 
stochastic monotone submodular maximization in
Asadpour \etal\cite{asadpour2015maximizing}. 
This is also a non-adaptive policy, 
which implies the adaptivity gap is at most $\frac{e}{e-1}$. 
In this paper, we provide the first PTAS, 
among all adaptive policies. Note that our policy is indeed adaptive. 

% Let the value $I_t$ be the maximum among the realized values of the probed items at the time period $t$. Thus, we begin with the initial value $I_1=0$. Since we can probe at most $m$ items, we set the number of rounds to be $T=m$. When we probe an item $i$ and observe its value realization, say $X_i$, we have the system dynamic functions
\begin{thm}\label{thm:probemax}
There exists a PTAS for the \probemax~problem. 
In other words, for any fixed constant $\varepsilon>0$, there is a polynomial-time approximation algorithm for the \probemax~problem that finds a policy with the expected profit at least $(1-\ep)\opt$, where $\opt$ denotes the expected profit of the optimal adaptive policy.
\end{thm}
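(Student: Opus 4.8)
The plan is to realize the \probemax~problem as an instance of the stochastic dynamic program $(\cV,\cA,f,g,h,T)$ and then invoke Theorem~\ref{thm:main}. The natural encoding takes $\cA=[n]$ and $T=m$, lets the ``value'' $I_t$ be the largest realization observed among the items probed so far, puts $g\equiv 0$ and $h(I)=I$, and defines the stochastic transition by $f(I,i)=\max\{I,X_i\}$, which is independent across items. Since each $X_i\ge 0$ we have $f(I,i)\ge I$, so parts (2)--(3) of Assumption~\ref{cond:main} hold trivially, and starting from $I_1=0$ one has $\DP_1(0,\cA)=\opt$ (for the $\max$-objective probing fewer than $m$ items never helps, so we may assume exactly $m$ probes, padding with dummy items $a$ for which $f(I,a)=I$). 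The one assumption that fails is part (1): the set of attainable running maxima is the union of the supports of the $\pi_i$'s, which is not of constant size. Hence the entire content of the proof is a value-space discretization that simultaneously (i) reduces $|\cV|$ to a constant depending only on $\ep$, (ii) keeps $\mopt=\max_{I\in\cV}\DP_1(I,\cA)$ within a constant factor of $\opt$, so that the additive loss $O(\ep)\cdot\mopt$ of Theorem~\ref{thm:main} becomes $O(\ep)\opt$, and (iii) perturbs the reward of every policy by at most $O(\ep)\opt$.

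First I would compute a constant-factor estimate $\Phi$ of $\opt$: since $\max_i\bE[X_i]\le\opt\le m\cdot\max_i\bE[X_i]$, it suffices to run the whole procedure for each of the $O(\log m)$ powers of two in this range and keep the best resulting policy. Fixing $\Phi$ with $\Phi\le\opt\le 2\Phi$, I would split realized values into three regimes. Values below $\ep\Phi$ are rounded down to $0$; a policy loses reward only when \emph{every} probed realization falls in this range, and then by at most $\ep\Phi\le\ep\opt$. Values in $(\ep\Phi,\,K]$, for a cap $K=\Theta(\Phi)$, are rounded down to the nearest power of $1+\ep$, of which there are only $O(\ep^{-1}\log\ep^{-1})$; rounding each such value down by at most a $1+\ep$ factor shrinks the running maximum by at most a $1+\ep$ factor, hence by at most $\ep$ times the true maximum, a total loss of $O(\ep)\opt$. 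After these two steps $\cV$ has constant size and its top value is $K=O(\opt)$, so $\mopt=O(\opt)$; note that one cannot afford to take $K$ much larger than $\Phi$ (say $\Phi/\ep$), since then $\mopt$ would be $\Omega(\opt/\ep)$ and the error bound would degrade to $\Omega(\opt)$, which is why the large tail needs the genuinely separate treatment below.

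The delicate part is the ``huge'' regime of realizations exceeding $K$: a single rare-but-enormous value may carry a constant fraction of $\opt$, so we cannot simply truncate it, yet enlarging $\cV$ to accommodate it would blow up $\mopt$. The structural fact I would exploit is that for any near-optimal policy the probability that \emph{some} huge realization is ever observed is $O(\ep)$ --- otherwise the event ``a value $\ge K$ is seen'', on which the reward is at least $K=\Omega(\opt)$, would by itself contribute $\Omega(\opt)/\ep\gg\opt$. I would therefore fold the huge tail into the (previously zero) random profit $g$: upon probing item $i$ award an extra random profit $(X_i-K)^+$ and let the running maximum saturate at $K$ (equivalently, jump to a dedicated absorbing top-state). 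This keeps $\bE[g(\cdot,i)]=\bE[(X_i-K)^+]\ge 0$ and $|\cV|$ constant with top value $K=O(\opt)$. A short computation shows that the gap between this modified objective and the true $\bE[\max_{i\in P}X_i]$ equals $\sum_{i\in P}\bE[(X_i-K)^+]-\bE[\max_{i\in P}(X_i-K)^+]$, the sum-minus-max defect of the nonnegative variables $(X_i-K)^+$; this is controlled by the probability of two simultaneous huge realizations along an execution and, combined with the ``huge is rare'' estimate (and, if necessary, with guessing up front the $O_\ep(1)$ items from which the optimum draws most of its huge contribution), is $O(\ep)\opt$. This reconciliation of a constant-size value space with the heavy tail of $\max_iX_i$ is the main obstacle; the rest of the discretization is routine.

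Finally, the discretized instance satisfies Assumption~\ref{cond:main}, and by the discussion above its optimal DP value $\opt'$ obeys $\opt'\ge\opt-O(\ep)\opt$. Applying Theorem~\ref{thm:main} yields, in time $n^{2^{O(\ep^{-3})}}$, a policy whose DP-value is at least $\opt'-O(\ep)\mopt=\opt'-O(\ep)\opt$; executing this policy on the original instance (discretizing each observed realization on the fly) loses at most the $O(\ep)\opt$ over-counting gap, so its true expected reward is at least $\opt-O(\ep)\opt$. Rescaling $\ep$ by a constant gives a $(1-\ep)$-approximation, and since the construction was repeated only $O(\log m)$ times the overall running time is still of the form $n^{2^{O(\ep^{-3})}}$, completing the PTAS.
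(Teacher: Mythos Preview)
Your high-level plan---cast \probemax\ as an instance of the DP framework, discretize the value space to satisfy Assumption~\ref{cond:main}(1), and invoke Theorem~\ref{thm:main}---is exactly right and matches the paper. The treatment of the small and medium regimes is fine. The genuine gap is in the ``huge'' regime.

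With $K=\Theta(\Phi)=\Theta(\opt)$, your key claim that ``the probability that some huge realization is ever observed is $O(\ep)$'' is false: from $\opt\ge K\cdot\Pr[\text{some }X_i\ge K]$ you only get that this probability is $\le \opt/K=O(1)$, not $O(\ep)$; the ``$\Omega(\opt)/\ep$'' in your argument has no source. Consequently your over-counting bound fails. Concretely, for the policy $\sigma$ your algorithm outputs (which maximizes the \emph{modified} objective), the defect
\[
\sum_{i\in P}\bE[(X_i-K)^+]-\bE\Big[\max_{i\in P}(X_i-K)^+\Big]
\]
can be a constant fraction of $\opt$ (take $m$ i.i.d.\ items with $X_i=V$ w.p.\ $q$ and $0$ otherwise, with $mq$ of constant order and $V$ a constant multiple of $K$), so ``modified value $\ge(1-O(\ep))\opt$'' does not imply ``true value $\ge(1-O(\ep))\opt$''. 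Folding the tail into $g$ also means $\mopt$ is no longer just the top value $K$: one has $\mopt\ge K+\max_{|S|\le m}\sum_{i\in S}\bE[(X_i-K)^+]$, and while this turns out to be $O(\opt)$ when $K\ge 2\opt$, you do not argue it, and the resulting constant in the gap bound is $\Theta(1)$, not $O(\ep)$. The parenthetical ``guessing $O_\ep(1)$ heavy items'' is too vague to repair this, since the defect concerns the policy the algorithm \emph{finds}, not the optimum.

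The paper sidesteps all of this by taking the cap $\theta=\nopt/\ep=\Theta(\opt/\ep)$ (so that the huge event really does have probability $O(\ep)$), and absorbing the tail not into $g$ but into a rescaled point-mass at $\theta$ via $\Pr[\wt X=\theta]=\Pr[X\ge\theta]\cdot\bE[X\mid X\ge\theta]/\theta$, which exactly preserves $\bE[X\cdot\mathbb{1}\{X\ge\theta\}]$. This makes $\mopt=\theta=\Theta(\opt/\ep)$, and the paper simply applies Theorem~\ref{thm:main} with accuracy parameter $\ep^2$ so that the additive loss is $O(\ep^2)\cdot\mopt=O(\ep)\opt$. You explicitly rejected $K=\Theta(\opt/\ep)$ on the grounds that the error ``would degrade to $\Omega(\opt)$'', but that is only true if one is unwilling to shrink the parameter in Theorem~\ref{thm:main}; doing so costs only a worse (but still polynomial) running time.
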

Let the value $I_t$ be the maximum among the realized values of the probed items at the time period $t$. 
Using our framework, we have the following system dynamics for \probemax:
\begin{equation}\label{in:probemax}
I_{t+1}=f(I_t,i)=\max\{I_t,X_i\},\quad g(I_t,i)=0,\text{ and } h(I_{T+1})=I_{T+1}
\end{equation}
$t=1,2,\ldots, T$. Clearly, Assumption \ref{cond:main} (2) and (3) are satisfied. But Assumption \ref{cond:main} (1) is not satisfied because the value space $\cV$ is not of constant size.
Hence, we need to discretize the value space and reduce its size to a constant. See Section \ref{sec:prob-max} for more details.
% Hence, we need to discretize the value space and reduce its size to a constant. See Appendix \ref{sec:prob-max} for more details.
If the reward is the summation of top-$k$ values ($k=O(1)$) among the $m$ realized values, we obtain the \probek~problem. Our techniques also allow us to derive the following result.
\begin{thm}\label{thm:topk_noncommit}
For the \probek~problem where $k$ is a constant,
%\begin{enumerate}
there is a polynomial time algorithm that finds an adaptive policy with the expected profit at least $(1-\ep)\opt$, where $\opt$ denotes the expected profit of the optimal adaptive policy.   
%\item There is a polynomial time algorithm that finds a non-adaptive policy with the expected value at least $(1-\ep)\optc$, where $\optc$ denotes the expected profit of the optimal non-adaptive policy.   
%\end{enumerate}
\end{thm}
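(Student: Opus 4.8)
The plan is to realize \probek\ as an instance of the abstract dynamic program $(\cV,\cA,f,g,h,T)$ from Section~\ref{sec:app} and then invoke Theorem~\ref{thm:main} as a black box, mirroring the treatment of \probemax\ in Section~\ref{sec:prob-max} but now recording the current $k$ largest realized values rather than just the current maximum. Concretely, take $\cA=[n]$, $T=m$, and at a state reached after probing some items let the value $I_t$ be the sorted tuple $(v_1\ge v_2\ge\cdots\ge v_k)$ of the $k$ largest realizations so far (padded with zeros if fewer than $k$ items have been probed); probing item $i$ and observing $X_i$ moves the value to $f(I_t,i)$, the sorted top-$k$ of the multiset $\{v_1,\dots,v_k,X_i\}$. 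Set $g\equiv 0$ and $h(I_{T+1})=\sum_{j=1}^{k}v_j$. Since $f$ depends only on $X_i$ it is independent across items; replacing $v_k$ by the larger value $X_i$ (or by nothing, when $X_i\le v_k$) can only raise each order statistic, so $f(I_t,i)$ dominates $I_t$ coordinatewise, and ordering $k$-tuples lexicographically makes $f$ nondecreasing in that total order. Thus parts (2) and (3) of Assumption~\ref{cond:main} hold, and the only obstruction is part (1): a priori $\cV$ is not of constant size.

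Removing that obstruction is where essentially all the problem-specific work lies, and it follows the \probemax\ argument of Section~\ref{sec:prob-max}. First observe that, for any fixed outcome of the $m$ probes, the top-$k$ sum lies between the maximum and $k$ times the maximum; hence the \probek\ optimum $\opt$ is within a factor $k=O(1)$ of the \probemax\ optimum on the same items, so the discretization parameters of Section~\ref{sec:prob-max} apply here up to constants. Obtain a constant-factor estimate of $\opt$ and rescale so $\opt=\Theta(1)$; fix a grid $\cG$ of $B=B(\ep)$ values (geometric on the small range, topped off at a truncation threshold $\tau^\ast=c(\ep)\cdot\opt$ depending only on $\ep$), guess the constantly many ``large'' items whose big realizations can carry non-negligible reward ($n^{O(B)}$ guesses) and keep their distributions exact, truncate every other $X_i$ at $\tau^\ast$, and bucket all surviving realizations down onto $\cG$. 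Because the reward is a sum of only $k=O(1)$ bucketed values, truncation plus bucketing changes the expected reward of every policy by at most $O(\ep)\cdot\opt$; applying this in both directions shows the discretized optimum satisfies $\opt_{\mathrm{disc}}\ge\opt-O(\ep)\opt$ and that any policy for the discretized instance pulls back to the original one with no extra loss. After discretization every attainable value is a sorted $k$-tuple over $\cG$, so $|\cV|\le\binom{B+k-1}{k}$ is a constant and all three parts of Assumption~\ref{cond:main} now hold.

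Finally, invoke Theorem~\ref{thm:main} on the discretized instance with a parameter $\delta$, obtaining in time $n^{2^{O(\delta^{-3})}}$ a policy of expected profit at least $\opt_{\mathrm{disc}}-O(\delta)\cdot\mopt$, where $\mopt=\mmopt$. Here $\mopt\le k\tau^\ast+\opt_{\mathrm{disc}}=c'(\ep)\cdot\opt$, since starting the DP from a tuple $I$ adds at most $\sum_j I_j\le k\tau^\ast$ over starting from the empty tuple (every grid value is at most $\tau^\ast$); choosing $\delta=\ep/c'(\ep)$ therefore makes $O(\delta)\mopt=O(\ep)\opt$ while keeping the total running time, including the $n^{O(B)}$ guessing factor, of the form $n^{f(\ep,k)}$, i.e.\ polynomial for fixed $\ep$ and $k$. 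Together with the $O(\ep)\opt$ discretization loss this produces a policy of value $(1-O(\ep))\opt$, and rescaling $\ep$ completes the argument. The genuinely delicate point beyond the \probemax\ case is the discretization itself: one must verify that bucketing individual realizations distorts the top-$k$ order statistic by only $O(\ep)\opt$ in expectation and that only constantly many items need the ``large item'' special handling. These are more subtle than for the single maximum, though still routine once $k$ is treated as a constant.
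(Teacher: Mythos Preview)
Your overall plan---encode the state as the sorted top-$k$ tuple, verify Assumption~\ref{cond:main}(2,3), discretize to force Assumption~\ref{cond:main}(1), then invoke Theorem~\ref{thm:main}---is exactly the paper's route (Section~4.2). The gap is in your discretization. Truncating at $\tau^\ast=c(\ep)\cdot\opt$ and ``guessing constantly many large items'' does not control the loss. Consider $n$ symmetric items with $X_i=M$ with probability $1/(nM)$ and $0$ otherwise, $m=n$; then $\opt\approx 1$ and the \emph{entire} reward comes from the rare event that some $X_i=M$. Truncation at any $\tau^\ast=O(1)$ wipes out essentially all of $\opt$, and since the items are identical there is no finite subset to ``guess''. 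Separately, keeping some items' distributions \emph{exact} means their realizations need not lie in your grid $\cG$, so the state space is no longer of constant size and Assumption~\ref{cond:main}(1) fails anyway.

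The paper does not truncate. It replaces the upper tail of each $X_i$ by a point mass at the threshold $\theta=\nopt/\ep$ with \emph{rescaled} probability $\Pr[\wt X_i=\theta]=\Pr[X_i\ge\theta]\cdot\bE[X_i\mid X_i\ge\theta]/\theta$ (and renormalizes the sub-threshold mass), so that the tail's expected contribution is preserved rather than discarded; it then proves the canonical-policy Lemma~\ref{lemma:canonicaltopk} (the top-$k$ analogue of Lemma~\ref{lemma:maxcanonical}), which is precisely the ``genuinely delicate point'' you flag but do not resolve. With this rescaling all realizations land in $\{0,\ep\nopt,\ldots,\nopt/\ep\}$, the value space is the set of sorted $k$-tuples over this grid (constant size), $\mopt=k\cdot\nopt/\ep=O(\opt/\ep)$, and Theorem~\ref{thm:main} applies directly. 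Your final paragraph's bookkeeping for $\mopt$ and the choice of $\delta$ is fine once the discretization is fixed.
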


\subsubsection{Committed ProbeTop-k Problem}\label{subsec:committed}
We are given a set of $n$ items. Each item $i\in [n]$ has a value $X_i$ which is an independent random variable with a known (discrete) distribution $\pi_i$. 
%Again, we are given $n$ items, each associated with a known (discrete) distribution $\{X_i\}_{i\in [n]}$. 
%When we {\em probe} the $i$th item, its value is realized, which is an independent sample from the distribution $\pi_i$. 
We can {\em adaptively} probe at most $m$ items and choose $k$ values in the committed model, where $k$ is a constant.
In the {\em committed} model, once we probe an item and observe its value realization, we must make an irrevocable decision whether to choose it or not, \ie we must either add it to the final chosen set $C$ immediately or discard it forever.
\footnote{In \cite{gupta2013,gupta2016algorithms}, it is called the online decision model.} 
If we add the item to the final chosen set $C$, the realized profit is collected. Otherwise, no profit is collected and we are going the probe the next item.
%Once we probe an item, we are committed to making an decision whether to choose it or not, \ie we must either add it to the final chosen set $C$ immediately, or discard it and go on. We can choose at most $k$ values and $k$ is a constant. 
Our goal is to design an adaptive probing policy such that the expected value $\bE[\sum_{i\in C}X_i]$ is maximized, where $C$ is the final chosen set.
\begin{thm}\label{thm:topk_commit}
There is a polynomial time algorithm that finds a committed policy with the expected profit at least $(1-\ep)\opt$ for the committed \probek~problem, where $\opt$ is the expected total profit
obtained by the optimal policy.
\end{thm}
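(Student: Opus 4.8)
The plan is to realize the committed \probek\ problem as an instance of the abstract stochastic dynamic program $(\cV,\cA,f,g,h,T)$ satisfying Assumption~\ref{cond:main}, and then invoke Theorem~\ref{thm:main}. I would set $T=m$ and let the value count the items committed so far, i.e.\ $\cV=\{0,1,\dots,k\}$; since $k=O(1)$ this has constant size, so Assumption~\ref{cond:main}(1) holds. The only feature not captured directly by the framework is the irrevocable commit decision taken \emph{after} a probe reveals $X_i$. The key observation is that in an optimal committed policy this decision is a threshold rule whose threshold depends only on the current state — the set of un-probed items, the number of remaining probes, and the number $r=k-j$ of remaining commit slots — and not on the realized value: writing $W(\cR,t,r)$ for the optimal expected additional profit with un-probed set $\cR$, $t$ probes and $r$ slots left, backward induction gives $W(\cR,t,r)=\max_{i\in\cR}\bE_{x\sim\pi_i}\big[\max\{W(\cR\setminus i,t-1,r),\,x+W(\cR\setminus i,t-1,r-1)\}\big]$, so after probing $i$ one commits iff $x\ge\Delta_i:=W(\cR\setminus i,t-1,r)-W(\cR\setminus i,t-1,r-1)$, a value independent of $x$ and, by monotonicity of $W$ in $r$, nonnegative.

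Accordingly I would take $\cA=\{(i,\tau):i\in[n],\ \tau\in\Theta_i\}$ with $\Theta_i=\mathrm{supp}(\pi_i)\cup\{+\infty\}$ the polynomially many relevant thresholds, reading $(i,\tau)$ as ``probe $i$ and commit it iff $X_i\ge\tau$''. For $j<k$ set $f(j,(i,\tau))=j+\mathbf{1}[X_i\ge\tau]$ with $X_i\sim\pi_i$ drawn independently, $g(j,(i,\tau))=X_i\cdot\mathbf{1}[X_i\ge\tau]$, and make both a no-op once $j=k$; put $h\equiv0$. Then $f(j,\cdot)\ge j$ and $g\ge0$, so Assumption~\ref{cond:main}(2),(3) hold. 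Restricting to policies that use at most one action from each group $\{(i,\tau)\}_\tau$ — exactly the ``each item probed at most once'' constraint — the policies of this dynamic program correspond precisely to the committed policies of \probek\ that use a (state-dependent) threshold for the commit decision, and by the observation above the best of them attains $\opt$; moreover starting with some commit slots already spent can only decrease the optimum, so $\mopt=\mmopt=\DP_1(0,\cA)=\opt$. Theorem~\ref{thm:main} then yields, in time $|\cA|^{2^{O(\ep^{-3})}}=\mathrm{poly}(n)$ for fixed $\ep$, a policy of expected profit at least $\opt-O(\ep)\cdot\mopt=(1-O(\ep))\opt$; rescaling $\ep$ gives the theorem.

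The step I expect to be the main (if routine) obstacle is checking that the block-adaptive approximation of Section~\ref{sec:block} and the signature-based enumeration go through verbatim for the program above despite the restriction to one action per item. This should cause no trouble: the block-adaptive construction places pairwise disjoint item sets in the $O(1)$ blocks, so each item is used by at most one block automatically, and the only change to the signature dynamic program is that when a block takes item $i$ it also picks $i$'s threshold, multiplying the per-item choices by $|\Theta_i|=\mathrm{poly}(n)$ and keeping everything polynomial. Beyond that, the reduction only uses the elementary facts that $W$ is nondecreasing in $r$ (so $\Delta_i\ge0$) and that behavior changes only at support points of $\pi_i$ (so $\Theta_i$ suffices); the conceptual crux is simply recognizing that the irrevocable commit decision collapses to a state-dependent threshold and can therefore be folded into the action space.
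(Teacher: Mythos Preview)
Your proposal is correct and follows essentially the same approach as the paper: encode the commit decision as a threshold action $b_i^\theta$, take $\cV=\{0,\dots,k\}$ counting committed items, verify Assumption~\ref{cond:main}, observe $\mopt=\DP_1(0,\cA)=\opt$, and handle the one-action-per-item-group constraint by modifying the signature DP of Section~\ref{sec:find-optimal} so that placing item $i$ also selects its threshold. One small imprecision: blocks on \emph{different} root-to-leaf paths of the block tree need not be disjoint, so ``each item is used by at most one block automatically'' is not literally true; but this is harmless, since the per-path disjointness enforced by the DP together with the threshold-choice modification you describe is exactly the paper's fix and suffices.
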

Let $b_{i}^{\theta}$ represent the action that we probe
item $i$ with the threshold $\theta$ (\ie we choose item $i$ if $X_i$ realizes to a value $s$ such that $s\ge \theta$). Let $I_t$ be the the number of items that have been chosen at the period time $t$. 
Using our framework, we have following transition dynamics for the \probek~problem.
\begin{equation}
I_{t+1}=f(I_t,b_i^{\theta})=
\left\{
\begin{array}{cl}
I_t+1   & \text{if } X_i\ge \theta, I_t<k,\\
I_t & \text{otherwise; } 
\end{array}
\right.
g(I_t,b_i^{\theta})=
\left\{
\begin{array}{cl}
X_i   & \text{if } X_i\ge \theta, I_t<k,\\
0 & \text{otherwise;}
\end{array}
\right.
\end{equation}
for $t=1,2,\ldots,T$, and $h(I_{T+1})=0$.  Since $k$ is a constant, Assumption \ref{cond:main} is immediately satisfied.
There is one extra requirement for the problem:
in any realization path, we can choose at most one 
action $b_i^{\theta}$ from the set 
$\cB_i=\{b_{i}^{\theta}\}_{\theta}$.
See Section \ref{sec:probek-commit} for more details. 

\subsubsection{Committed \pandora~Problem}
For Weitzman's ``Pandora's box'' problem \cite{wei79}, we are given $n$ boxes. For each box $i\in[n]$, the probing cost $c_i$ is deterministic and the value $X_i$ is an independent random variable with a known (discrete) distribution $\pi_i$. Opening a box $i$ incurs a cost of $c_i$. When we open the box $i$, its value is realized, which is a sample from the distribution $\pi_i$. The goal is to adaptively open a subset $P\subseteq[n]$ to maximize the expected profit:
$ 
\bE\left[\max_{i\in P}\{X_i\}-\sum_{i\in P}c_i\right].
$
Weitzman provided an elegant optimal adaptive strategy,
which can be computed in polynomial time. 
Recently, Singla \cite{sahil2018} generalized this model to other combinatorial optimization problems such as matching, set cover and so on. 

In this paper, we focus on the committed model, which is mentioned in Section \ref{subsec:committed}. 
Again, we can {\em adaptively} open the boxes and choose at most $k$ values in the committed way, where $k$ is a constant.
%Once we probe an item, we are committed to making an decision whether to choose it or not, \ie we must either add it to the final chosen set $C$ immediately, or discard it and go on. We can choose at most $k$ values and $k$ is a constant. 
Our goal is to design an adaptive  policy such that the expected value $\bE\left[\sum_{i\in C}X_i-\sum_{i\in P}c_i\right]$ is maximized, where $C\subseteq P$ is the final chosen set and $P$ is the set of opened boxes.
Although the problem looks like a slight variant of Weitzman's
original problem, it is quite unlikely that we can adapt
Weitzman's argument (or any argument at all) 
to obtain an optimal policy in polynomial time. 
%To the best of our knowledge, any approximation ratio about this problem is unknown before.
When $k=O(1)$,
we provide the first PTAS for this problem.
Note that a PTAS is not known previously even for $k=1$.

%Once we open a box, we are committed to making an decision whether to choose it or not, \ie we must either  add the value from the box to the final chosen set $C$, or discard it and go on. We can choose at most $k$ values from the opened boxes and $k$ is a constant. Our goal is to design a policy such that the expected value $\bE [\sum_{i\in C}X_{i}-\sum_{i\in P}c_i]$ is maximized.
\begin{thm}\label{thm:pandora_commit}
When $k=O(1)$, there is a polynomial time algorithm that finds a committed policy with the expected value at least $(1-\ep)\opt$ for the committed \pandora~problem. 
\end{thm}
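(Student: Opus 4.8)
The plan is to realize the committed \pandora\ problem as an instance of the stochastic dynamic program governed by Assumption \ref{cond:main} and then invoke Theorem \ref{thm:main}, in direct analogy with the treatment of the committed \probek\ problem in Section \ref{subsec:committed}, the only genuinely new ingredient being the probing costs. I would let the value $I_t\in\cV=\{0,1,\ldots,k\}$ be the number of boxes committed so far (so $|\cV|=k+1=O(1)$), set $T=n$ and $h\equiv 0$, and for each box $i$ and threshold $\theta$ introduce the action $b_i^{\theta}$ = ``open box $i$; if fewer than $k$ boxes are currently committed and $X_i\ge\theta$, commit box $i$ (collecting $X_i$)'', with
\[
f(I_t,b_i^{\theta})=\begin{cases}I_t+1,& I_t<k,\ X_i\ge\theta,\\ I_t,&\text{otherwise,}\end{cases}
\qquad
g(I_t,b_i^{\theta})=\begin{cases}X_i-c_i,& I_t<k,\ X_i\ge\theta,\\ -c_i,& I_t<k,\ X_i<\theta,\\ 0,& I_t=k.\end{cases}
\]
It suffices to take $\theta$ in the finite support of $\pi_i$, so $\cB_i=\{b_i^{\theta}\}_\theta$ has polynomial size; exactly as for committed \probek, at most one action of each $\cB_i$ may be chosen along any realization path, which is handled by the same extension of the framework. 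Assumption \ref{cond:main}(1),(2) then hold immediately ($\cV$ is of constant size, $I_t$ is non-decreasing) and $h\equiv 0\ge 0$.

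The single delicate point is Assumption \ref{cond:main}(3): when $I_t=k$ we have $g\equiv 0$, but when $I_t<k$ we get $\bE[g(I_t,b_i^{\theta})]=\bE[X_i\mathbf{1}\{X_i\ge\theta\}]-c_i$, which can be negative. My fix is to restrict, for every box $i$, to the \emph{good} thresholds $\theta$ with $\bE[X_i\mathbf{1}\{X_i\ge\theta\}]\ge c_i$ (together with the null action ``skip box $i$''); on this restricted action set $\bE[g(I_t,\cdot)]\ge 0$ everywhere, so Assumption \ref{cond:main} holds in full. I then have to show the restriction is without loss, i.e. that some optimal committed policy uses only good thresholds — this is the step I expect to be the main obstacle. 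I would prove it from (i) the monotonicity $\DP_t(I,A)\ge \DP_t(I+1,A)$ (one more free commit-slot never hurts — run the optimal policy for state $(I+1,A)$ from state $(I,A)$, using the fact that all per-round profits are now nonnegative), together with (ii) an exchange argument: if an optimal policy opens box $i$ with a bad $\theta$ at a node with current count $I<k$, replacing that action by ``skip box $i$'' and then re-optimizing the remaining decisions changes the expected profit from that node by
\[
\Pr[X_i<\theta]\big(\DP(I,A')-\DP_<(I,A')\big)+\Pr[X_i\ge\theta]\big(\DP(I,A')-\DP_\ge(I+1,A')\big)+\big(c_i-\bE[X_i\mathbf{1}\{X_i\ge\theta\}]\big),
\]
where $A'$ is the remaining action set and $\DP_<,\DP_\ge$ are the two continuation values of the original policy; each summand is nonnegative — the last strictly, since $\theta$ is bad — by the optimality of $\DP(I,\cdot)$ and the monotonicity, contradicting optimality. (The intuition: since there is no cap on the number of boxes opened, opening a box whose commitment loses money in expectation is never worthwhile; for the same reason, folding ``do nothing once $k$ boxes are committed'' into $b_i^{\theta}$ is harmless.)

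With the restricted (good) action set, Theorem \ref{thm:main} then produces, in time $n^{2^{O(\ep^{-3})}}$, an adaptive committed policy of expected profit at least $\opt-O(\ep)\cdot\mopt$, where $\opt=\DP_1(0,\cA)$ is the optimum of the committed \pandora\ problem. Finally $\mopt=\mmopt=\DP_1(0,\cA)=\opt$, once more by the monotonicity $\DP_1(I,\cA)\ge\DP_1(I+1,\cA)$, so the policy has expected profit at least $(1-O(\ep))\opt$; rescaling $\ep$ yields the stated $(1-\ep)\opt$ bound. Any discretization of the distributions $\pi_i$ required to run the block-adaptive machinery and the signature enumeration is performed exactly as for the committed \probek\ problem and costs only another $O(\ep)$ factor.
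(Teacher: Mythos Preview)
Your approach is essentially the paper's: it too models the value as the commit count $I_t\in\{0,\ldots,k\}$, introduces the threshold actions $b_i^{\theta}$, observes that actions with $\bE[g(I_t,b_i^{\theta})]<0$ need never be taken, and then applies Theorem~\ref{thm:main} (handling the ``at most one action from each $\cB_i$'' constraint exactly as in Section~\ref{sec:probek-commit}). The paper simply \emph{asserts} the restriction to good thresholds is without loss, whereas you supply the monotonicity-plus-exchange justification; that extra care is fine and the exchange calculation is correct.

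Two minor points. First, your monotonicity argument as phrased risks circularity: you invoke ``all per-round profits are now nonnegative'' to prove $\DP_t(I,A)\ge\DP_t(I+1,A)$, but nonnegativity is what the restriction is meant to guarantee, and the restriction is what monotonicity is meant to justify. The fix is that monotonicity holds already for the \emph{unrestricted} problem: couple the randomness, mimic the optimal policy for $(I+1,A)$ from state $(I,A)$, and note that your count stays one below throughout, so every commit the mimicked policy makes you make too, and any extra commit you make only adds a nonnegative realized $X_i$. Second, no discretization of the $\pi_i$ is needed here (unlike \probemax): the value space $\cV=\{0,\ldots,k\}$ is already constant-sized, and the random reward $g$ enters the signature only through its expectation $\cG_a(I)$, a single scalar per action and value. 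Your closing sentence about discretization is therefore unnecessary, though harmless.
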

Similar to the committed \probek~problem, let $b_{i}^{\theta}$ represent the action that we open the box $i$ with threshold $\theta$. 
Let $I_t$ be the number of boxes that have been chosen at the time period $t$. Using our framework, we have following system dynamics for the committed \pandora~problem: 
\begin{equation}
I_{t+1}=f(I_t,b_i^{\theta})=
\left\{
\begin{array}{cl}
I_t+1   & \text{if } X_i\ge \theta, I_t<k,\\
I_t & \text{otherwise; } 
\end{array}
\right.
g(I_t,b_i^{\theta})=
\left\{
\begin{array}{cl}
X_i-c_i   & \text{if } X_i\ge \theta, I_t<k,\\
-c_i & \text{otherwise;}
\end{array}
\right.
\end{equation}
for $t=1,2,\cdots,T$, and $h(I_{T+1})=0$. Notice that we never take an action $b_i^{\theta}$ for a value $I_t<k$ if $\bE[g(I_t,b_i^{\theta})]=\Pr[X_t\ge \theta]\cdot\bE[X_i\,|\,X_i\ge \theta]-c_i<0$. Then Assumption \ref{cond:main} is immediately satisfied. See Section \ref{sec:pandora} for more details. 

\eat{
\subsubsection*{4. Stochastic Knapsack \cite{dean2005adaptivity,bhalgat2011,li2013}} 
Stochastic Knapsack (SK)  was introduced by Dean \etal~\cite{dean2005adaptivity}. Suppose we have a set of items and each item is associated with a (deterministic) profit and a random size taking value with known distribution; we are given a budget and need to insert each item adaptively such that the expected profit obtained is maximized before overflow.
\eat{
We set $I_0=0, T=n$ and 
\begin{equation}
f(d_t\,|\,I_t,a_t)=\min\{I_t+d_t,2B\},\ g(d_t\,|\,I_t,a_t)=p_{t}\cdot\delta_{I_t+d_t\le B} \text{ and } h(I_{T+1})=0.
\end{equation}
}
}
\subsubsection{Stochastic Target Problem}
{\.I}lhan \etal~\cite{ilhan2011adaptive} introduced 
the following stochastic target problem. \footnote{\cite{ilhan2011adaptive} called the problem the 
	adaptive stochastic knapsack instead.
	However, their problem is quite different from 
	the stochastic knapsack problem studied in the theoretical 
	computer science literature. So we use a different name.} 
In this problem,
% we are given $n$ items and each item has a random reward $0 \le X_i \le 1$.
we are given a predetermined target $\bT$ and  a set of $n$ items.
Each item $i\in [n]$ has a value $X_i$ which is an independent random variable with a known (discrete) distribution $\pi_i$. 
Once we decide to insert an item $i$ into a knapsack, we observe a reward realization $X_i$ which follows the distribution $\pi_i$. We can insert at most $m$ items into the knapsack and our goal is to design an adaptive policy such that 
%the probability of the total rewards of all item inserted larger or equal to $1$ 
$\Pr[\sum_{i\in P}X_i\ge \bT]$ is maximized, where $P\subseteq [n]$ is the set of inserted items. 
\eat{
{\.I}lhan \etal~\cite{ilhan2011adaptive} introduced another variant of Stochastic Knapsack problem, we called Stochastic Knapsack with Random Profits (\skrp), \footnote{In~\cite{ilhan2011adaptive}, they call it Adaptive Stochastic Knapsack instead.} which has the following setting. We are given $n$ items and each item has a random reward $X_i$. Once we decide to insert an item $i$ into a knapsack, we observe a reward realization which follows a given distribution $\pi_i$ (known in advance). Consider the simple case when we can insert at most $m$ items into the knapsack. 
%\footnote{Our algorithm can be straightforwardly extended to a more general case when all feasible sets of items we can insert form a partition or laminar matroid.}. 
Our goal is to design an adaptive policy such that the probability of the total rewards of all item inserted larger or equal to a predetermined target is maximized.
}
For the stochastic target problem, {\.I}lhan \etal~\cite{ilhan2011adaptive} provided some heuristic based on dynamic programming for the special case where the random profit of each item follows a known normal distribution. 
In this paper, we provide an additive PTAS for 
the stochastic target problem when the target is relaxed to $(1-\ep)\bT$. 
\begin{thm}\label{thm:main-skrp}
There exists an additive PTAS for stochastic target problem if we relax the target to $(1-\ep)\bT$. 
In other words, for any given constant $\ep>0$, there is a polynomial-time approximation algorithm that finds a policy such that the probability of the total rewards exceeding $(1-\ep)\bT$ is at least $\opt-\ep$,  where $\opt$ is the resulting probability of an optimal adaptive policy.
\end{thm}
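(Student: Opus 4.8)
The plan is to cast the stochastic target problem as an instance of the abstract dynamic program $(\cV,\cA,f,g,h,T)$ of Theorem~\ref{thm:main} and then invoke that theorem. We take $\cA=[n]$ augmented with $m$ dummy items (acting on a dummy item does nothing), horizon $T=m$, running profit $g\equiv 0$, we let the value $I_t$ record the accumulated reward of the items inserted so far, and we set the transition to $f(I_t,i)=I_t+X_i$ and the terminal profit to $h(I_{T+1})=\mathbf 1[I_{T+1}\ge\bT]$, so that the expected profit of a policy equals the probability that it hits the target. Assumptions~\ref{cond:main}(2) and (3) hold trivially ($f$ is nondecreasing, $g\equiv0$, $h\ge0$), and since the expected profit is a probability we have $\mopt=\mmopt\le1$, so the additive loss $O(\ep)\cdot\mopt$ of Theorem~\ref{thm:main} is $O(\ep)$. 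The only obstacle is Assumption~\ref{cond:main}(1): the set of attainable accumulated rewards is not of constant size, so the bulk of the work is to discretize the value space down to $O_\ep(1)$ states while losing only $O(\ep)$ in the success probability; it is exactly this discretization that forces us to relax the target to $(1-\ep)\bT$.

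\textbf{Discretizing the value space.}
Without loss of generality an optimal policy stops the moment its accumulated reward reaches the target, so we may cap the value and use an absorbing ``success'' state. Fix a granularity $\delta=\ep^{5}\bT$, and call a realized value \emph{big} if $X_i\ge\ep^{3}\bT$ and \emph{small} otherwise. In the DP we replace $X_i$ by a discretized value $\widehat X_i\in\{0,\delta,2\delta,\dots\}$: if $X_i$ is big we round it \emph{down} to the nearest multiple of $\delta$; if $X_i$ is small we round it to one of the two nearest multiples of $\delta$ by independent \emph{randomized rounding}, so that $\bE[\widehat X_i\mid X_i]=X_i$. The resulting value space has $O(\ep^{-5})$ states, so Assumption~\ref{cond:main}(1) now holds while the remaining assumptions are preserved ($f$ still only adds nonnegative amounts); the success threshold is declared at $\bT'=(1-\ep/2)\bT$ (rounded to the grid). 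One then compares the discretized process with the true one in both directions: running the true optimal policy while simulating the rounding yields a policy for the discretized DP, and any policy for the discretized DP can be executed on the true instance (tossing the rounding coins along the way, and derandomizing at the end if a deterministic policy is desired), giving a genuine adaptive policy for the stochastic target problem.

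\textbf{Bounding the discretization error.}
Both directions reduce to bounding $\big|\sum_i\widehat X_i-\sum_i X_i\big|$, where the sums range over the adaptively chosen (hence random) set of inserted items. The big items produce a \emph{one-sided} loss: each loses at most $\delta$, and at most $O(\ep^{-3})$ big items can be inserted before the accumulated value reaches the target, so their total loss is $O(\ep^{-3}\delta)=O(\ep^{2}\bT)$. The small items produce a \emph{mean-zero} error: listing the inserted items in order, the sequence of small-item rounding errors is a martingale difference sequence with increments bounded by $\delta$ and predictable quadratic variation at most $\delta\sum_i X_i=O(\delta\bT)$, using that the accumulated value at the stopping time is $O(\bT)$. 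Freedman's inequality then gives $\big|\sum_i\widehat X_i-\sum_i X_i\big|\le\ep\bT/2$ except with probability $2\exp(-\Omega(\ep^{-3}))\le\ep$ (the mild circularity that the quadratic variation involves the error itself is removed by a standard stopping-time/dyadic-scale argument). On this high-probability event the two sums differ by at most $\ep\bT$: if the true sum reaches $\bT$ then the discretized sum reaches $\bT'$, whence $\opt_{\mathrm{DP}}\ge\opt-\ep$; conversely, if the discretized sum reaches $\bT'$ then the true sum is at least $\bT'-\ep\bT/2\ge(1-\ep)\bT$. Applying Theorem~\ref{thm:main} to the discretized DP yields, in time $n^{2^{O(\ep^{-3})}}$, a policy with expected profit at least $\opt_{\mathrm{DP}}-O(\ep)\mopt\ge\opt-O(\ep)$, and executed on the true instance this is a policy whose probability of exceeding $(1-\ep)\bT$ is at least $\opt-O(\ep)$; rescaling $\ep$ finishes the proof.

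\textbf{Main obstacle.}
The delicate step is the small-item analysis: since the policy is adaptive the inserted set is not fixed, so ordinary Chernoff bounds do not apply and one needs a martingale (Freedman/Bernstein) concentration bound together with a careful choice of granularity $\delta=\ep^{5}\bT$, chosen small enough that the failure probability $\exp(-\Omega(\ep^{-3}))$ is negligible and that all accumulated additive errors ($O(\ep^{2}\bT)$ from big items, $\ep\bT/2$ from small items) are comfortably absorbed by the $(1-\ep)\bT$ slack. Everything else is a direct instantiation of the framework of Theorem~\ref{thm:main}.
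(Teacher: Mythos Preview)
Your approach is correct and follows the same high-level strategy as the paper: cast the problem as an instance of the abstract DP, discretize the accumulated-reward state space to constant size, prove a two-way comparison between the true and discretized instances (losing $O(\ep)$ additively in success probability and $O(\ep)\bT$ in the target), and then invoke Theorem~\ref{thm:main} with $\mopt\le 1$. The genuine difference is in how small realizations are discretized and how the resulting error is controlled. The paper uses the deterministic threshold-rounding of \cite{li2013} (round a small $X_b$ to $0$ or to $\ep^4$ according to a fixed cutoff $d$ chosen so that the conditional mean is preserved) and then simply cites Lemma~4.2 of \cite{li2013} for the key property $\sum_{v:|W(v)-\wt W(v)|\ge 2\ep}\Phi(v)=O(\ep)$; this yields the two comparison lemmas (Lemmas~\ref{lemma:skrp1} and~\ref{lemma:skrp2}) with essentially no further work, and the canonical policy on the true instance is deterministic. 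You instead use unbiased randomized rounding and control the martingale of rounding errors via Freedman's inequality. This is a legitimate alternative, but it buys you two extra obligations the paper avoids: (i) the circularity you flag---the predictable quadratic variation is $\le\delta\sum_i X_i$, but bounding $\sum_i X_i$ on the trajectory of the \emph{discretized} policy already needs the concentration---must be broken with a stopping-time argument (stop also when $\sum_{i\text{ small}}X_i$ first exceeds $2\bT$ and show this never triggers on the good event), and (ii) your canonical policy on the true instance is randomized, so a derandomization step is needed if a deterministic policy is desired. Both are routine, so the proposal stands; the paper's route is just shorter because the hard concentration step is outsourced to \cite{li2013}.
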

 Let the value $I_t$ be the total profits of the items in the knapsack at time period $t$. Using our framework, we have following system dynamics for the stochastic target problem:
\begin{equation} 
I_{t+1}=f(I_t,i)=I_t+X_i, \quad g(I_t,i)=0, \text{ and } 
h(I_{T+1})=
\left\{
\begin{array}{cc}
1 & \text{if }I_{T+1}\ge \bT,\\
0 & \text{otherwise;}
\end{array}
\right.
\end{equation}
for $t=1,2,\cdots,T$.  Then Assumption \ref{cond:main} (2,3) is immediately satisfied. But Assumption \ref{cond:main} (1) is not satisfied for that the value space $\cV$ is not of constant size. Hence, we need to discretize the value space and reduce its size to a constant. See Section \ref{sec:target} for more details. 

\subsubsection{Stochastic Blackjack Knapsack}
Levin \etal \cite{levin2014adaptivity} introduced the {\em stochastic blackjack knapsack}.
In this problem, we are given a capacity $\bC$ and a set of $n$ items, each item $i\in[n]$ has a size $X_i$ which is an independent random variable with a known distribution $\pi_i$ and a profit $p_i$.
We can adaptively insert the items into a knapsack, as long as the capacity constraint is not violated.
 Our goal is to design an adaptive policy such that the expected total profits of all items inserted is maximized. The key feature here different from classic stochastic knapsack is that we gain zero if overflow, \ie we will lose the profits of all items inserted already if the total size is larger than the capacity. This extra restriction might induce us to take more conservative policies.
Levin \etal~\cite{levin2014adaptivity} presented a non-adaptive policy with expected value that is at least $(\sqrt{2}-1)^2/2\approx 1/11.66$ times the expected value of the optimal adaptive policy.
 Chen \etal~\cite{CHEN2014625} assumed each size $X_i$ follows a known exponential distribution and gave an optimal policy for $n=2$ based on dynamic programming. In this paper, we provide the first bi-criteria PTAS for the problem.  %Here we assume the profit of each item has an uniform dependency over its size. Mathematically, we have each $\lam_i=\lam$. 
%Let $\pi=\{\pi_i| i \in [n]\}$ and we use $(\pi, \bC)$ to denote the instance where each item $X_i$ follows the distribution $\pi_i$ and the capacity is $\bC$. 
\begin{thm}\label{thm:main-sk-zro}
For any fixed constant $\ep>0$, there is a polynomial-time approximation algorithm for stochastic blackjack knapsack that finds a policy with the expected profit at least $(1-\ep)\opt$, when the capacity is relaxed to $(1+\ep)\bC$, where $\opt$ is the expected profit of the optimal adaptive policy. 
\end{thm}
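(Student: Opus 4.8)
The plan is to cast the stochastic blackjack knapsack problem into the abstract stochastic dynamic program $(\cV,\cA,f,g,h,T)$ and invoke Theorem~\ref{thm:main}. Take $\cA$ to be the set of items and $T=n$. The delicate point is the choice of value space: since overflowing must wipe out \emph{all} profit accumulated so far, we cannot collect $p_i$ incrementally through $g$ --- doing so would model ordinary stochastic knapsack, a strictly easier objective. Instead I would let the value $I_t$ record a pair consisting of a coarsened total size inserted so far and a coarsened total profit accumulated so far; set $g\equiv 0$; let $f(I_t,i)$ add the (rounded) realized size $X_i$ and the profit $p_i$, collapsing to an absorbing ``dead'' value once the size coordinate exceeds $(1+\ep)\bC$; and let the terminal function $h$ return the accumulated-profit coordinate if the size coordinate is at most $(1+\ep)\bC$ and $0$ otherwise. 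Then $\bE[h(I_{T+1})]$ is exactly the blackjack objective: the profit is paid out only on realizations that never overflow.

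The substance of the proof is then reducing this to an instance satisfying Assumption~\ref{cond:main}, i.e.\ with $|\cV|$ a constant and with the optimum preserved up to a factor $1-O(\ep)$; the relaxation of the capacity to $(1+\ep)\bC$ is exactly what makes the size discretization affordable. First I would perform the standard preprocessing: guess $\opt$ up to a factor $1\pm\ep$ (polynomially many guesses), rescale, and discard items whose profit is negligible compared to $\ep\opt/n$, which costs $O(\ep)\opt$. Then I would discretize each size distribution $\pi_i$ onto a geometric grid of powers of $1+\ep$ above $\ep\bC$, rounding up, so that the accumulated size-rounding error on any realization is at most $\ep\bC$ and is absorbed by the extra capacity; ``small'' items (size $\ll\ep\bC$) are handled in aggregate as in \cite{bhalgat2011,li2013}. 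This leaves $O_{\ep}(1)$ relevant size levels. The profit coordinate must likewise be cut down to $O_{\ep}(1)$ levels --- a geometric grid after the guess of $\opt$ --- while at the same time keeping $\mopt=\max_{I\in\cV}\DP_1(I,\cA)$ within a constant factor of $\opt$, since the error guaranteed by Theorem~\ref{thm:main} is additive in $\mopt$. Once this is in place, Assumption~\ref{cond:main} is immediate: $|\cV|=O_{\ep}(1)$; $f$ never decreases either coordinate, so $f(I_t,a_t)\ge I_t$ holds under any linear extension of the product order (after relabelling $\cV=(0,\dots,|\cV|-1)$); $h\ge 0$; and $\bE[g]=0\ge 0$.

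With the reduced instance in hand, Theorem~\ref{thm:main} yields in time $n^{2^{O(\ep^{-3})}}$ an adaptive policy whose expected profit is at least $\opt_{\mathrm{disc}}-O(\ep)\mopt$, which is $(1-O(\ep))\opt$ using $\opt_{\mathrm{disc}}\ge(1-O(\ep))\opt$ from the preprocessing and $\mopt=O(\opt)$ from the truncation; executing this policy on the original instance with capacity $(1+\ep)\bC$ realizes the same value up to the size rounding already charged, and rescaling $\ep$ gives Theorem~\ref{thm:main-sk-zro}. I expect the hard part --- and the reason this problem is harder than the other applications --- to be precisely the profit coordinate: making $|\cV|$ constant \emph{and} keeping $\mopt=O(\opt)$ is subtle because a single item with large profit but small probability of fitting can push the accumulated profit far above $\opt$ on low-probability realizations, yet such realizations may carry a constant fraction of the optimum; designing the truncation/bucketing of the profit coordinate so that it neither blows up the state space nor distorts the optimum, while the policy stays adaptive, is the crux. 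By contrast, the size discretization is routine once the $(1+\ep)\bC$ slack is available, which is also why a genuine (non-bi-criteria) PTAS does not come out of this approach.
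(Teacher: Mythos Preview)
Your framework is exactly the paper's: state $I_t=(\text{size so far},\text{profit so far})$, $g\equiv 0$, and $h$ pays out the profit coordinate only on non-overflowing realizations. You also correctly locate the crux in the profit coordinate. But your proposed fix, a geometric grid on accumulated profit, does not give $|\cV|=O_\ep(1)$: after discarding items with $p_i<\ep\opt/n$ and capping the path profit at $O(\opt/\ep^2)$, a ratio-$(1+\ep)$ grid still has $\Theta(\ep^{-1}\log n)$ levels; an arithmetic grid of constant size instead accumulates $\Theta(n)$ rounding errors along a path. So as written the reduction to Assumption~\ref{cond:main} is incomplete precisely at the step you flag as ``the crux.''

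The paper resolves this with a trick you are missing: it \emph{randomizes the profit coordinate down to a single bit}. First it rescales ``huge'' items so every $p_i\le\theta_2=\opt/\ep^2$ (Lemma~\ref{lemma:knap_profit_bound}) and truncates so the pre-last-item path profit is at most $\theta_1=\opt/\ep$ (Lemma~\ref{lemma:knap_profit_trun}). Then, with $\theta_3=\opt/\ep^3$, each inserted item $i$ flips an independent coin with success probability $p_i/\theta_3$, the second coordinate of the state is the \emph{max} (i.e.\ OR) of these coins, and $h$ returns $\theta_3$ times that bit on non-overflow. Since along any path $\sum p_i\le\theta_1+\theta_2$, one has $\theta_3\bigl(1-\prod_i(1-p_i/\theta_3)\bigr)=(1\pm O(\ep))\sum_i p_i$ (Lemma~\ref{lemma:knap_skb_disprofit}), so the Bernoulli surrogate tracks the blackjack profit to within $1-O(\ep)$ while the profit coordinate has exactly two states. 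Two consequences: (i) $\mopt=\theta_3=\opt/\ep^3$, \emph{not} $O(\opt)$ as you aimed for; the paper simply invokes Theorem~\ref{thm:main} with parameter $\Theta(\ep^4)$ so the additive error $O(\ep^4)\mopt=O(\ep)\opt$; (ii) because a path can carry profit $\Theta(\opt/\ep^2)$, the size discretization must be run with parameter $\ep^3$ (so that the mass of size-discrepant leaves is $O(\ep^3)$, cf.\ Lemma~\ref{lemma:zro-canonical}), not $\ep$ as you sketch, though the capacity slack remains $O(\ep)$.
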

Denote $I_t=(I_{t,1},I_{t,2})$ and let $I_{t,1},I_{t,2}$ be the total sizes and total profits of the items in the knapsack at the time period $t$ respectively. When we insert an item $i$ into the knapsack and observe its size realization, say $s_i$, we define the system dynamics function to be
\begin{equation} 
I_{t+1}=f(I_t,i)=(I_{t,1}+s_i,I_{t,2}+p_i), \quad  
h(I_{T+1})=
\left\{
\begin{array}{cc}
I_{T+1,2} & \text{if }I_{T+1,1}\le \bC,\\
0 & \text{otherwise;}
\end{array}
\right.
\end{equation}
 and $g(I_t,i)=0$ for $t=1,2,\cdots,T$.
Then Assumption \ref{cond:main} (2,3) is immediately satisfied. But Assumption \ref{cond:main} (1) is not satisfied for that the value space $\cV$ is not of constant size. Hence, we need to discretize the value space and reduce its size to a constant. See Section \ref{sec:aon} for more details.

For the case without relaxing the capacity, we can improve the result of $11.66$ in \cite{levin2014adaptivity}.
\begin{thm}\label{thm:knap_sbk_without}
For any $\ep\ge 0$, there is a polynomial time algorithm that finds a $(\frac{1}{8}-\ep)$-approximate adaptive policy for $\sbk$.
\end{thm}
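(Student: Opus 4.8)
The plan is to upgrade the ``one large item versus many small items'' argument of Levin~\etal~\cite{levin2014adaptivity} by replacing their non-adaptive small-item policy with the resource-augmented scheme of Theorem~\ref{thm:main-sk-zro}; note we do not need the constant-size discretization that Theorem~\ref{thm:main-sk-zro} itself uses. First fix a size threshold $\beta=\tfrac12$ and call an item $i$ \emph{large} if $\bE[\min(X_i,\bC)]\ge\beta\bC$ and \emph{small} otherwise. The algorithm outputs the more profitable of two policies: (a) the single-item policy $\pi_1$ that inserts $i^{\star}:=\arg\max_i p_i\Pr[X_i\le\bC]$ and stops, with value $V_1:=\max_i p_i\Pr[X_i\le\bC]$; and (b) the policy $\pi_2$ obtained by running the algorithm of Theorem~\ref{thm:main-sk-zro} on the sub-instance $\mathcal{J}_{\mathrm{sm}}$ that keeps only the small items and has capacity $(1-\beta)\bC=\bC/2$. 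Because Theorem~\ref{thm:main-sk-zro} violates the stated capacity by at most a $(1+\ep)$ factor, $\pi_2$ never loads more than $(1+\ep)\bC/2\le\bC$ for $\ep\le1$, so it is a legitimate policy for capacity $\bC$, with expected profit at least $(1-\ep)\,\opt_{\mathrm{sm}}$, where $\opt_{\mathrm{sm}}$ is the optimum of $\mathcal{J}_{\mathrm{sm}}$.

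\noindent\textbf{Charging the optimum.} Let $\sigma^{\star}$ be an optimal adaptive policy for $\mathcal{J}$ with value $\opt$, and split the profit it collects on a successful realization path into the profit of the large items it inserts and that of the small items it inserts. For the large part, since the decision to insert an item is made before its size is revealed, Wald's identity gives $\sum_i \bE[\min(X_i,\bC)]\cdot\Pr[\sigma^\star\text{ inserts }i]=\bE\bigl[\sum_{i\text{ inserted}}\min(X_i,\bC)\bigr]\le 2\bC$ (on any path the running sum of truncated sizes exceeds $\bC$ only after the last insertion), hence $\sum_{i\text{ large}}\Pr[\sigma^\star\text{ inserts }i]\le 2/\beta$; moreover a successful path forces every inserted item to have $X_i\le\bC$ independently of its insertion event, so the expected large-item profit is at most $\sum_{i\text{ large}}p_i\Pr[\sigma^\star\text{ inserts }i]\Pr[X_i\le\bC]\le V_1\cdot(2/\beta)$. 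For the small part, replacing each of $\sigma^\star$'s large-item insertions by an internal resample (so that $\sigma^\star$'s decision tree is followed unchanged but no large item ever enters the knapsack) yields a policy for the small items that recovers $\sigma^\star$'s small-item profit on $\sigma^\star$'s successful paths; since those small items have total realized size $\le\bC$, the expected small-item profit is at most the optimum of the small-item instance at capacity $\bC$, which is in turn at most $\tfrac{2}{1-\beta}\,\opt_{\mathrm{sm}}$ by a capacity-halving argument (cut any successful path at the point where the loaded size first exceeds $(1-\beta)\bC$). Adding the two bounds, $\opt\le\frac2\beta V_1+\frac2{1-\beta}\opt_{\mathrm{sm}}$, so the output $\max\{V_1,(1-\ep)\opt_{\mathrm{sm}}\}$ is at least $\tfrac{1-\ep}{\,2/\beta+2/(1-\beta)\,}\opt$; with $\beta=\tfrac12$ this equals $\tfrac{1-\ep}{8}\opt\ge(\tfrac18-\ep)\opt$.

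\noindent\textbf{Main obstacle.} The technical weight is in the two ``halving'' estimates, and in particular in the claim that the optimum of the small-item instance at capacity $\bC$ is only an $O(1)$ factor above $\opt_{\mathrm{sm}}$. Cutting a path when the loaded size crosses $(1-\beta)\bC$ leaves a single \emph{boundary} item that can by itself exceed $(1-\beta)\bC$, so the prefix-plus-boundary item need not be a feasible $\bC/2$-policy, and an online policy cannot avoid committing to that item; making this rigorous requires either truncating the item sizes and reusing the Wald bound, or charging the boundary item separately to $V_1$ (which then has to be combined with a dedicated treatment of instances built from many tiny items, whose count Wald's inequality does not bound). Getting all of these constants to multiply out to exactly $8$, rather than to some larger absolute constant, is where essentially all the effort goes; the reduction to Theorem~\ref{thm:main-sk-zro} and the accounting above are routine. (Alternatively, one may bypass Theorem~\ref{thm:main-sk-zro} entirely and reach $\tfrac18$ through a tighter but more case-heavy analysis of Levin~\etal's purely non-adaptive policy.)
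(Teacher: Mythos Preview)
Your route diverges sharply from the paper's, and the gap you yourself flag in the small-item step is real, not cosmetic.

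The paper never classifies items by size and never invokes Theorem~\ref{thm:main-sk-zro}. It proves a one-line reduction to ordinary stochastic knapsack (Lemma~\ref{lemma:knap_sk_sbk}): for any $\sk$ policy $\sigma$, set $\theta=\tfrac12\bP_{\sk}(\sigma)$ and let $\sigma'$ run $\sigma$ but halt the moment the collected (deterministic) profit reaches $\theta$. Since $\sigma$ w.l.o.g.\ never inserts after an overflow, every halting node is reached without overflow, so on those branches $\sigma'$ is a legal $\sbk$ run that has already banked at least $\theta$. A short averaging calculation (using $\bP(v)\le 2\theta$ for every subtree and $P(v)\ge\theta$ at truncation nodes) then gives $\bP_{\sbk}(\sigma')\ge\theta/2=\tfrac14\bP_{\sk}(\sigma)$. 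Composed with the known $(\tfrac12-\ep)$-approximation for $\sk$ and the trivial $\optskp\ge\optsbk$, this yields $(\tfrac18-\ep)$ immediately. No discretisation, no resource augmentation, no case analysis.

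As for your plan, the key inequality $\opt_{\mathrm{small},\bC}\le\frac{2}{1-\beta}\,\opt_{\mathrm{sm}}$ is simply false. Take two items with $p_i=1$ and $X_i=\bC/2+\delta$ with probability $1-\gamma$, $X_i=0$ otherwise; choosing $\gamma$ just above $2\delta/\bC$ makes both items ``small'' in your sense. At capacity $\bC$ the adaptive optimum is $1+\gamma$ (insert item~1; if $X_1=0$ insert item~2, else stop), while at capacity $\bC/2$ a single insertion already overflows with probability $1-\gamma$ and the optimum is only $\gamma$. The ratio $(1+\gamma)/\gamma$ is unbounded as $\gamma\to 0$. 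Your fallback of ``charging the boundary item to $V_1$'' happens to rescue this specific instance, but it does not convert into a feasible $\bC/2$-policy in general: the prefix before the boundary cannot be executed as an $\sbk$ policy at capacity $\bC/2$ without committing to the boundary item (its size is revealed only after insertion), and once committed the $\sbk$ semantics wipe out the prefix profit as well. So the proposal as written does not establish the factor $\tfrac18$, and it is not clear it establishes any fixed constant without a substantially different accounting.
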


\eat{
We set $I_0=0, T=n$ and 
\begin{equation}
f(d_t\,|\,I_t,a_t)=\min\{I_t+d_t,2B\},\ g(\cdot\,|\,I_t,a_t)=0 \text{ and } h(I_{T+1})=\lam\cdot\delta_{I_{T+1}\le 1}.
\end{equation}
}

\subsection{Related Work}
Stochastic dynamic program has been widely studied in
computer science and operation research (see, for example, \cite{bertsekas1995dynamic,powell2011approximate}) and has many applications in different fields. It is a natural model for decision making under uncertainty.  In 1950s, Richard Bellman \cite{bellman1957} introduced the ``principle of optimality'' %which is called Bellman's equation. 
which leads to dynamic programming algorithms for solving sequential stochastic optimization problems. 
However, Bellman's principle does not immediate lead to efficient algorithms for many problems due to  
%Stochastic dynamic program is a natural model for decision making in a stochastic dynamic setting  and attract a lot of attention. It leads to a wide variety of applications. And, it is well known that it 
``curse of dimensionality'' and the large state space. 
%Powell \cite{powell2011approximate} named it: the large dimensionality of the state space, of the item space and the difficulty or impossibility of computing expectations. 
%% and does not scale computationally with state and item space size.

%There is only a few examples that yield approximation schemes for stochastic optimization problems. The ones in  Halman \cite{halman2009} deals with stochastic discrete DPs in which the set of items is fixed for a fixed state. In other words, for a state $I_t\in \cV$ at stage $t$, the set of feasible items $\cA_t(I_t)$ is given.  They designed FPTAS for their framework and used it to solve  the stochastic ordered adaptive knapsack problem. Our framework can be use to solve the free ordered stochastic knapsack problem. 
There are some constructive frameworks that provide approximation schemes for certain classes of stochastic dynamic programs.
%There are some other frameworks for the stochastic dynamic programming.
Shmoys \etal \cite{shmoys2006approximation} dealt with stochastic linear programs. Halman \etal \cite{halman2009,halman2014fully,halman2015computationally} studies stochastic discrete DPs with scalar state and action spaces %More precisely, for a state $s_t\in \cS$ at time $t$, the set of feasible items $\cA_t(s_t)$ is given.  
and designed an FPTAS for their framework. As one of the applications, they used it to solve  the stochastic ordered adaptive knapsack problem. 
As a comparison, in our model, the state space $\cS=\cV\times2^{\cA}$ is exponentially large and hence cannot be 
solved by previous framework. 

%The \probek~problem and the committed \probek~problem are a special case of the stochastic probing framework which was formulated by Gupta \etal \cite{gupta2016algorithms}.

Stochastic knapsack problem $\sk$ is one of the most well-studied stochastic combinatorial optimization problem. 
We are given a knapsack of capacity $\bC$.
% and a set of $n$ random variables. 
Each item $i\in[n]$ has a random value $X_i$ with a known distribution $\pi_i$ and a profit $p_i$. We  can adaptively insert the items to the knapsack, as long as the capacity constraint is not violated.
The goal is to maximize the expected total profit of all items inserted. 
%If the This problem also can be generalized by our framework.
%\jian{add details}
For $\sk$, Dean \etal~\cite{dean2005adaptivity} first provide 
a constant factor approximation algorithm. Later, Bhalgat \etal~\cite{bhalgat2011} improved that ratio to $\frac{3}{8}-\ep$ and gave an algorithm with ratio of $(1-\ep)$ by using $\ep$ extra budget for any given constant $\ep\ge0$.
In that paper, the authors first introduced the notion of block adaptive policies, which is crucial for this paper. 
The best known single-criterion approximation factor is 2 
\cite{gat2011,li2013, Willma2017}. 
%\jian{cite the 2-approximation by Balghat and Will Ma}.
% which can be obtained by our model. 
%Gupta \etal~\cite{gupta2011approximation} introduced a variant of Stochastic Knapsack where (1) each item can have a random profit which can be correlated with its size random variable and (2) it is allowed to cancel an item before fulling observing its size realization. For this new setting, Gupta \etal~\cite{gupta2011approximation} gave a randomized algorithm achieving a ratio of $\frac{1}{8}$ and Li \etal~\cite{li2013} improved that ratio to $(1-\ep)$ by using $\ep$ extra budget for any given $\ep>0$ which can be obtained by our model. Also \cite{li2013} gave an algorithm with ratio $(\frac{1}{2}-\ep)$ for any given $\ep>0$ when only cancellation is allowed.

The \probemax~problem and \probek~problem are special cases of the general stochastic probing framework formulated by Gupta \etal \cite{gupta2016algorithms}. They showed that the adaptivity gap of any stochastic probing problem where the outer constraint is prefix-closed and the inner constraint is an intersection of $p$ matroids is at most $O(p^3\log(np))$, where $n$ is the number of items. The Bernoulli version of stochastic probing was introduced in \cite{gupta2013}, where each item $i\in U$ has a fixed value $w_i$ and is ``active'' with an independent probability $p_i$. Gupta \etal~\cite{gupta2013} presented a framework which yields a $\frac{1}{4(k^{in}+k^{out})}$-approximation algorithm for the case when  $\cI_{in}$ and  $\cI_{out}$ are respectively an intersection of $k^{in}$ and  $k^{out}$ matroids. This ratio was improved to $\frac{1}{(k^{in}+k^{out})}$ by Adamczyk \etal \cite{adamczyk2016}  using the iterative randomized rounding approach. 
%When probing an item incurs a cost, Singla \cite{sahil2018} called ``price of information''. 
Weitzman's \pandora~is a classical example in which the goal is to find out a single random variable to 
maximize the utility minus the probing cost.
%choose a single variable with a large value.
Singla \cite{sahil2018} generalized this model to other combinatorial optimization problems such as matching, set cover, facility location, and obtained approximation algorithms.
%\jian{discuss sahil's soda paper. pandora's box}
%>>>

\section{Policies and Decision Trees}\label{sec:pre}%<<<
An instance of stochastic dynamic program is given by $\cJ=(\cV,\cA,f,g,h,T)$. For each item $a\in \cA$ and values $I,J\in \cV$, we denote $\Phi_a(I,J):=\Pr[f(I,a)=J]$ and $\cG_a(I):=\bE[g(I,a)]$. 
The process of applying a feasible adaptive {\em policy} $\sigma$ can be represented as a decision tree $\cT_{\sigma}$. Each node $v$ on $\cT_{\sigma}$ is labeled by a unique item $a_v\in \cA$. 
Before selecting the item $a_v$, we denote the corresponding time index, the current value and the set of the remaining available items by $t_v, I_v$ and $\cA(v)$ respectively. 
Each node has several children,
each corresponding to a different value realization
(one possible $f(I_v, a_v)$).
Let $e=(v,u)$  be the $s$-th edge emanating from $s\in \cV$ where 
$s$ is the realized value.
%\jian{please check if it is correct.}
We call $u$ the $s$-child of $v$.
Thus $e$ has probability $\pi_e:=\pi_{v,s}=\Phi_{a_v}(I_v,s)$ and weight $w_e:=s$.
%We sometimes refer to the labels on $\cT_{\sigma}$ by their nodes. 

We use $\bP(\sigma)$ to denote the expected profit that the policy $\sigma$ can obtain. 
%For simplicity, we refer to the item $a_v$ on $\cT_{\sigma}$ by its node $v$ and use the shorthand notions
For each node $v$ on $\cT_{\sigma}$, we define $\cG_v:=\cG_{a_v}(I_v)$.
%and $\Phi_{v}(s):=\Phi_{a_v}(I_v,s)$ for any value $s\in \cV$. 
%Then, we have
%\begin{align*}
%\DP_{t_v}(I_v,\cA(v))&=\bE\Big[\DP_{t_v+1}(f(I_v,a_v),\cA(v)\setminus a_v)+g(I_v,a_v)\Big]\\
%&=\bE\Big[\DP_{t_v+1}(f(I_v,a_v),\cA(v)\setminus a_v)\Big]+\bE\Big[g(I_v,a_v)\Big]\\
%&=\sum_{s\in\cV}\pi_{v,s}\cdot \DP_{t_v+1}(s,\cA'(v))+\cG_{v},
%\end{align*}
%where $\cA'(v)=\cA(v)\setminus a_v.$ We use the shorthand notation $\bP(v)$ or $\bP(\cT_v)$ to replace $\DP_{t_v}(\cA(v),I_v)$ if the context is clear. % For simplicity, we write $\cG_i(I)=\bE[g(I,a_i,a_i)]$ and $\Phi_i(I,J)=\Pr[f(I,a_i,a_i)=J]$ for any item $a_i$ in $\cA$ and state $I,J\in \cV$.
In order to clearly illustrate the tree structure, we add a dummy node at the end of each root-to-leaf path and set $\cG_{v}=h(I_v)$ if $v$ is a dummy node.
Then, we recursively define the expected profit of the subtree $\cT_v$ rooted at $v$ to be
\begin{equation}\label{equ:recur}
\bP(v)=\cG_{v}+\sum_{e=(v,u)}\pi_e\cdot\bP(u),
\end{equation}
if $v$ is an internal node and $\bP(v)=\cG_{v}=h(I_v)$ if $v$ is a leaf (\ie the dummy node). The expected profit $\bP(\sigma)$ of the policy $\sigma$ is simply $\bP(\text{the root of $\cT_{\sigma}$})$. 
Then, according to Equation \eqref{equ:dp}, we have 
$$\bP(v)\le\DP_{t_v}(I_v,\cA(v))\le \DP_{1}(I_v,\cA)\le\mmopt=\mopt$$ 
for each node $v$. For a node $v$, we say the path from the root to it on $\cT_{\sigma}$ as the {\em realization path} of $v$, and denote it by $\cR(v)$. We denote the probability of reaching $v$ as $\Phi(v)=\Phi(\cR(v))=\prod_{e\in \cR(v)}\pi_e$. 
Then, we have
\begin{equation}\label{equ:policyprofit}
\bP(\sigma)=\sum_{v\in \cT_{\sigma}}\Phi(v)\cdot \cG_{v}.
\end{equation}
We use $\opt$ to denote the expected profit of the optimal adaptive policy. For each node $v$ on the tree $\cT_{\sigma}$, by Assumption \ref{cond:main}  (2) that $f(I_v,a_v)\ge I_v$, we define $\mu_v:=\Pr[f(I_v,a_v)>I_v]=1-\Phi_{a_v}(I_v,I_v)$. For a set of nodes $P$, we define $\mu(P):=\sum_{v\in P}\mu_v$.

\begin{lemma}\label{lemma:trun}
Given an policy $\sigma$, there is a  policy $\sigma'$ with profit at least $\opt-O(\ep)\cdot\mopt$ which satisfies that for any realization path $\cR$, $\mu(\cR)\le O(1/\ep)$, where $\mopt=\mmopt$.
\end{lemma}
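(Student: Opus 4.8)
The plan is to obtain $\sigma'$ by \emph{pruning} the decision tree $\cT_\sigma$ the first time the accumulated quantity $\mu$ along a root-to-node path becomes too large, and to charge the profit lost to the universal bound $\bP(v)\le\mopt$. Set $K:=(|\cV|-1)/\ep$; this is the only place Assumption~\ref{cond:main}(1) is used, as it guarantees $K=O(1/\ep)$. Call a node $v$ of $\cT_\sigma$ a \emph{boundary node} if $\mu(\cR(v))>K$ while $\mu(\cR(p))\le K$ for the parent $p$ of $v$. Since every $\mu_u\in[0,1]$ and $\mu(\cR(\cdot))$ is monotone nondecreasing along each root-to-leaf path, the set $B$ of boundary nodes is an antichain, and each root-to-leaf path of $\cT_\sigma$ contains at most one node of $B$. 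Define $\sigma'$ to follow $\sigma$ until a boundary node $v$ is about to be processed and then to \emph{stop}: it plays the null items $f(I,a)=I,\ g(I,a)=0$ of the footnote for the remaining rounds and collects the terminal profit $h(I_v)$. By construction every realization path $\cR$ of $\sigma'$ satisfies $\mu(\cR)\le K=O(1/\ep)$, which is the required structural property.

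Next I would bound $\bP(\sigma)-\bP(\sigma')$. The two policies differ only on the subtrees rooted at boundary nodes: for $v\in B$ the contribution of $\cT_v$ to $\bP(\sigma)$ is $\Phi(v)\,\bP(v)$, whereas in $\sigma'$ it becomes $\Phi(v)\,h(I_v)$. Since $h\ge 0$ by Assumption~\ref{cond:main}(3) and $\bP(v)\le\DP_{1}(I_v,\cA)\le\mopt$ as noted just before the lemma, the loss at $v$ is at most $\Phi(v)\,\bP(v)\le\Phi(v)\,\mopt$. Because $B$ is an antichain, summing over $v\in B$ gives
\[
\bP(\sigma)-\bP(\sigma')\ \le\ \mopt\sum_{v\in B}\Phi(v)\ =\ \mopt\cdot\Pr\big[\text{the realization path of }\sigma\text{ hits }B\big].
\]

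It remains to bound this probability by $\ep$. Since $\mu(\cR(\cdot))$ is monotone along each path, the realization path hits $B$ exactly when the full realization path $\cR$ of $\sigma$ has $\mu(\cR)>K$, so it suffices to show $\bE[\mu(\cR)]\le|\cV|-1$ and invoke Markov's inequality. By linearity, $\bE[\mu(\cR)]=\sum_{v\in\cT_\sigma}\Phi(v)\,\mu_v$; and since $\mu_v=\Pr[f(I_v,a_v)>I_v]=\sum_{e=(v,u):\,w_e>I_v}\pi_e$, the term $\Phi(v)\,\mu_v$ is precisely the probability that the realization path leaves $v$ along an edge on which the system value strictly increases. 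Summing over all $v$, $\bE[\mu(\cR)]$ equals the expected number of strict value increases along the realization path. By Assumption~\ref{cond:main}(2) the value is nondecreasing and always lies in $\cV=\{0,\dots,|\cV|-1\}$, so on every single path this number is at most $|\cV|-1$; hence $\bE[\mu(\cR)]\le|\cV|-1$ and $\Pr[\mu(\cR)>K]\le(|\cV|-1)/K=\ep$. Therefore $\bP(\sigma')\ge\bP(\sigma)-\ep\cdot\mopt$, and taking $\sigma$ to be an optimal policy gives $\bP(\sigma')\ge\opt-\ep\cdot\mopt=\opt-O(\ep)\cdot\mopt$.

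The only genuinely substantive step is the identity $\bE[\mu(\cR)]=\bE\big[\#\{\text{strict value increases along }\cR\}\big]\le|\cV|-1$, which is exactly what makes the crude Markov bound sufficient; the rest — that $B$ is an antichain, that the stopped policy is feasible and still has nondecreasing value, and the per-boundary-node charge to $\mopt$ via $h\ge 0$ — is routine bookkeeping resting only on Assumption~\ref{cond:main} and the monotonicity of $\mu(\cR(\cdot))$ along paths.
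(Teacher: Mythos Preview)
Your proof is correct and follows the same overall strategy as the paper: truncate the decision tree at the first node along each path where the accumulated $\mu$ exceeds a threshold $\Theta(1/\ep)$, charge the loss at each truncation point to $\mopt$ via $\bP(v)\le\mopt$, and bound the total truncation probability by Markov's inequality applied to $\bE[\mu(\cR)]=O(1)$.

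The one noteworthy difference is how you establish the key bound $\bE[\mu(\cR)]\le|\cV|-1$. The paper introduces the auxiliary quantities $y_v=\bE[f(I_v,a_v)]-I_v\ge\mu_v$ and builds a martingale $Y_t=I_t-\sum_{i<t}y_{v_i}$; from $\bE[Y_{T+1}]=0$ it deduces $\bE[\sum_{v\in\cR}y_v]=\bE[I_{T+1}]\le|\cV|$, hence $\bE[\mu(\cR)]\le|\cV|$. You instead observe directly that $\sum_v\Phi(v)\mu_v$ is the expected number of strict value increases along the realization path, which is at most $|\cV|-1$ since the value is nondecreasing in $\{0,\dots,|\cV|-1\}$. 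Your counting argument is more elementary and slightly sharper; the martingale machinery is not actually needed here, though it is a natural way to arrive at the same inequality.
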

\begin{proof}
%The proof is similar to that of Lemma 3.1 in \cite{bhalgat2011}. 
Consider a random realization path
$\cR=(v_1,v_2,\ldots,v_{T+1})$ generated by $\sigma$.
%\jian{grammar}
%For any adaptive policy $\sigma$, let $P$ denote the (random) set of item of the policy $\sigma$ takes. 
%\begin{fact}
%By Lemma 3.1 in \cite{dean2005adaptivity}, we have
% $\bE[\mu(P)]\le 2\cdot |\cV|$, where $\cV$ is the set of value.  
%\end{fact}
Recall in Assumption \ref{cond:main} (1), 
the value space is $\cV=\{0,1,\cdots,|\cV|-1\}$. 
For each node $v$ on the tree, we define
$ y_v:=\bE[f(I_v,a_v)]-I_v$, which is larger than 
$$I_v\cdot\Pr[f(I_a,a_v)=I_v]+(I_v+1)\cdot\Pr[f(I_v,a_v)>I_v]-I_v=\Pr[f(I_v,a_v)>I_v]=\mu_v.
$$
%\jian{why?? I have to stop here. Please polish the proof.}
We now define a sequence of random variables $\{Y_t\}_{t\in[T+1]}$:
$$
Y_t=I_t-\sum_{i=1}^{t-1}y_{v_i}.
$$
This sequence $\{Y_i\}$ is a martinale: conditioning
on current value $Y_t$, we have 
\begin{align*}
\bE[Y_{t+1}\ |\ Y_t]&=\bE\left[I_{t+1}-\sum_{i=1}^{t}y_{v_i}\ \Big|\ Y_t\right]\\
&=\bE\left[\left(I_{t}-\sum_{v=1}^{t-1}y_{v_i}\right)+I_{t+1}-I_t-y_{v_t}\ \Big|\ Y_t\right]\\
&= Y_t+\bE[I_{t+1}\ |\ Y_t]-I_t-y_{v_t}
=Y_t.
\end{align*}
The last equation is due to the definition of $y_{v_t}$.
%\jian{i actually don't understand this. is $v^\ast=v_t$?}
%As this holds for any $v^\ast$, we can therefore remove the conditioning on $v^{\ast}$, so $\bE[Y_{t+1}\ |\ Y_t]=Y_t$. 
By the martingale property, we have $\bE[Y_{T+1}]=\bE[Y_{t}]=Y_1=0$ for any $t\in[T]$.
Thus, we have 
$$|\cV|\ge\bE[I_{T+1}]=\bE\left[\sum_{i=1}^{T}y_{v_i}\right]=\bE\left[\sum_{v\in \cR}y_v\right]\ge \bE\left[\mu(\cR)\right].$$
Let $E$ be the set of realization paths $r$ on the tree for which $\mu(r)\ge 1/\ep$. Then, we have 
$\bE[\mu(\cR)]\ge \sum_{r\in E}\left[\Phi(r)\cdot\frac{1}{\ep}\right]$ which implies that $ \sum_{r\in E}\Phi(r) \le \ep\cdot \bE[\mu(\cR)]\le O(\ep)$, where $\Phi(r)$ is the probability of passing the path $r$. 
%\jian{why the first inequality holds? Markov? Check!!!}
For each path $r\in E$, let $v_{r}$ be the first node on the path such that $\mu(\cR(v_{r}))\ge 1/\ep$, where $\cR(v_{r})$ is the path from the root to the node $v_r$.
Let $F$ be the set of such nodes. For the policy $\sigma$, we have a truncation on the node $v_r$ when we reach the node $v_r$, \ie we do not select items (include $v_r$) any more in the new policy $\sigma'$. The total profit loss is at most 
$$\sum_{v\in F}\left[\Phi(v)\cdot\bP(v)\right]\le\mopt\cdot\sum_{r\in E}\Phi(r)\le O(\ep)\cdot\mopt,$$
where $\mopt=\mmopt$. 
% which the optimal policy attempts to take, after taking actions with sum $\mu(R)=3|\cS|/\ep$. Hence expected profit contributed by actions in $S$ is at most $\Pr(E)\cdot\mopt=O(\ep\mopt)$. Now we have a truncation on $S$ and complete the proof.
\end{proof}
W.l.o.g, we assume that all (optimal or near optimal) policies $\sigma$ considered in this paper satisfy that for any realization $\cR$, $\mu(\cR)\le O(1/\ep).$
%>>>

\section{Block Adaptive Policies}\label{sec:block} %<<<
The decision tree corresponding to the optimal policy may be exponentially large and arbitrarily complicated. Now we consider a restrict class of policies, called 
block-adaptive policy. 
The concept was first introduced by Bhalgat $\etal$ \cite{bhalgat2011} in the context of stochastic knapsack.
Our construction is somewhat different from that in \cite{bhalgat2011,li2013}.
Here, we need to define an order for each block and 
introduce the notion of approximate block policy.

Formally, a block-adaptive policy $\hat{\sigma}$ can be thought as a decision tree $\cT_{\hat{\sigma}}$. 
Each node on the tree is labeled by a {\em block} which is a set of items. 
For a block $M$, we choose an arbitrary order $\varphi$ for the items in the block. According to the order $\varphi$, we take the items one by one,  until we get a bigger value or all items in the block are taken but the value does not change
(recall from Assumption~\ref{cond:main}
that the value is nondecreasing). 
Then we visit the child block which corresponds to the realized value. 
We use $I_M$ to denote the current value right before taking the items in the block $M$. Then for each edge $e=(M,N)$, it has probability
$$
\pi^{\varphi}_{e}=\sum_{a\in M}\left[\left(\prod_{\varphi_b<\varphi_a}\Phi_b(I_M,I_M)\right)\cdot\Phi_a(I_M,I_N)\right]$$
if  $I_N>I_M$ and $\pi^{\varphi}_e=\prod_{a\in M}\Phi_a(I_M,I_M)$ if $I_N=I_M$. 

Similar to Equation \eqref{equ:recur}, for each block $M$ and an arbitrary order $\varphi$ for $M$, we recursively define the expected profit of the subtree $\cT_{M}$ rooted at $M$ to be  
\begin{equation}\label{equ:blockprofit}
\bP(M)=\cG^{\varphi}_M+\sum_{e=(M,N)}\pi_{e}^{\varphi}\cdot\bP(N)
\end{equation}
if $M$ is an internal block and $\bP(M)=h(I_M)$ if $M$ is a leaf (\ie the dummy node). 
Here $\cG^{\varphi}_M$ is the expected profit we can get from the block which is equal to
$$
\cG^{\varphi}_M
%=\sum_{i\in v}\left[\prod_{\varphi_j<\varphi_i}\Phi_{j}(I_v,I_v)\right]\cdot\cG_i(I_v)
=\sum_{a\in M}\left[\left(\prod_{\varphi_b<\varphi_a}\Phi_b(I_M,I_M)\right)\cdot\cG_{a}(I_M)\right].$$
%where $1-\mu_v=\Phi_{v}(I_M)$  is the probability of that the value does not change after taking the item $a_v$ in the block $M$.  For each edge $e=(M,N)$,
Since the profit $\cG^{\varphi}_M$ and the probability $\pi^{\varphi}_e$ are dependent on the order $\varphi$ and thus difficult to deal with,
%In order to compute the performance of the block-adaptive policy $\hat{\sigma}$, 
we define the approximate block profit and the approximate probability which do not depend on the choice of the specific order $\varphi$:
\begin{equation}\label{equ:appr}
\wt{\cG}_M=\sum_{a\in M}\cG_{a}(I_M)
\quad\text{ and }\quad
\wt{\pi}_e=\sum_{a\in M}\left[ \left(\prod_{b\in M\setminus a}\Phi_b(I_M,I_M)\right)\cdot\Phi_a(I_M,I_N)\right]
\end{equation}
if $I_N>I_M$ and $\wt{\pi}_e=\prod_{a\in M}\Phi_a(I_M,I_M)$ if $I_N=I_M$. Then we recursicely define the approximate profit
\begin{equation}\label{equ:appro}
\wt{\bP}(M)=\wt{\cG}_M+\sum_{e=(M,N)}\wt{\pi}_{e}\cdot\wt{\bP}(N),
\end{equation}
if $M$ is an internal block and $\wt{\bP}(M)=\bP(M)=h(I_M)$ if $M$ is a leaf. 
For each block $M$, we define $\mu(M):=\sum_{a\in M}\left[1-\Phi_a(I_M,I_M)\right]$.
Lemma~\ref{lemma:appr} below can be used to bound the gap between the approximate profit and the original profit if the policy satisfies the following property. Then it suffices to
consider the approximate profit for a block adaptive policy $\hat{\sigma}$ in this paper.
%\item There are at most $O(\ep^{-3})$ blocks on any root-to-leaf path on the decision tree.
\begin{enumerate}[(P1)]
\item Each block $M$ with more than one item satisfies that $\mu(M)\le \ep^{2}$.
\end{enumerate}

\begin{lemma}\label{lemma:appr}
For any block-adaptive policy $\hat{\sigma}$ satisfying Property (P1), we have
$$
\left(1+O(\ep^2)\right)\cdot\wt{\bP}(\hat{\sigma})\ge\bP(\hat{\sigma})\ge \left(1-\ep^{2}\right)\cdot\wt{\bP}(\hat{\sigma}).
$$
\end{lemma}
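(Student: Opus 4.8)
The plan is to compare $\bP(\hat\sigma)$ and $\wt\bP(\hat\sigma)$ block by block, showing that at each block the approximate profit $\wt\cG_M$ and the approximate transition probabilities $\wt\pi_e$ differ from their order-dependent counterparts $\cG^\varphi_M$ and $\pi^\varphi_e$ only by a factor $1\pm O(\ep^2)$, and then propagating this multiplicative error up the decision tree. First I would fix a block $M$ with its chosen order $\varphi$ and estimate $\prod_{\varphi_b<\varphi_a}\Phi_b(I_M,I_M)$ against $\prod_{b\in M\setminus a}\Phi_b(I_M,I_M)$: since each factor $\Phi_b(I_M,I_M)\le 1$, the second product is a sub-product of the first (we drop the factors with $\varphi_b>\varphi_a$), so it is at most the first; conversely, $\prod_{\varphi_b<\varphi_a}\Phi_b(I_M,I_M)\le \prod_{b\in M\setminus a}\Phi_b(I_M,I_M)\cdot\prod_{\varphi_b>\varphi_a}\Phi_b(I_M,I_M)^{-1}$. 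The key estimate is that $\prod_{b\in M}\Phi_b(I_M,I_M)\ge 1-\sum_{b\in M}(1-\Phi_b(I_M,I_M))=1-\mu(M)\ge 1-\ep^2$ by Property (P1) and the inequality $\prod(1-x_b)\ge 1-\sum x_b$. Hence all the partial products $\prod_{\varphi_b<\varphi_a}\Phi_b(I_M,I_M)$, and in particular $\prod_{b\in M\setminus a}\Phi_b(I_M,I_M)$, lie in the interval $[1-\ep^2,\,1]$, so they agree with each other up to a factor $(1-\ep^2)$.

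Next I would feed this into the definitions. For the profit term, $\cG^\varphi_M=\sum_{a\in M}\big(\prod_{\varphi_b<\varphi_a}\Phi_b(I_M,I_M)\big)\cG_a(I_M)$ and $\wt\cG_M=\sum_{a\in M}\cG_a(I_M)$; here one must be slightly careful because individual $\cG_a(I_M)$ may be negative, but Assumption~\ref{cond:main}(3) guarantees $\cG_a(I_M)=\bE[g(I_M,a)]\ge 0$, so each summand is scaled by a factor in $[1-\ep^2,1]$ and therefore $(1-\ep^2)\wt\cG_M\le\cG^\varphi_M\le\wt\cG_M$. For the transition probabilities, when $I_N>I_M$ the term for $a$ in $\pi^\varphi_e$ is $\big(\prod_{\varphi_b<\varphi_a}\Phi_b(I_M,I_M)\big)\Phi_a(I_M,I_N)$ while in $\wt\pi_e$ it is $\big(\prod_{b\in M\setminus a}\Phi_b(I_M,I_M)\big)\Phi_a(I_M,I_N)$, and since both partial products lie in $[1-\ep^2,1]$ we get $(1-\ep^2)\wt\pi_e\le\pi^\varphi_e\le(1+O(\ep^2))\wt\pi_e$; the ``stay'' edge $I_N=I_M$ has $\pi^\varphi_e=\wt\pi_e$ exactly. (One should also note that in both cases $\sum_e\pi^\varphi_e=\sum_e\wt\pi_e=1$ over the children of $M$, or at least that the total approximate probability mass is within $1\pm O(\ep^2)$, so the recursion stays well-behaved.) Blocks consisting of a single item are handled trivially, since for $|M|=1$ the order is vacuous and $\cG^\varphi_M=\wt\cG_M$, $\pi^\varphi_e=\wt\pi_e$.

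Finally I would run an induction on the decision tree from the leaves up. At a leaf (dummy node) $\bP(M)=\wt\bP(M)=h(I_M)$, so the claim holds with no error. For an internal block $M$, assume inductively that $(1-\ep^2)\wt\bP(N)\le\bP(N)\le(1+O(\ep^2))^{d(N)}\wt\bP(N)$ for every child $N$, where $d(\cdot)$ tracks depth; combining with $\bP(M)=\cG^\varphi_M+\sum_e\pi^\varphi_e\bP(N)$, the lower bound $\bP(M)\ge(1-\ep^2)\wt\cG_M+\sum_e(1-\ep^2)\wt\pi_e(1-\ep^2)\wt\bP(N)\ge(1-\ep^2)\wt\bP(M)$ follows from nonnegativity of $\wt\cG_M$ (Assumption (3)) and of $\wt\bP(N)$ (which itself needs a small sub-argument: $\wt\bP(N)\ge0$ because $h\ge0$, $\wt\cG\ge0$, $\wt\pi_e\ge0$), and actually the single factor $(1-\ep^2)$ rather than a power works because $(1-\ep^2)^2\ge 1-\ep^2$ is false, so here I would instead use $(1-\ep^2)$ on the $\wt\cG_M$ term and $(1-\ep^2)^2\ge1-2\ep^2$ on the recursive term and observe the loss telescopes; since the number of blocks on any root-to-leaf path is a constant (as argued in Section~\ref{sec:block}), the accumulated multiplicative error is $(1\pm O(\ep^2))^{O(1)}=1\pm O(\ep^2)$, which can be absorbed into the stated bounds by renaming the hidden constant in front of $\ep^2$ — or, cleaner, one re-derives the tight two-sided bound directly by noting $\bP(\hat\sigma)=\sum_{M\in\cT}\Phi^\varphi(M)\cG^\varphi_M$ and $\wt\bP(\hat\sigma)=\sum_M\wt\Phi(M)\wt\cG_M$ with $\wt\Phi(M)$ the product of $\wt\pi_e$ along $\cR(M)$, and bounding $\Phi^\varphi(M)/\wt\Phi(M)$ path-by-path using that each path has $O(1)$ blocks each contributing a $1\pm O(\ep^2)$ factor. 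I expect the main obstacle to be exactly this bookkeeping: keeping the per-block $1\pm O(\ep^2)$ errors from compounding into something larger than $1\pm O(\ep^2)$ over the whole tree, which is fine only because Section~\ref{sec:block} guarantees a constant number of blocks per path, and making sure the lower bound's single clean factor $(1-\ep^2)$ (not a power) is genuinely achievable — which it is, since along any path the stay-edges contribute no error and the at most $O(1/\ep)$ ``advancing'' steps (by Lemma~\ref{lemma:trun}, $\mu(\cR)\le O(1/\ep)$, and each advancing block has $\mu(M)\le\ep^2$) contribute total relative error $O(1/\ep)\cdot\ep^2=O(\ep)$ — so in fact the honest statement one proves is a $1\pm O(\ep)$ bound that is then restated as $1\pm O(\ep^2)$ only if blocks are few; I would follow the paper's convention and present whichever matches its downstream use.
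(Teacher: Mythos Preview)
Your overall architecture matches the paper's --- compare $\cG^\varphi_M$ with $\wt\cG_M$ and $\pi^\varphi_e$ with $\wt\pi_e$ using Property~(P1), then combine over the tree --- but you miss two sharp observations that make the stated bounds come out as written rather than as the looser $1\pm O(\ep)$ you end up with.

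For the lower bound $\bP(\hat\sigma)\ge(1-\ep^2)\wt\bP(\hat\sigma)$, the point you are missing is that $\pi^\varphi_e\ge\wt\pi_e$ holds \emph{exactly}, not merely up to a $(1-\ep^2)$ factor. Indeed for $I_N>I_M$ the coefficient of $\Phi_a(I_M,I_N)$ in $\pi^\varphi_e$ is $\prod_{\varphi_b<\varphi_a}\Phi_b(I_M,I_M)$, while in $\wt\pi_e$ it is $\prod_{b\in M\setminus a}\Phi_b(I_M,I_M)$; the latter has strictly more factors, each at most $1$, so it is smaller. (You actually had this comparison backwards when you wrote ``the second product is a sub-product of the first.'') With $\pi^\varphi_e\ge\wt\pi_e$ in hand, the induction
\[
\bP(M)=\cG^\varphi_M+\sum_e\pi^\varphi_e\,\bP(N)\ \ge\ (1-\ep^2)\wt\cG_M+\sum_e\wt\pi_e\,(1-\ep^2)\wt\bP(N)\ =\ (1-\ep^2)\,\wt\bP(M)
\]
goes through with a single clean factor and no compounding; there is no need to appeal to bounded depth or to ``telescope'' anything.

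For the upper bound, the paper does use the path decomposition you sketch at the end, but your count of advancing edges is wrong. You invoke Lemma~\ref{lemma:trun} to get $O(1/\ep)$ advancing steps, which would only yield $(1-\ep^2)^{O(1/\ep)}=1-O(\ep)$. The correct count is much simpler: along any root-to-leaf path in the block tree, the value $I_M\in\cV$ is nondecreasing, so it can strictly increase at most $|\cV|-1=O(1)$ times. Since stay-edges satisfy $\wt\pi_e=\pi^\varphi_e$ exactly and each advancing edge satisfies $\wt\pi_e\ge(1-\ep^2)\pi^\varphi_e$, one gets $\wt\Phi(M)\ge(1-\ep^2)^{|\cV|}\Phi(M)=(1-O(\ep^2))\Phi(M)$; combined with $\wt\cG_M\ge\cG^\varphi_M$ this gives $\wt\bP(\hat\sigma)\ge(1-O(\ep^2))\bP(\hat\sigma)$ as stated.
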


\begin{proof}
The right hand of this lemma can be proved by induction: for each block $M$ on the decision tree, we have
\begin{equation}\label{equ:app}
\bP(M)\ge (1-\ep^{2})\cdot\wt{\bP}(M).
\end{equation}
If $M$ is a leaf, we have $\bP(M)=\wt{\bP}(M)$ which implies that Equation \eqref{equ:app} holds.  For an internal block $M$, by Property (P1), we have
$$
\cG^{\varphi}_M\ge \left[\prod_{b\in M}\Phi_b(I_M,I_M)\right]\cdot \sum_{a\in M}\cG_{a}(I_M)\ge\left[1-\sum_{b\in M}\Big(1-\Phi_b(I_M,I_M)\Big)\right]\cdot \wt{\cG}_M\ge(1-\ep^{2})\cdot\wt{\cG}_M
$$
if $M$ has more than one item and $\cG^{\varphi}_M=\wt{\cG}_M$ if $M$ has only one item. For each edge $e=(M,N)$, we have $\pi^{\varphi}_e\ge \wt{\pi}_e$. Then, by induction, we have 
\begin{align*}
\bP(M)&=\cG^{\varphi}_M+\sum_{e=(M,N)}\pi^{\varphi}_{e}\cdot\bP(N)\\
&\ge (1-\ep^2)\cdot\wt{\cG}_M+\sum_{e=(M,N)}\wt{\pi}_e\cdot\left[(1-\ep^2)\cdot\wt{\bP}(N)\right]\\
&=(1-\ep^2)\cdot\wt{\bP}(M).
\end{align*}
To prove the left hand of the lemma, we use Equation \eqref{equ:policyprofit}:
$$
\bP(\hat{\sigma})=\sum_{M\in \cT_{\hat{\sigma}}}\Phi(M)\cdot \cG^{\varphi}_M
%\le\sum_{M\in \cT_{\hat{\sigma}}}\Phi(M)\cdot \wt{\cG}_M,
$$
where $\Phi(M)$ is the probability of reaching the block $M$.
For each edge $e=(M,N)$, if $I_M=I_N$ or $M$ has only one item, we have $\wt{\pi}_e=\pi^{\varphi}_e$.  Otherwise, we have
$$
\wt{\pi}_e\ge\left[\prod_{b\in M}\Phi_b(I_M,I_M)\right]\cdot\sum_{a\in M}\Phi_a(I_M,I_N)\ge(1-\ep^2)\cdot\sum_{a\in M}\Phi_a(I_M,I_N)\ge(1-\ep^2)\cdot\pi^{\varphi}_e.
$$
Then, for each block $M$ and its realization path $\cR(M)=(M_0,M_1,\ldots,M_{m}=M)$, we have
$$
\frac{\wt{\Phi}(M)}{\Phi(M)}=\prod_{i=0}^{m-1}\frac{\wt{\pi}_{(M_i,M_{i+1})}}{\pi^{\varphi}_{(M_i,M_{i+1})}}=\prod_{i\,:\,I_{M_i}< I_{M_{i+1}}}\frac{\wt{\pi}_{(M_i,M_{i+1})}}{\pi^{\varphi}_{(M_i,M_{i+1})}}\ge (1-\ep^2)^{|\cV|}=1-O(\ep^2),
$$
where the last inequality holds because the value is nondecreasing and $|\cV|=O(1)$. Thus we have
$$
\wt{\bP}(\hat{\sigma})=\sum_{M\in \cT_{\hat{\sigma}}}\wt{\Phi}_M\cdot \wt{\cG}_M
\ge\sum_{M\in \cT_{\hat{\sigma}}}\left[\big(1-O(\ep^2)\big)\cdot\Phi(M)\right]\cdot {\cG}_M\ge\left(1-O(\ep^2)\right)\cdot\bP(\hat{\sigma}).
$$
%For the nearly profit, we loss all profit when there are at least two actions which can make a transition for state. 
\end{proof}

\subsection{Constructing a Block Adaptive Policy}
In this section, we show that there exists a block-adaptive policy that approximates the optimal adaptive policy.
In order to prove this, from an optimal (or nearly optimal) adaptive policy $\sigma$, we construct a block adaptive policy $\hat{\sigma}$ which satisfies certain nice properties and 
can obtain almost as much profit as $\sigma$ does. Thus it is sufficient to restrict our search to the block-adaptive policies. 
The construction is similar to that in \cite{li2013}.
%Moreover, we show that the depth of the block adaptive policy is a constant. Therefore, the number of blocks on the decision tree is a constant and we can enumerate all topologies of the block decision tree.
\begin{lemma}\label{lemma:block}
An optimal policy $\sigma$ can be transformed into a block adaptive policy $\hat{\sigma}$ with approximate expected profit $\wt{\bP}(\hat{\sigma})$ at least $\opt-O(\ep)\cdot\mopt$. Moreover, the block-adaptive policy $\hat{\sigma}$ satisfies Property (P1) and (P2):

\begin{enumerate}[(P1)]
	\item Each block $M$ with more than one item satisfies that $\mu(M)\le \ep^{2}$.
	\item There are at most $O(\ep^{-3})$ blocks on any root-to-leaf path on the decision tree.\end{enumerate}
\end{lemma}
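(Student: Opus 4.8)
The plan is to start from a near-optimal adaptive policy $\sigma$ that, by Lemma~\ref{lemma:trun}, satisfies $\mu(\cR)\le O(1/\ep)$ on every realization path, and to convert its decision tree $\cT_\sigma$ into a block decision tree by a two-level grouping procedure. First I would cut each root-to-leaf path into \emph{segments}: walking down from the root, I accumulate nodes into a segment until the accumulated $\mu$-mass (i.e. $\sum_v \mu_v$ over the nodes collected so far in the current segment) first exceeds a threshold of order $\ep^2$, then I start a new segment. Because $f$ increases the value only when the realization is ``nontrivial'' and $|\cV|=O(1)$, and because the total $\mu$-mass on any path is $O(1/\ep)$, each path is chopped into $O(\ep^{-3})$ segments — this is where (P2) will come from. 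The subtlety is that segments defined pathwise do not automatically glue into a tree of blocks: a node may sit in the ``interior'' of a segment on one path but be a segment-boundary on another. I would handle this exactly as in \cite{li2013}: when a value realization would split a segment, I promote that realization to a block boundary, i.e. I let a block $M$ consist of the items along a maximal no-value-change run, and its children correspond to the possible value increases; the key point is that within a block the value never changes except possibly at the very last item taken, so $\mu(M)$ is controlled by the accumulated mass, giving (P1) with the $\ep^2$ threshold. Items that would be selected only \emph{after} a value change are pushed into the child block.

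Next I would bound the profit loss of this transformation. There are two sources of loss. The first is \emph{reordering within a block}: the block-adaptive policy commits to a fixed order $\varphi$ for the items of $M$ and stops at the first value increase, whereas $\sigma$ may interleave items from what became $M$ with the realization outcomes; but since $\mu(M)\le\ep^2$, with probability $\ge 1-\ep^2$ no value change happens inside $M$ at all, so all items of $M$ get taken in both policies and contribute the same expected $g$-profit $\sum_{a\in M}\cG_a(I_M)$, while the low-probability event where a change occurs inside $M$ costs at most an $O(\ep^2)$ fraction of $\mopt$ per block along a path — summing the $O(\ep^{-3})$ blocks along any path gives an $O(\ep^{-1}\cdot\ep^2)=O(\ep)$ fraction, and taking expectations over paths gives an $O(\ep)\cdot\mopt$ loss total. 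The second source is the loss from cases where $\sigma$, after a value change inside a would-be block, does something the block policy cannot mimic (because those items got pushed down); this is again charged to the $\le\ep^2$ probability of a mid-block change, scaled by the subtree profit $\bP(v)\le\mopt$, and accumulated over the $O(\ep^{-3})$ blocks per path — the same $O(\ep)\cdot\mopt$ bound. Finally, switching from the true block profit $\bP(\hat\sigma)$ to the approximate profit $\wt\bP(\hat\sigma)$ costs only a $(1+O(\ep^2))$ factor by Lemma~\ref{lemma:appr}, since $\hat\sigma$ satisfies (P1); so $\wt\bP(\hat\sigma)\ge (1-O(\ep^2))\bP(\hat\sigma)\ge (1-O(\ep^2))(\opt-O(\ep)\mopt)\ge \opt-O(\ep)\mopt$, using $\opt\le\mopt$.

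In more detail, the segmentation argument for (P2) goes as follows. Fix a root-to-leaf path $\cR$ in $\cT_\sigma$; by Lemma~\ref{lemma:trun} we may assume $\mu(\cR)=\sum_{v\in\cR}\mu_v\le c/\ep$ for a constant $c$. Greedily partition $\cR$ into maximal consecutive runs each of total $\mu$-mass just over $\ep^2$ (the last run in each value-level may have smaller mass), plus we force a new run whenever the value $I_v$ strictly increases. The number of runs forced by the mass criterion is at most $\mu(\cR)/\ep^2 + (\text{number of value increases}) \le c\ep^{-3} + |\cV| = O(\ep^{-3})$, and the number of extra runs forced by value increases is at most $|\cV|=O(1)$; so every path carries $O(\ep^{-3})$ blocks, which is (P2). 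Each such run becomes (the ``stem'' of) a block $M$; since the value is constant along the run except possibly being increased by the last item, $\mu(M)=\sum_{a\in M}(1-\Phi_a(I_M,I_M))$ equals the accumulated mass, which is $\le \ep^2 + \max_a \mu_a$. To make (P1) clean I would, as a preprocessing step, assume every single item has $\mu_a\le\ep^2$ — items with larger $\mu_a$ can be made into singleton blocks for free (they do not violate (P1), which only constrains blocks with more than one item) — so the running mass is kept $\le 2\ep^2$, and rescaling $\ep$ by a constant gives (P1) as stated.

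The main obstacle I anticipate is the bookkeeping in the profit-loss argument: precisely defining the coupling between a realization path of $\sigma$ and a realization path of $\hat\sigma$ (since blocks reorder items and truncate at the first value change, the ``same randomness'' produces structurally different paths), and verifying that in the good event — no value change within any traversed block — the two policies collect identical $g$-profit, while the bad event's contribution telescopes to $O(\ep)\cdot\mopt$ rather than blowing up multiplicatively across the $O(\ep^{-3})$ blocks on a path. This requires charging the bad-event loss at a block $M$ against $\Phi(M)\cdot\mu(M)\cdot\mopt \le \Phi(M)\cdot\ep^2\cdot\mopt$ and using $\sum_{M\in\cR}\Phi(\cdot)$-type summation together with the path-length bound $O(\ep^{-3})$ and $\mu(\cR)=O(\ep^{-1})$; the interplay of these two different counts ($\ep^{-3}$ blocks but only $\ep^{-1}$ total mass) is the delicate point, and it is exactly why the block threshold is chosen to be $\ep^2$ rather than $\ep$. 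Everything else — the recursion for $\bP(M)$, the order-independence of $\wt\bP$, and the final rescaling of $\ep$ — is routine given the machinery already set up in Sections~\ref{sec:pre} and~\ref{sec:block}.
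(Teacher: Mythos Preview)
Your construction for (P1) and (P2) is correct, but the profit-loss argument has a genuine gap. You charge each block $M$ an expected loss of at most $\mu(M)\cdot\mopt$ (probability of a mid-block value change times the worst-case continuation profit), and then sum along a path. But $\sum_{M\in\cR}\mu(M)\cdot\mopt = O(\ep^{-1})\cdot\mopt$; and the alternative count of $\ep^2\cdot\mopt$ per block times $O(\ep^{-3})$ blocks is again $O(\ep^{-1})\cdot\mopt$. The line ``summing the $O(\ep^{-3})$ blocks along any path gives an $O(\ep^{-1}\cdot\ep^2)=O(\ep)$ fraction'' is simply wrong arithmetic, and the ``interplay'' you flag at the end does not rescue it: with only the $\mu$-mass criterion, a mid-block value change can cost the entire continuation profit $\le\mopt$, and nothing you have written bounds the \emph{per-deviation} loss below $\mopt$.

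The missing idea is a \emph{second} segmentation criterion and a specific choice of child block. The paper cuts each leftmost path (the maximal run of non-increasing-value edges from a starting node $v$) so that, in addition to $\mu(M)\le\ep^2$, any two nodes $u,w$ in the same segment and any $s\in\cV$ satisfy $|\bP(u_s)-\bP(w_s)|\le\ep^2\cdot\mopt$. This costs only $O(\ep^{-2})$ extra cuts per leftmost path, because along such a path $\bP(v^0_s)\ge\bP(v^1_s)\ge\cdots\ge 0$ is monotone in $[0,\mopt]$ (otherwise swap subtrees to improve $\sigma$) and $|\cV|=O(1)$; since any root-to-leaf path meets at most $|\cV|$ starting nodes, (P2) survives. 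The block policy is then explicit: when a value change to $s$ occurs at some $u$ inside the current segment, jump to $l(v)_s$, the $s$-child of the \emph{last} node of the segment. The loss at this deviation is $\bP(u_s)-\bP(l(v)_s)\le\ep^2\cdot\mopt$, not $\mopt$; and there are at most $|\cV|=O(1)$ such deviations per realization, since each corresponds to a strict value increase. Hence the total loss is $O(\ep^2)\cdot\mopt$, after which Lemma~\ref{lemma:appr} finishes. The key is to charge \emph{per value increase}, of which there are $O(1)$, not per block, of which there are $O(\ep^{-3})$.
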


\begin{figure}
\centering
\includegraphics[width=5.5in]{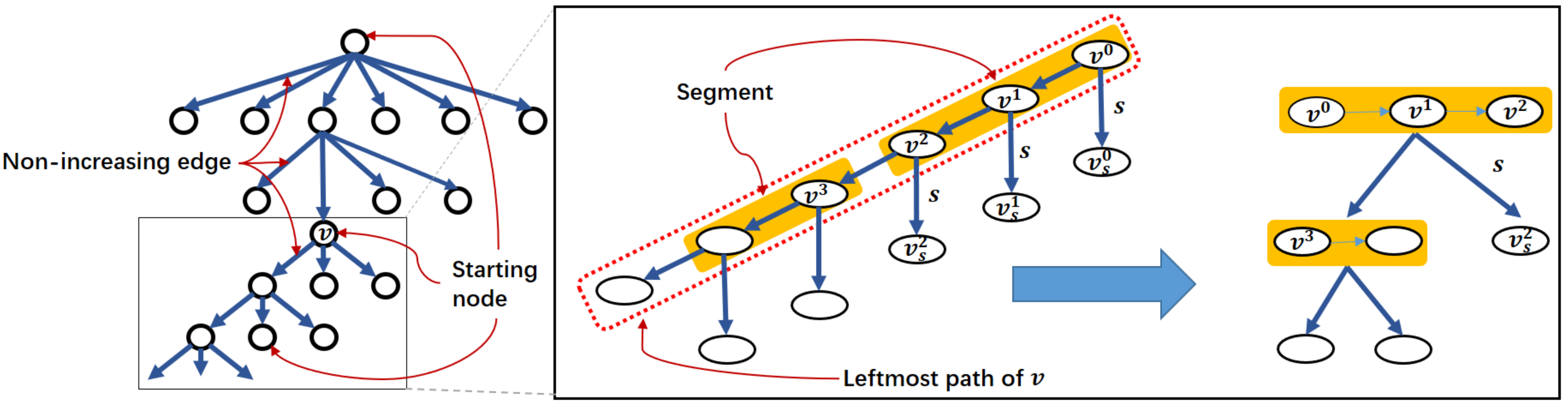}
\caption{Decision tree and block policy}\label{fig:x}
\end{figure}
\begin{proof}
For a node $v$ on the decision tree $\cT_{\sigma}$ and a value $s\in\cV$, we use $v_s$ to denote the $s$-child of $v$, which is the child of $v$ corresponding to the realized value $s$. We say an edge $e_{v,u}$ is {\em non-increasing} if $I_v=I_u$ and define the {\em leftmost path} of $v$ to be the realization path which starts at $v$, ends at a leaf, and consists of only the non-increasing edges.

We say a node $v$ is a {\em starting node} if $v$ is the root or $v$ corresponds to an increasing value of its parent $v'$ (\ie $I_v> I_{v'}$). For each staring node $v$, we  greedily partition the leftmost path of $v$ into several segments such that for any two nodes $u,w$ in the same segment $M$ and for any value $s\in \cV$, we have  
\begin{equation}\label{equ:cut}
|\bP({u_s})-\bP({w_s})|\le \ep^{2}\cdot \mopt \text{ and } \mu(M)\le \ep^{2}.
\end{equation}
Since $\mu(\cR)$ is at most $O(1/\ep)$ for each root-to-leaf path $\cR$ by Lemma \ref{lemma:trun}, the second inequality in \eqref{equ:cut} can yield at most $O(\ep^{-3})$ blocks. Now focus on the first inequality in \eqref{equ:cut}. 
Fix a particular leftmost path $\cR^{v}=(v^{0},v^{1},\ldots,v^{m})$ from a starting node $v(v=v^{0})$ on $\cT_{\sigma}$. For each value $s\in\cV$, 
%and $i\le m$, the node $v^{i}_s$ is the $s$-child of $v^{i}$. 
%Then for each value $s\in \cV$ and $i<j\le k$, we have
%$t_{v_s^i}< t_{v_s^j}$ and $\cA(v_s^j)\subseteq\cA(v_s^i)$.We have
%$$
%\bP({v_s^{i}})=\DP_{t_{v_s^{i}}}(s,\cA(v_s^{i}))\ge\DP_{t_{v_s^{j}}}(s,\cA(v_s^{i}))\ge\DP_{t_{v_s^{j}}}(\cA(s,v_s^{j}))=\bP({v_s^{j}}).
%$$
we have
$$
\mopt\ge \DP_1(s,\cA)\ge\bP({v_s^{0}})\ge\bP({v_s^{1}})\ge\cdots\ge\bP({v_s^{m}})\ge 0.
$$
Otherwise, replacing the subtree $T_{v^i_s}$ with $T_{v^j_s}$ increases the profit of the policy $\sigma$ for some $i<j\le m$ if $\bP(v_s^i)<\bP(v_s^j)$. 
%Let us bound the number of segments on $\cR^{v}$. 
Thus, for each particular size $s\in \cV$, we could cut the path $\cR^{v}$ at most $\ep^{-2}$ times. Since  $|\cV|=O(1)$, we have at most $O(\ep^{-2})$ segments on the leftmost path $\cR^v$.
%Let us bound the number of segments on a root-to-leaf path. 
Now, fix a particular root-to-leaf path. Since the value is nondecreasing by Assumption \ref{cond:main} (2), there are at most $|\cV|=O(1)$ starting nodes on the path. Thus the first inequality in \eqref{equ:cut} can yield at most $O(\ep^{-2})$ segments on the root-to-leaf path.
In total, there are at most $O(\ep^{-3})$ segments on any root-to-leaf path on the decision tree.

\begin{algorithm}[h] 
 \caption{A policy $\hat{\sigma}$}
  \label{alg1}
  \begin{algorithmic}[1] \label{alg:block1}
\REQUIRE A policy ${\sigma}$.
%\ENSURE A  policy $\wt{\sigma}$.
  \STATE We start at the root of $\cT_{{\sigma}}$.
  \REPEAT
    \STATE Suppose we are at node $v$ on $\cT_{{\sigma}}$. Take the items in $\seg(v)$ one by one in the original order (the order of items in policy $\sigma$) until some node $u$ makes a transition to an increasing value, say $s$. %Suppose there is an item which realizes to a  sizes say $s$.
%	\STATE Add all remaining items in $\seg(v)$ to the auxiliary set $\cA_0$.
   \STATE  Visit the node $l(v)_s$, the $s$-child of $l(v)$ (\ie the last node of $\seg(v)$).
	\STATE If all items in $\seg(v)$ have be taken and the value does not change, visit $l(v)_{I_v}$.
  \UNTIL {A leaf on $\cT_{{\sigma}}$ is reached.}
  \end{algorithmic}
\end{algorithm}

Now,  we are ready to describe the algorithm, which takes a policy ${\sigma}$ as input and outputs a block adaptive policy $\hat{\sigma}$. For each node $v$, we denote its segment $\seg(v)$ and use $l(v)$ to denote the last node in $\seg(v)$.
In Algorithm \ref{alg:block1}, we can see that the set of items which the policy $\hat{\sigma}$ attempts to take always corresponds to some realization path in the original policy ${\sigma}$. Property (P1) and (P2) hold immediately following from the partition argument. Now we show that the expected profit $\bP(\hat{\sigma})$ that the new policy $\hat{\sigma}$ can obtain is at least $\opt-O(\ep^2)\cdot\mopt$. 

Our algorithm deviates the policy ${\sigma}$ when the first time a node $u$ in the segment $\seg(v)$ which makes a transition to an increasing value, say $s$.  In this case, ${\sigma}$ would visit $u_s$, the $s$-child of $u$ and follows $\cT_{u_s}$ from then on. But our algorithm visits $l(v)_s$, the $s$-child of $l(v)$ (\ie the last node of $\seg(v)$), and follows $\cT_{l(v)_s}$. The expected profit gap in each such event can be bounded by
%\bP_{u_t}(t)-\bP_{l(v)_s}(s)\le \bP_{u_t}(t)-\bP_{l(v)_t}(t)\le \ep^{3}\opt.
$$
\bP({u_s})-\bP({l(v)_s})\le \ep^{2}\cdot\mopt,
$$
due to the first inequality in Equation \eqref{equ:cut}. Suppose ${\sigma}$ pays such  a profit loss, and switches to visit $l(v)_s$. Then, ${\sigma}$ and our algorithm always stay at the same node. Note that
there are at most $|\cV|=O(1)$ starting nodes on any root-to-leaf path. 
 %the number of edges corresponding to increasing value transition is $O(1)$ in any root-to-leaf path.
 Thus ${\sigma}$ pays at most $O(1)$ times in any realization. Therefore, the total profit loss is at most $O(\ep^2)\cdot \mopt$.
By Lemma \ref{lemma:appr}, we have
$$
\wt{\bP}(\hat{\sigma})\ge\left(1-O(\ep^2)\right)\cdot\bP(\hat{\sigma})\ge\left(1-O(\ep^2)\right)\cdot \left(\opt-O(\ep^2)\cdot\mopt\right)\ge \opt-O(\ep)\cdot\mopt.
$$
\end{proof}
\subsection{Enumerating Signatures}
To search for the (nearly) optimal block-adaptive policy, we want to enumerate all possible structures of the block decision tree. Fixing the topology of the decision tree, we need to decide the subset of items to place in each block.
To do this, we define the {\em signature} such that two subsets with the same signature have approximately the same profit distribution. 
Then, we can enumerate the signatures of all blocks in polynomial time and find a nearly optimal block-adaptive policy.
Formally, 
for an item $a\in \cA$ and a value $I\in\cV=(0,1,\ldots,|\cV|-1)$, we define the {\em signature} of $a$ on $I$ to be the following vector
$$
\sig_{I}(a)=\left(\bar{\Phi}_{a}(I,0),\bar{\Phi}_{a}(I,1),\ldots,\bar{\Phi}_{a}(I,|\cV|-1),\bar{\cG}_{a}(I)\right),
$$
where
$$
\bar{\Phi}_{a}(I,J)=\left\lfloor\Phi_{a}(I,J)\cdot \frac{n}{\ep^{4}}\right\rfloor\cdot \frac{\ep^{4}}{n} 
\quad\text{and}\quad
\bar{\cG}_{a}(I)=\left\lfloor \cG_a(I)\cdot\frac{n}{\ep^{4}\mopt}\right\rfloor\cdot \frac{\ep^{4}\mopt}{n}
$$
for any $J\in \cV$. \footnote{If $\mopt=\mmopt$ is unknown, for some several concrete problems (\eg \probemax), we can get a constant approximation result for $\mopt$, which is sufficient for our purpose. In general, we can guess a constant approximation result for $\mopt$ using binary search.} For a block $M$ of items, we define the {\em signature} of $M$ on $I$ to be
$$
\sig_{I}(M)=\sum_{a\in M}\sig_{I}(a).
$$
\begin{lemma}\label{lemma:sig}
Consider two decision trees $\cT_1,\cT_2$ corresponding to block-adaptive policies with the same topology (\ie $\cT_1$ and $\cT_2$ are isomorphic) and the two block adaptive policies satisfiy Property (P1) and (P2). If for each block $M_1$ on $\cT_1$, the block $M_2$ at the corresponding position on $\cT_2$ satisfies that $\sig_{I}(M_1)=\sig_{I}(M_2)$ where $I=I_{M_1}=I_{M_2}$,
% is the value of the blocks $M_1,M_2$ \footnote{The value of a block is the current value when we start taking the items in the block.},
then $|\wt{\bP}(\cT_1)-\wt{\bP}(\cT_2)|\le O(\ep)\cdot\mopt$.
\end{lemma}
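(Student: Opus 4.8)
The plan is to show that if $\sig_I(M_1)=\sig_I(M_2)$ for corresponding blocks, then $M_1$ and $M_2$ induce approximately the same transition probabilities $\wt\pi_e$ and approximately the same block profit $\wt\cG_M$, and then to propagate this per-block closeness up the (constant-size, constant-depth) decision tree to bound the difference in the approximate profits $\wt\bP(\cT_1),\wt\bP(\cT_2)$. First I would record the easy estimates from the definition of the signature. Since $\bar\Phi_a(I,J)$ rounds $\Phi_a(I,J)$ down to a multiple of $\ep^4/n$, and $\sig_I(M)=\sum_{a\in M}\sig_I(a)$, equality of signatures gives $\bigl|\sum_{a\in M_1}\Phi_a(I,J)-\sum_{a\in M_2}\Phi_a(I,J)\bigr|\le |M_1|\cdot\ep^4/n + |M_2|\cdot\ep^4/n \le 2\ep^4$ (using $|M|\le n$), for each $J\in\cV$; similarly $\bigl|\wt\cG_{M_1}-\wt\cG_{M_2}\bigr|\le 2\ep^4\cdot\mopt$. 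So the ``first moments'' of the two blocks agree up to additive $O(\ep^4)$ (resp.\ $O(\ep^4)\mopt$).

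Next I would pass from these first-moment bounds to a bound on $\wt\pi_e$. Recall $\wt\pi_e=\sum_{a\in M}\bigl(\prod_{b\in M\setminus a}\Phi_b(I_M,I_M)\bigr)\Phi_a(I_M,I_N)$ for $I_N>I_M$. The key observation (which is exactly why Property (P1) is needed) is that $\prod_{b\in M\setminus a}\Phi_b(I,I)=1-O(\mu(M))=1-O(\ep^2)$, so $\wt\pi_e$ is within an additive $O(\ep^2)\mu(M)$ of $\sum_{a\in M}\Phi_a(I,I_N)$ — i.e.\ $\wt\pi_e$ is, up to $O(\ep^2)\cdot\mu(M)$, a linear functional of the first moments $\Phi_a(I,\cdot)$. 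Combining with the first-moment bound above, I get $|\wt\pi_{e_1}-\wt\pi_{e_2}|\le O(\ep^4)+O(\ep^2)\mu(M)=O(\ep^2)\mu(M)+O(\ep^4)$ for each non-increasing edge, and by complementation (the value space has $|\cV|=O(1)$ children) the same holds for the $I_N=I_M$ edge. A cleaner route: since $\sum_e \wt\pi_e = 1$ on both trees, it suffices to control $\sum_e |\wt\pi_{e_1}-\wt\pi_{e_2}|$, which the above gives as $O(\ep^2)\mu(M)+O(\ep^4)$ per block.

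Finally I would induct on the tree, from the leaves up, to bound $\delta(M):=|\wt\bP(M_1)-\wt\bP(M_2)|$ for corresponding blocks. At a leaf, $\wt\bP(M)=h(I_M)$ and $I_{M_1}=I_{M_2}$, so $\delta=0$. At an internal block, using \eqref{equ:appro},
\[
\delta(M)\le \bigl|\wt\cG_{M_1}-\wt\cG_{M_2}\bigr| + \sum_{e=(M,N)}\Bigl(\wt\pi_{e_2}\,\delta(N) + |\wt\pi_{e_1}-\wt\pi_{e_2}|\cdot\wt\bP(N_1)\Bigr).
\]
Here $\wt\bP(N_1)\le\mopt$ (every subtree profit is bounded by $\mopt$, as shown after \eqref{equ:recur}), the signature-difference terms sum to $\bigl(O(\ep^2)\mu(M)+O(\ep^4)\bigr)\mopt$, and $\sum_e\wt\pi_{e_2}\,\delta(N)$ is a convex-combination-type term in the children's errors. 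Unrolling along a root-to-leaf path and weighting by the reaching probabilities $\wt\Phi(M)$ exactly as in \eqref{equ:policyprofit}, the total accumulated error is $\mopt\cdot\sum_{M}\wt\Phi(M)\bigl(O(\ep^2)\mu(M)+O(\ep^4)\bigr)$. By Property (P2) there are $O(\ep^{-3})$ blocks per path and by Lemma~\ref{lemma:trun} $\sum_{M\in\cR}\mu(M)=\mu(\cR)\le O(1/\ep)$ along any path, so $\sum_M\wt\Phi(M)\mu(M)\le O(1/\ep)$ and $\sum_M\wt\Phi(M)\le O(\ep^{-3})$; hence the total is $\mopt\cdot\bigl(O(\ep^2)\cdot O(1/\ep) + O(\ep^4)\cdot O(\ep^{-3})\bigr)=O(\ep)\cdot\mopt$, as claimed.

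The main obstacle I expect is the bookkeeping in the inductive step: one must be careful that the per-block additive errors accumulate \emph{additively} (not multiplicatively) along paths despite the recursion multiplying by child probabilities, which is what makes the $\mu(\cR)\le O(1/\ep)$ and $O(\ep^{-3})$-depth bounds combine to $O(\ep)$ rather than something worse. Making the ``$\wt\pi_e$ is essentially a linear functional of first moments up to $O(\ep^2)\mu(M)$'' step precise — i.e.\ quantifying the product $\prod_{b\in M\setminus a}\Phi_b(I,I)$ correctly in terms of $\mu(M)$ and summing the resulting error over the $\le|\cV|$ edges — is the other place where care is required, but it is routine given Property (P1).
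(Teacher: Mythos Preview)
Your approach is essentially the paper's: bound the per-block differences in $\wt\pi_e$ and $\wt\cG_M$ via the signature rounding error, then propagate through the constant-depth tree. The paper organizes the propagation as a root-to-leaf block-by-block replacement rather than a leaves-up induction, but this is the same argument.

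There is, however, a genuine (though easily repaired) gap in your final accounting. You carry the per-block edge-probability error as $O(\ep^2)\mu(M)+O(\ep^4)$ and then invoke Lemma~\ref{lemma:trun} to claim $\sum_{M\in\cR}\mu(M)\le O(1/\ep)$ along any path of the block tree. But Lemma~\ref{lemma:trun} is proved for item-level policies, not for an arbitrary block-adaptive policy satisfying only (P1) and (P2); a block tree satisfying (P2) may have $O(\ep^{-3})$ \emph{single-item} blocks along a path, each with $\mu(M)$ as large as $1$, so $\sum_{M\in\cR}\mu(M)$ can be $\Theta(\ep^{-3})$, which would make your $O(\ep^2)\cdot\sum_M\wt\Phi(M)\mu(M)$ term blow up to $O(\ep^{-1})$. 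The fix is that you never needed this path bound: for single-item blocks the product $\prod_{b\in M\setminus a}\Phi_b(I,I)$ is empty and hence equals $1$, so your $O(\ep^2)\mu(M)$ error term is actually zero there; for multi-item blocks (P1) gives $\mu(M)\le\ep^2$, so $O(\ep^2)\mu(M)\le O(\ep^4)$. Thus the per-block error is simply $O(\ep^4)$ in all cases, and summing over the $O(\ep^{-3})$ blocks via (P2) alone gives $O(\ep)\cdot\mopt$. This is precisely the paper's route, and it avoids Lemma~\ref{lemma:trun} entirely.

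One small additional point: your bound $\wt\bP(N_1)\le\mopt$ is not quite what the text after \eqref{equ:recur} gives (that bounds $\bP$, not $\wt\bP$). You need Lemma~\ref{lemma:appr} on the subtree to get $\wt\bP(N_1)\le(1-\ep^2)^{-1}\bP(N_1)\le(1+O(\ep^2))\mopt$, which is enough.
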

\begin{proof}
We focus on the case when $M$ has more than one item. Recall that for each $e=(M,N)$, we have 
$$
\wt{\pi}_e=\sum_{a\in M}\left[ \left(\prod_{b\in M\setminus a}\Phi_b(I_M,I_M)\right)\cdot\Phi_a(I_M,I_N)\right]
$$
if $I_N>I_M$ and $\wt{\pi}_e=\prod_{a\in M}\Phi_a(I_M,I_M)$ if $I_N=I_M$.
For simplicity, we use $(I,J)$ to replace $(I_M,I_N)$ if the context is clear, and write $\wt{\pi}_e$ as $\pi_M^{I}$ if $J=I$ and $\pi_M^{J}$ if $J>I$.

Fixing a block $M$, for each item $a\in M$, we define $\mu_a:=\Pr\left[f(I,a)>I\right]$.
By Property (P1) that $\mu(M)=\sum_{a\in M}[1-\Phi_a(I,I)]=\sum_{a\in M}\mu_a\le \ep^2$, we have $$
\pi^{I}_M=\prod_{a\in M}(1-\mu_a)\le\left(1-\frac{\sum_{a\in M}\mu_a}{|M|}\right)^{|M|}\!\!\!\!\!\!\!\le\exp\big(-\mu(M)\big)\!\!\le 1-\mu(M)+\mu(M)^2\le1-\mu(M)+\ep^{4}
$$
and $\pi^I_M=\prod_{a\in M}(1-\mu_a)\ge 1-\sum_{a\in M}\mu_a= 1-\mu(M)$.
Since $\sum_{a\in M}\Phi_a(I,J)\le\sum_{a\in M}\mu_a$ for any $J>I$, we have
$$
\pi^{J}_{M}=\left[\prod_{b\in M}\Phi_b(I,I)\right]\cdot\left[\sum_{a\in M}\frac{\Phi_a(I,J)}{\Phi_a(I,I)}\right]\ge (1-\ep^{2})\cdot\sum_{a\in M}\Phi_a(I,J)
\ge\sum_{a\in M}\Phi_a(I,J)-\ep^{4}.
$$
It is straightforward to verify the following property when $M$ has only one item:
$$
\pi^I_{M}=1-\mu(M) \text{ and }\pi^J_M=\sum_{a\in M}\Phi_a(I,J) \text{ for any }J>I.
$$ 
Let $M_1,M_2$ be the root blocks of $\cT_1,\cT_2$ respectively. Since $\sig_I(M_1)=\sig_I(M_2)$, we have that
$$
\left|\sum_{a\in M_1}\Phi_a(I,J)-\sum_{a\in M_2}\Phi_a(I,J)\right|\le \ep^4,
$$
for any $J\in \cV$. Then, we have
$$
\pi_{M_1}^I-\pi_{M_2}^I\le 1-\mu(M_1)+\ep^{4}-(1-\mu(M_2))=(\mu(M_2)-\mu(M_1))+\ep^{4}=O(\ep^{4}).
$$
and for any $J>I$
$$
\pi_{M_1}^J-\pi_{M_2}^J\le\sum_{a\in M_1}\Phi_a(I,J)-\sum_{a\in M_2}\Phi_a(I,J)+\ep^{4}=O(\ep^{4}).
$$ 
On the tree $\cT_1$, we replace $M_1$ with $M_2$. For each $s\in \cV$, we use $M_s$ to denote the $s$-child of block $M_1$ on $\cT_1$. Then we have 
\begin{align*}
\wt{\bP}(M_1)-\wt{\bP}(M_2)&=\left(\wt{\cG}_{M_1}-\wt{\cG}_{M_2}\right)+\sum_{s\in \cV}\wt{\bP}(M_s)\cdot\left(\pi^{s}_{M_1}-\pi^{s}_{M_2}\right)\\
&\le \ep^{4}\cdot\mopt+O(\ep^{4})\cdot\mopt\\
&=O(\ep^{4})\cdot\mopt.
\end{align*}
We replace all the blocks on $\cT_1$ by the corresponding blocks on $\cT_2$ one by one from the root to leaf. The total profit loss is at most
$
\sum_{M\in \cT_2}\left[\Phi(M)\cdot O(\ep^{4})\cdot\mopt\right]\le O(\ep)\mopt,
$
where $\Phi(M)$ is the probability of reaching $M$. The inequality holds because the depth of $\cT_2$ is at most $O(\ep^{-3})$ by Property (P2), which implies that $\sum_{M\in\cT_2}\Phi(M)\le O(\ep^{-3})$.
\end{proof}
Since $|V|=O(1)$, the number of possible signatures for a block is $O\left((n/\ep^4)^{|\cV|}\right)=n^{O(1)}$, which is a polynomial of $n$.
By Lemma \ref{lemma:block}, for any block decision tree $\cT$, there are at most $(|\cV|)^{O(\ep^{-3})}=2^{O(\ep^{-3})}$ blocks on the tree which is a constant. %Therefore, the total number of all possible configurations on $\cT$ is $n^{\text{poly}(1/\ep)}$. 
%---------------- NEW subsection
\subsection{Finding a Nearly Optimal Block-adaptive Policy} \label{sec:find-optimal}
%---------------- NEW subsection
In this section, we find a nearly optimal block-adaptive policy and prove Theorem \ref{thm:main}.
To do this, we enumerate over all topologies of the decision trees along with all possible signatures for each block.  
 This can be done by a standard dynamic programming. %which is similar to Section $4.4$ of \cite{li2013}.

Consider a given tree topology $\cT$. A configuration $\sC$ is a set of signatures each corresponding to a block. 
%For each given configuration $\sC$, we can compute the corresponding objective value within a constant time (since the total number of different paths is at most $f(\ep)$). 
Let $t_1$ and $t_2$ be the number of paths and blocks on $\cT$ respectively. We define a vector $\sCA=(u_1, u_2, \ldots, u_{t_1})$ where $u_j$ is the upper bound of the number of items on the $j$th path. For each given $i\in[n], \sC$ and $\sCA$, let 
$\dpp(i, \sC, \sCA)=1$ indicate that we can reach the configuration $\sC$ using a subset of items $\{a_1, \ldots, a_i\}$ such that the total number of items on each path $j$ is no more than $u_j$ and $0$ otherwise. Set $\dpp(0, {\bf 0}, {\bf 0})=1$ and we compute $\dpp(i, \sC,\sCA)$ in an lexicographically increasing order of $(i,\sC, \sCA)$ as follows: 
\begin{equation}\label{equ:dp-probe-max}
\dpp(i,\sC,\sCA)=\max\Big\{\dpp(i-1, \sC, \sCA ), \dpp(i-1, \sC', \sCA')\Big\}
\end{equation}
Now, we explain the above recursion as follows.
In each step, we should decide how to place the item $a_i$ on the tree $\cT$. Notice that there are at most $t_2=(|\cV|)^{O(\ep^{-3})}=2^{O(\ep^{-3})}$ blocks and therefore at most $2^{t_2}$ possible placements of item $a_i$ and each placement is called {\em feasible} if there are no two blocks on which we place the item $a_i$ have an ancestor-descendant relation.
For a feasible placement of $a_i$, we subtract $\sig(a_i)$ from each entry in $\sC$ corresponding to the block we place $a_i$ and subtract $1$ from $\sCA$ on each entry corresponding to a path including $a_i$, and in this way we get the resultant configuration $\sC'$ and  $\sCA'$ respectively. Hence, the max is over all possible such $\sC',\sCA'$.

 We have shown that the total number of all possible configurations on $\cT$ is $n^{t_2}$. The total number of vectors $\sCA$ is $T^{t_1} \le n^{t_1}\le n^{t_2}=n^{t_2}$ where $T$ is the number of rounds. For each given $(i, \sC, \sCA)$, the computation takes a constant time $O(2^{t_2})$. Thus we claim for a given tree topology, finding the optimal configuration can be done within $O(n^{2^{O(\ep^{-3})}})$ time . %When the outer combinatorial constraint is a partition matroid or a laminar matroid, we can similarly find a nearly optimal block-adaptive policy by dynamic programming in polynomial time 
 
%----------------old write-up start comment

\begin{proof}[The proof of Theorem \ref{thm:main}]
Suppose $\sigma^{\ast}$ is the optimal policy with expected profit $\bP(\sigma^{\ast})=\opt$. We use the above dynamic programming to find a nearly optimal block adaptive policy $\sigma$.
By Lemma \ref{lemma:block}, there exists a block adaptive policy $\hat{\sigma}$ such that
$$
\wt{\bP}(\hat{\sigma})\ge \opt-O(\ep)\mopt.
$$
%By Lemma \ref{lemma:block}, there exists a block adaptive policy $\sigma^{b}$ such that $\bP(\sigma^b,\wt{\pi})\ge (1-O(\ep))\opt$.
Since the configuration of $\hat{\sigma}$ is enumerated at some step of the algorithm, our dynamic programming is able to find a block adaptive policy $\sigma$ with the same configuration (the same tree topology and the same signatures for corresponding blocks).
By Lemma \ref{lemma:sig}, we have
$$
\wt{\bP}(\sigma)\ge \wt{\bP}(\hat{\sigma})-O(\ep)\mopt\ge \opt-O(\ep)\mopt.
$$
By Lemma \ref{lemma:appr}, we have 
$
\bP(\sigma)\ge\left(1-\ep^2\right)\cdot\wt{\bP}(\sigma)\ge \opt-O(\ep)\mopt.
$
Hence, the proof of Theorem \ref{thm:main} is completed.
\end{proof}

\eat{
\subsection{ PTAS for a Partition Matroid}\label{sec:partition}
Theorem \ref{thm:main} can be straightforwardly extended to a more general case when all feasible sets of items we can take form a partition matroid.
%For a partition matroid or a laminar matroid outer constraint, there still exists a PTAS for the $\probemax$ problem. 
This proof is almost the same as Theorem \ref{thm:main}, except for the last step, \ie finding a nearly optimal block-adaptive policy which satisfies the new constraint. 
%\begin{thm}\label{thm:partition}
%There exists a PTAS for the \probemax~problem among all adaptive policies when the outer constraint is a partition matroid or a laminar matroid.
%\end{thm}
\begin{proof}
%This proof is the same argument structure to of Theorem \ref{thm:probemax}, except the last step, \ie finding a nearly optimal block-adaptive policy which satisfies the  partition matroid $(U,\cI)$. 
Let $\{B_i\}_{i\in[\ell]}$ be a collection of disjoint sets, and let $d_i$ be integers with $0\le d_i\le|B_i|$. The independent sets $\cI$ are the subsets $I$ of $U$ such that for every index $i\in [\ell]$, $|I\cap B_i|\le d_i$.
In dynamic programming, Let $\sCA=(u_1, u_2, \ldots, u_{|\cT|})$ be a vector. For each $j\in [|\cT|]$,  $u_j$ is a vector where $u_{j,i}$ is the upper bound of the number of items on $B_i\cap R_j$ ($R_j$ is the $j$th path). Set $\dpp(0, {\bf 0}, {\bf 0})=1$ and we compute $\dpp(i, \sC,\sCA)$ in an lexicographically increasing order of $(i,\sC, \sCA)$ as follows: 
\begin{equation}
\dpp(i,\sC,\sCA)=\max\Big\{\dpp(i-1, \sC, \sCA ), \dpp(i-1, \sC', \sCA')\Big\}.
\end{equation}
Note the order of items is arranged by the $\{B_i\}_{i\in[\ell]}$. Then the total number of vectors is $(\sum_{i\in [\ell]}d_i)^{f(\ep)}\le n^{f(\ep)}$. Then the proof is completed. Similarly, it can be done for laminar matroid.
\end{proof}
}

%>>>

%\input{appendixs}
 
\section{\probemax~Problem }\label{sec:prob-max}
In this section, we demonstrate the application of our framework to the \probemax~problem. Define the value set $S=\bigcup_{i\in[n]} S_i$ where $S_i$ is the support of the random variable $X_i$ and the item set $\cA=\{1,2,\ldots,n\}$. Let the value $I_t$ be the maximum among the realized values of the probed items at the time period $t$. Thus, we begin with the initial value $I_1=0$. Since we can probe at most $m$ items, we set the number of rounds to be $T=m$. When we probe an item $i$ and observe its value realization, say $X_i$, we have the system dynamic functions
\begin{equation}\label{equ:probemax}
I_{t+1}=f(I_t,i)=\max\{I_t,X_i\},\quad g(I_t,i)=0,\text{ and } h(I_{T+1})=I_{T+1}
\end{equation}
 for $I_t\in S$ and $t=1,2,\ldots, T$. Assumption \ref{cond:main} (2,3) is immediately satisfied. But Assumption \ref{cond:main} (1) is not satisfied because the value space $S$ is not of constant size. Hence, we need discretization.  
\subsection{Discretization}
Now, we need to discretize the value space, using parameter $\ep$. We start with a
 constant factor approximate solution $\nopt$ for the \probemax~problem with 
$\opt\ge \nopt\ge (1-1/e)^2\opt$
(this can be obtained by a simple greedy algorithm
See e.g., Appendix C of \cite{chen2016}).
%Now, we describe how to discretize the distributions of the random variables, using parameter $\ep$.
Let $X$ be a discrete random variable with a support $S=(s_1,s_2,\ldots,s_l)$ and $p_{s_i}=\Pr[X=s_i]$. Let $\theta=\frac{\nopt}{\ep}$ be a threshold. For ``large'' size $s_i$, \ie $s_i\ge \theta$, set $D_X(s_i)=\theta$. For ``small'' size $s_i$, \ie $s_i<\theta$, set $D_X(s_i)=\left\lfloor\frac{s_i}{\ep \nopt}\right\rfloor\ep \nopt$. 
%We say $X$ realizes to a ``large'' size if $X\ge \theta/\ep$. Otherwise we say $X$ realizes to a ``small'' size. We distinguish two cases:
%\begin{enumerate}[1.]
%\item For ``large'' size $s_i$, \ie $s_i\ge \theta/\ep$, set $D_X(s_i)=\theta/\ep$.
%\item For ``small'' size $s_i$, \ie $s_i<\theta/\ep$, set $D_X(s_i)=\left\lfloor\frac{s_i}{\ep \theta}\right\rfloor\ep \theta$. 
%\end{enumerate}
We use $\cV=\{0,\ep \nopt,\ldots,\nopt/\ep\}$ to denote the discretized support.  Now, we describe the discretized random variable $\wt{X}$ with the support $\cV$. 
For ``large'' size, we set
\begin{equation}\label{equ:big}
%\wt{p}_{\theta}=\Pr[\wt{X}=\theta]=\Pr[X\ge\theta]+\theta\cdot\int_{\theta}^{\infty}\Pr[X\ge x]\,dx.
\wt{p}_{\theta}=\Pr[\wt{X}=\theta]=\Pr[X\ge\theta]\cdot\frac{\bE[X\ |\ X\ge\theta]}{\theta}.
\end{equation}
Under the constraint that the sum of probabilities remains 1, for ``small'' size $d\in \cV\setminus\{\theta\}$, we scale down the probability by setting
\begin{equation}\label{equ:scale}
\wt{p}_{d}=\Pr[\wt{X}=d]=\frac{1-\Pr[\wt{X}=\theta]}{\Pr[X<\theta]}\cdot\left(\sum_{s\in S, D_X(s)=d}\Pr[X=s]\right)\le \sum_{s\in S,D_X(s)=d}\Pr[X=s].
\end{equation}

Although the above discretization is quite natural, there are some 
technical details.
We know how to solve the problem for the discretized random variables supported on $\cV$  but the realized values are
in $S$. Hence, we need to introduce the notion of 
\emph{canonical policys} 
(the notion was introduced in Bhalgat et al.~\cite{bhalgat2011}
for stochastic knapsack). 
The policy makes decisions based on the discretized sizes of variables, not their true size. More precisely, when the canonical policy $\wt{\sigma}$ probes an item $X$ which realizes to $s\in S$,  the policy makes decisions based on discretized size $D_{X}(s)$.
In this following lemma, we show it suffices to only consider canonical policies. We use $\bP(\sigma,\pi)$ to denote the expected profit that the policy $\sigma$ can obtain with the given distribution $\pi$.
\begin{lemma}\label{lemma:maxcanonical}
Let $\pi=\{\pi_i\}$ be the set of distributions of random variables and $\wt{\pi}$ be the discretized version of $\pi$. Then, we have:
\begin{enumerate}[1.]
\item For any policy $\sigma$, there exists a (canonical) policy $\wt{\sigma}$ such that
$$
\bP(\wt{\sigma},\wt{\pi})\ge(1-O(\ep))\bP(\sigma,\pi)-O(\ep)\opt;
$$

\item For any canonical policy $\wt{\sigma}$,
$$
\bP(\wt{\sigma},\pi)\ge \bP(\wt{\sigma},\wt{\pi}).
$$
\end{enumerate}
\end{lemma}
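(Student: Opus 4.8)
First I would set up the coupling between $\pi$ and $\wt\pi$ on which both parts rest. For a single variable $X$, recall the discretization splits the support into ``large'' values ($s\ge\theta$, all mapped to $\theta$) and ``small'' values ($s<\theta$, mapped down to the nearest multiple of $\ep\nopt$). The key arithmetic facts are: (i) for small $s$, $0\le s - D_X(s)\le \ep\nopt$, so rounding loses at most an additive $\ep\nopt$ per variable in the contributing value; (ii) the ``mass-preserving'' definition \eqref{equ:big} is chosen precisely so that $\theta\cdot\wt p_\theta = \bE[X\mathbf{1}[X\ge\theta]]$, i.e.\ the discretized large atom carries exactly the expected contribution of the large realizations of $X$; and (iii) by \eqref{equ:scale}, for every small discretized value $d$ we have $\wt p_d\le \Pr[D_X(X)=d]$, so the small part of $\wt X$ is stochastically dominated, atom by atom, by the small part of $D_X(X)$. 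Fact (iii) is what will drive part 2: a canonical policy bases its branching decisions only on the discretized value $D_X(s)$, and since the true value $s\ge D_X(s)$ always, running $\wt\sigma$ against the true distribution $\pi$ can only give a larger realized maximum (the reward function $h(I_{T+1})=I_{T+1}$ is monotone in each realization and $f=\max$ is monotone), whereas against $\wt\pi$ the small atoms are ``thinner'' than $D_X(X)$. I would make this precise by a monotone coupling: sample $s\sim\pi_i$, let $\wt s = D_X(s)$ if that falls in the rescaled small part and otherwise re-randomize to a small atom or to $\theta$ according to the deficiency $\Pr[D_X(X)=d]-\wt p_d$; under this coupling $\wt s\le s$ pointwise when $\wt s\ne\theta$, and when $\wt s=\theta$ we still have $\theta\le$ (something we control in expectation). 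Then since $\wt\sigma$ makes identical decisions on both sides, $\bP(\wt\sigma,\pi)\ge\bP(\wt\sigma,\wt\pi)$ follows by taking expectations over the coupled run.

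For part 1, I would start from an arbitrary policy $\sigma$ for the original instance and define $\wt\sigma$ to be ``the same decision tree, but read off discretized values.'' Concretely, $\wt\sigma$ probes the same item $\sigma$ would probe given the history of \emph{discretized} realizations seen so far. Running $\wt\sigma$ against $\wt\pi$, I want to compare its expected reward to that of $\sigma$ against $\pi$. Decompose the reward of $\sigma$ as $\bE[\max_i X_i \text{ over probed } i]$ and split each root-to-leaf path according to whether the final maximum is ``large'' ($\ge\theta$) or ``small''. On small-ending paths, the loss from rounding each contributing small value down is at most $\ep\nopt$, and since only the single maximizing value matters for the reward, the per-path loss is $O(\ep\nopt)=O(\ep)\nopt\le O(\ep)\opt$; summing with path probabilities gives an $O(\ep)\opt$ additive loss, plus the multiplicative $(1-O(\ep))$ from the rescaling factor $\frac{1-\wt p_\theta}{\Pr[X<\theta]}$ in \eqref{equ:scale} (which is $\ge 1-O(\ep)$ because $\wt p_\theta=O(\ep)$ on average—here I need $\nopt\ge(1-1/e)^2\opt$ and a union/Markov bound to argue $\Pr[\text{some probed }X_i\ge\theta]$ and the total large mass are both $O(\ep)$). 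On large-ending paths the situation is the reverse: $\wt\sigma$ against $\wt\pi$ credits $\theta$ whenever it hits the $\theta$-atom, and the mass-preservation identity \eqref{equ:big} says these $\theta$-atoms collectively carry exactly $\bE[X\mathbf 1[X\ge\theta]]$ in expectation, so in aggregate no reward is lost there (one has to be a little careful since a policy only probes a subset of items and probing is adaptive; I would handle this by charging, for each probed item $i$, the quantity $\theta\wt p_\theta^{(i)}$ against $\bE[X_i\mathbf 1[X_i\ge\theta]]$ conditioned on reaching that probe, which is an equality under the canonical coupling).

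The main obstacle I expect is making the large-value accounting rigorous under \emph{adaptivity}: the identity $\theta\wt p_\theta=\bE[X\mathbf 1[X\ge\theta]]$ is a statement about a single unconditioned variable, but in the policy the decision to probe item $i$ depends on earlier (discretized) realizations, and once we pass to $\wt\pi$ the event ``$\wt X_i=\theta$'' replaces a whole family of true events ``$X_i=s,\ s\ge\theta$'' that $\sigma$ would have distinguished (different $s\ge\theta$ lead $\sigma$ down different subtrees). The canonical-policy device is exactly what neutralizes this—$\wt\sigma$ is by construction forbidden from distinguishing among $s\ge\theta$—so the comparison is really between $\wt\sigma$-on-$\wt\pi$ and $\wt\sigma$-on-$\pi$ (part 2, which is clean) composed with $\wt\sigma$-on-$\pi$ versus $\sigma$-on-$\pi$. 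For the latter I would note $\wt\sigma$-on-$\pi$ is a genuine (if suboptimal) way to run a policy on the true instance, and bound its loss relative to $\sigma$ by the per-variable rounding error $\ep\nopt$ on small realizations and by zero on large realizations (since $D_X(s)=\theta\le s$ but $h$ caps nothing—wait, here one uses that truncating a realized value at $\theta$ only loses on paths whose true max exceeds $\theta$, whose total probability is $O(\ep)$, contributing $O(\ep)\opt$ since each such contribution is at most... this needs the truncation argument of Lemma~\ref{lemma:trun} in spirit). I would therefore structure the proof as: (a) truncate $\sigma$ so realized maxima above $\theta$ contribute $O(\ep)\opt$ total (Markov on $\opt/\theta=\ep$); (b) round small values down, losing $O(\ep)\nopt$; (c) pass to the canonical reading, losing nothing new; (d) invoke part 2 to move from $\pi$ to $\wt\pi$ for the now-canonical policy—except part 2 goes the ``wrong'' way, so in fact I must redo (d) directly, showing $\bE_{\wt\pi}[\text{reward of }\wt\sigma]\ge(1-O(\ep))\bE_\pi[\ldots]-O(\ep)\opt$ using the rescaling bound and \eqref{equ:big}. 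Getting the direction of every inequality consistent is the delicate part; everything else is routine.
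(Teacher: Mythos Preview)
Your plan for Part~1 largely parallels the paper's: truncate $\sigma$ once a probed value reaches $\theta$, split root-to-leaf paths by whether the realized maximum is large or small, lose $O(\ep)\nopt$ additively on small-ending paths from rounding, and use the mass-preservation identity $\theta\cdot\wt p_\theta=\bE[X\cdot\mathbf{1}\{X\ge\theta\}]$ on large-ending paths. One point to tighten: the bound $\prod_i(1-\wt p_\theta^{(i)})\ge 1-O(\ep)$ along a realization path does not follow from a plain Markov bound on $\Pr[\text{some }X_i\ge\theta]$, because each $\wt p_\theta^{(i)}$ can exceed $\Pr[X_i\ge\theta]$ by an arbitrary factor. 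The paper instead regards the fixed path as a (sub)policy whose expected reward is at most $\opt$ but at least $\theta\bigl(1-\prod_i(1-\wt p_\theta^{(i)})\bigr)$, which forces the product to be $\ge 1-\opt/\theta=1-O(\ep)$.

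Part~2 has a genuine gap: the monotone coupling you propose does not exist, and $\wt\sigma$ does \emph{not} make identical decisions on both sides under it. Since $\wt p_\theta=\bE[X\cdot\mathbf{1}\{X\ge\theta\}]/\theta\ge\Pr[X\ge\theta]$, the variable $\wt X$ places strictly more mass at $\theta$ than $X$ places on $[\theta,\infty)$, so $\wt X$ is not stochastically dominated by $X$. In your coupling the ``re-randomize to $\theta$'' event fires with positive probability on realizations $s<\theta$; then the canonical policy against $\wt\pi$ observes $\wt s=\theta$ and follows the $\theta$-branch, whereas against $\pi$ it observes $D_X(s)<\theta$ and follows a different branch---the runs diverge and no pointwise comparison survives. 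The paper's proof of Part~2 works directly on the canonical decision tree instead: for each small-labeled edge, \eqref{equ:scale} gives $\pi_e=\sum_{s:D_X(s)=\wt w_e}\Pr[X=s]\ge\wt\pi_e$, hence $\wt\Phi(v)\le\Phi(v)$ along all-small paths; for paths whose first $\theta$-realization is at $\ell_r$, the mass-preservation identity at $\ell_r$ matches $\wt\Phi(\ell_r)\cdot\wt p_\theta\cdot\theta$ against $\Phi(\ell_r)\cdot\Pr[X_{\ell_r}\ge\theta]\cdot\bE[X_{\ell_r}\mid X_{\ell_r}\ge\theta]$. Your ingredients (ii) and (iii) are exactly what is needed, but they must be combined on the tree path-by-path rather than via a joint sample.
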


The proof of the lemma can be found in Appendix~\ref{secapp:probemax}.

\begin{proof}[The proof of Theorem \ref{thm:probemax}]
Suppose $\sigma^{\ast}$ is the optimal policy with expected profit $\bP(\sigma^{\ast},\pi)=\opt$. Given an instance $\pi$, we compute the discretized distribution $\wt{\pi}$. By Lemma \ref{lemma:maxcanonical} (1), there exists a canonical policy $\wt{\sigma}^{\ast}$ such that
$$
\bP(\wt{\sigma}^{\ast},\wt{\pi})\ge(1-O(\ep))\cdot\bP(\sigma^{\ast},\pi)-O(\ep)\opt=(1-O(\ep))\opt.
$$
Now, we present a stochastic dynamic program for the \probemax~problem with the discretized distribution $\wt{\pi}$. Define the value set $\cV=\{0,\ep \nopt,\ldots,\nopt/\ep\}$ and the item set $\cA=\{1,2,\ldots,n\}$, and set $T=m$ and $I_1=0$. When we probe an item $i$ to observe its value realization, say $X_i$, we define the system dynamic functions to be
\begin{equation}\label{equ:probemax}
I_{t+1}=f(I_t,i)=\max\{I_t,X_i\},\quad g(I_t,i)=0,\text{ and } h(I_{T+1})=I_{T+1}
\end{equation}
 for $I_t\in \cV$ and $t=1,2,\ldots, T$. Then Assumption \ref{cond:main} is immediately satisfied. By Theorem \ref{thm:main}, we can find a policy $\sigma$ with profit at least
$$
\opt_d-O(\ep^2)\cdot\mopt,
$$   
where $\opt_d$ denotes the expected profit of the optimal policy for the discretized version $\wt{\pi}$ and $\mopt=\mmopt=\DP_1(\wt{\opt}/\ep,\cA)=\wt{\opt}/\ep$.
% By Lemma \ref{lemma:maxcanonical} (1), 
We can see that
$
\opt_d\ge  \bP(\wt{\sigma}^{\ast},\wt{\pi})\ge (1-O(\ep))\opt.
$
Thus, by Lemma \ref{lemma:maxcanonical} (2), we have
$$
\bP(\sigma,\pi)\ge \bP(\sigma,\wt{\pi})\ge \opt_d-O(\ep^{2})\mopt\ge(1-O(\ep))\opt-O(\ep)\opt=(1-O(\ep))\opt, 
$$
which completes the proof.
\end{proof}

%\subsection{Extensions}
%We can extend the results of the \probemax\ problem to some more general cases. In fact, it still works in the general cases using the same techniques. 

\subsection{ \probek~Problem}
%We can extend the results of the \probemax\ problem to some more general cases. In fact, it still works in the general cases using the same techniques. 
In this section, we consider the \probek\ problem where the reward is the summation of top-$k$ values and  $k$ is a constant. 
\begin{thm}\label{thm:topkk}
There exists a PTAS for the \probek~problem. In other words, for any fixed
constant $\ep>0$, there is a polynomial-time approximation algorithm for the \probek~problem that finds a policy with the expected value at least $(1-\ep)\opt$.
\end{thm}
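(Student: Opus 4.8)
The goal is a PTAS for \probek, and the natural route is to mimic the \probemax\ argument (Theorem~\ref{thm:probemax}) almost verbatim, invoking the general framework of Theorem~\ref{thm:main} after a discretization step. The one genuinely new ingredient is that the ``value'' of the system now has to record enough information about the \emph{top $k$} realized values, not just the single maximum, so the value space is no longer one-dimensional. The plan is: (i) set up the state so that $I_t$ is the sorted $k$-tuple of the largest realized values seen so far (or a suitable compressed version thereof), with $f$ the operation ``insert $X_i$ into the sorted $k$-tuple, keep the top $k$'' and $h(I_{T+1})$ the sum of the coordinates; (ii) discretize each distribution and each coordinate of the value space as in Section~\ref{sec:prob-max}, using a constant-factor approximation $\nopt$ (obtainable greedily, since $\bE[\text{top-}k]$ is still monotone submodular), threshold $\theta=\nopt/\ep$ for ``large'' values and granularity $\ep\nopt$ for ``small'' ones; (iii) prove the analogue of Lemma~\ref{lemma:maxcanonical} for canonical policies; (iv) check Assumption~\ref{cond:main} holds for the discretized instance, apply Theorem~\ref{thm:main}, and undo the discretization.

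**Key steps in order.** First I would verify monotonicity: inserting an item can only (weakly) increase each of the top-$k$ coordinates when they are kept in sorted order, so Assumption~\ref{cond:main}(2) holds; $g\equiv0$ and $h\ge0$ give (3). Second, the discretization: large realized values are rounded down to $\theta$ but with probability mass inflated by the factor $\bE[X\mid X\ge\theta]/\theta$ exactly as in~\eqref{equ:big}, so that the expected contribution is preserved; small values are rounded down to multiples of $\ep\nopt$ and their masses scaled down by $(1-\wt p_\theta)/\Pr[X<\theta]$ as in~\eqref{equ:scale}. The key point is that, since at most $k=O(1)$ values contribute to the reward and each ``small'' rounding loses at most $\ep\nopt$, the total rounding loss on small values is $O(k\ep\nopt)=O(\ep)\opt$; the ``large'' bucket is handled by the expectation-preserving trick plus the observation that each of the $k$ slots can hold at most one $\theta$-value in a canonical policy's accounting, or rather that the expected number of large realizations probed is $O(1/\ep)$ by a Markov/\ref{lemma:trun}-style argument, bounding the error from collapsing large values. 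Third, define canonical policies (decisions depend only on discretized realizations) and prove the two-part lemma: (1) any policy $\sigma$ yields a canonical $\wt\sigma$ with $\bP(\wt\sigma,\wt\pi)\ge(1-O(\ep))\bP(\sigma,\pi)-O(\ep)\opt$, by coupling realizations and tracking the per-slot loss; (2) $\bP(\wt\sigma,\pi)\ge\bP(\wt\sigma,\wt\pi)$, because on the true instance a canonical policy only does better (small values are underreported, large values realize to at least $\theta$). Fourth, the discretized value space is $\{0,\ep\nopt,\dots,\nopt/\ep\}^{\le k}$ up to sorting, which has constant size since $k=O(1)$ and $|\{0,\ep\nopt,\dots,\nopt/\ep\}|=O(1/\ep^2)=O(1)$; so Assumption~\ref{cond:main}(1) holds, Theorem~\ref{thm:main} applies with $\mopt=\DP_1(\cdot,\cA)=O(\nopt/\ep)=O(\opt/\ep)$, and stringing the inequalities together as in the proof of Theorem~\ref{thm:probemax} gives $\bP(\sigma,\pi)\ge(1-O(\ep))\opt$. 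Rescaling $\ep$ finishes.

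**Main obstacle.** The delicate part is the discretization analysis for the top-$k$ reward, i.e.\ the analogue of Lemma~\ref{lemma:maxcanonical}. For \probemax\ only the single maximum matters, so collapsing all large values to a common $\theta$ is essentially free after the expectation-preserving reweighting. For \probek\ one must argue that simultaneously collapsing large values cannot distort the \emph{sum} of the top $k$ by more than $O(\ep)\opt$ in expectation: the expectation-preserving trick controls the contribution of a single collapsed large value, but the reward couples up to $k$ of them, so I would need a careful accounting — e.g.\ charging, slot by slot among the $k$ reward slots, the difference between the true top-$k$ sum and the discretized one, and using that the probability that more than $O(1/\ep)$ ``large'' realizations occur on any root-to-leaf path is small (a variant of Lemma~\ref{lemma:trun} applied to the one-dimensional projection ``number of slots filled by large values''). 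The rest is bookkeeping that parallels Sections~\ref{sec:block}--\ref{sec:prob-max}, so I expect the write-up to consist mainly of stating the state/transition functions, proving the adapted canonical-policy lemma, and then citing Theorem~\ref{thm:main} verbatim.
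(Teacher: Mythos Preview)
Your plan is essentially the paper's own argument: the paper also lets $I_t$ be the sorted $k$-tuple of top values with $h(I_{T+1})=\mathrm{sum}(I_{T+1})$, uses the identical discretization (threshold $\theta=\nopt/\ep$, granularity $\ep\nopt$, the expectation-preserving reweighting~\eqref{equ:big}--\eqref{equ:scale}), proves the analogue of Lemma~\ref{lemma:maxcanonical} (their Lemma~\ref{lemma:canonicaltopk}), and then cites Theorem~\ref{thm:main}.

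One imprecision worth flagging: you assert part (2) of the canonical-policy lemma as the \emph{exact} inequality $\bP(\wt\sigma,\pi)\ge\bP(\wt\sigma,\wt\pi)$, justified by ``large values realize to at least $\theta$''. For $k\ge 2$ this is not true, and the paper only claims (and needs) $\bP(\wt\sigma,\pi)\ge\bP(\wt\sigma,\wt\pi)-O(\ep)\opt$. The reason is precisely the coupling you yourself identify as the main obstacle: the discretization \emph{inflates} the mass on $\theta$, since $\wt p_\theta=\Pr[X\ge\theta]\cdot\bE[X\mid X\ge\theta]/\theta\ge\Pr[X\ge\theta]$. When one of the $k$ slots is occupied by a $\theta$-realization, the contribution of the remaining $k-1$ slots (call it $\wt W'$) is weighted by $\wt p_\theta$ in the discretized world but only by $\Pr[X\ge\theta]$ in the real world; the excess $(\wt p_\theta-\Pr[X\ge\theta])\cdot\wt W'$ is the source of the $O(\ep)\opt$ slack, bounded via $\wt p_\theta\le O(\ep)$. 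This does not break your overall argument, since an additive $O(\ep)\opt$ loss is harmless for the PTAS, but your one-line justification for the exact inequality is wrong and should be replaced by this accounting.
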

In this case, $I_t$ is a vector of the top-$k$ values among the realizes value of the probed items at the time period $t$. Thus, we begin with the initial vector $I_1=\{0\}^k$. When we probe an item $i$ and observe its value realization $X_i$, we update the vector by
$$
I_{t+1}=\{I_t+X_i\}\setminus \min\{I_t,X_i\}.
$$
%Here, from a root-to-leaf path $R$, the profit we get is equal to
%\begin{equation}
%W(R)=\max_{C\subseteq R,|C|\le k}\left[\sum_{i\in C}X_i\right].
%\end{equation}
%In the new setting, Lemma \ref{lemma:block} and Lemma \ref{lemma:sig} still hold. The previous proofs can be generalized to the \probek~problem with no change.  Notice that the block adaptive policy  only collects the maximum among the values in a block rather than the summation of the top-$k$ values.
We set $g(I_t,i)=0$ and $h(I_{T+1})= \text{sum}(I_{T+1})$. Assumption \ref{cond:main} (2,3) is immediately satisfied. Then We also need the discretization to satisfy the Assumption \ref{cond:main} (1). For Lemma \ref{lemma:maxcanonical}, we make a small change as shown in Lemma \ref{lemma:canonicaltopk}. The proof of the lemma can be found in Appendix \ref{app:probek}. %The proof of it can be found in the appendix. 
Thus, we can prove Theorem \ref{thm:topkk} which is essentially the same as the proof of Theorem \ref{thm:probemax} and we omit it here. 
\begin{lemma}\label{lemma:canonicaltopk}
Let $\pi=\{\pi_i\}$ be the set of distributions of random variables and $\wt{\pi}$ be the discretized version of $\pi$. Then, we have:
\begin{enumerate}[1.]
\item For any policy $\sigma$, there exists a canonical policy $\wt{\sigma}$ such that
$$
\bP(\wt{\sigma},\wt{\pi})\ge(1-O(\ep))\bP(\sigma,\pi)-O(\ep)\opt;
$$

\item For any canonical policy $\wt{\sigma}$,
$$
\bP(\wt{\sigma},\pi)\ge \bP(\wt{\sigma},\wt{\pi})-O(\ep)\opt.
$$
\end{enumerate}
\end{lemma}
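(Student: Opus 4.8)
The statement is the top-$k$ analogue of Lemma~\ref{lemma:maxcanonical}, so the plan is to follow that proof essentially verbatim, replacing the objective ``$\max$'' by ``sum of the $k$ largest values'' (recall $k=O(1)$). The one elementary fact I will use in place of monotonicity of $\max$ is: if $a_i\ge b_i$ for all $i$, then the sum of the $k$ largest $a_i$'s is at least the sum of the $k$ largest $b_i$'s (take the $k$ indices attaining the top $k$ of the $b_i$'s). Throughout set $\theta:=\nopt/\ep$. For Part~1 I would start from the (near-)optimal policy $\sigma$ for $\pi$ and build the canonical policy $\wt\sigma$ exactly as for \probemax: $\wt\sigma$ runs the decision rule of $\sigma$ but is fed the discretized realization $D_X(\cdot)$, and I couple each $X_i$ with $\wt X_i$ so that on ``small'' outcomes $\wt X_i=D_X(X_i)\in[X_i-\ep\nopt,\,X_i]$ and the ``large'' outcome $\wt X_i=\theta$ carries probability $\wt p_\theta$ with $\theta\wt p_\theta=\bE[X_i\mathbb{1}_{X_i\ge\theta}]$ (Eq.~\eqref{equ:big}). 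Comparing rewards realization-by-realization: rounding the at most $k$ top small values down costs at most $k\ep\nopt=O(\ep)\opt$ additively plus a $(1-O(\ep))$ factor on the $\Omega(\nopt)$ part of the reward; and moving the mass of the at most $k$ large top-values onto $\theta$ preserves their expected contribution by the choice of $\wt p_\theta$ and, by the monotonicity fact, does not decrease the reward in expectation. This yields $\bP(\wt\sigma,\wt\pi)\ge(1-O(\ep))\bP(\sigma,\pi)-O(\ep)\opt$, exactly as in Lemma~\ref{lemma:maxcanonical}(1).

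For Part~2 I would reuse the same coupling and split the behaviour of each probed item $i$ into three events: (A) $X_i\ge\theta$ and $\wt X_i=\theta$; (B) $\wt X_i=\theta$ while $X_i<\theta$, which has probability $\delta_i:=\wt p_\theta-\Pr[X_i\ge\theta]$; and (C) $\wt X_i=D_X(X_i)<\theta$. On events (A) and (C) the discretized value seen by the policy is the same on $\pi$ and on $\wt\pi$ and the true value dominates the discretized one, so conditioned on (B) never occurring the two coupled runs probe the same items and, by the monotonicity fact, the true top-$k$ sum dominates the discretized one; this already gives $\bP(\wt\sigma,\pi)\ge\bP(\wt\sigma,\wt\pi)$ up to the contribution of event~(B). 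It remains to bound that contribution. For \probemax\ it is zero, because a value $\theta$ caps the reward and $\theta\wt p_\theta=\bE[X_i\mathbb{1}_{X_i\ge\theta}]$ makes the two sides agree exactly; for the top-$k$ sum, up to $k$ probed items can carry a ``phantom'' $\theta$ simultaneously, so a genuine discrepancy appears. I would bound it by $O(\ep)\opt$ using the truncation machinery of Section~\ref{sec:pre} (in particular the martingale identity $\bE[\sum_{v\in\cR}y_v]\le|\cV|$ and Lemma~\ref{lemma:trun}): combining it with Markov's inequality on the reward (any probed item realizing $\ge\theta$, as well as any phantom $\theta$, forces the top-$k$ sum to be $\ge\theta$, an event of probability $O(\ep)$) should show that $\sum_i q_i\delta_i$ is small enough that the expected reward lost on event~(B) is $O(\ep)\opt$, where $q_i$ denotes the probability that item $i$ is probed. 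This gives $\bP(\wt\sigma,\pi)\ge\bP(\wt\sigma,\wt\pi)-O(\ep)\opt$.

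The hard part is precisely this event-(B) estimate in Part~2: everything else is a transcription of Lemma~\ref{lemma:maxcanonical}. The subtlety is that for the top-$k$ objective the extra probability mass that the discretized distributions put on the high value $\theta$ can be absorbed by as many as $k$ of the top values at once, so it does not cancel ``for free'' the way it does for $\max$; bounding this excess by $O(\ep)\opt$ --- rather than by a quantity that grows with the number of probes --- is what forces the extra additive error in Part~2 and is the technical heart of the lemma.
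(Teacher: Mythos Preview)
Your Part~1 plan is correct and matches the paper's proof. The paper also truncates at the first realized value $\ge\theta$, builds the same randomized canonical policy, and splits root-to-leaf paths into $E_1$ (all small) and $E_2$ (one $\theta$); on $E_2$ it writes $\wt W(r)=\theta+\wt W'(\ell_r)$ with $\wt W'$ the top-$(k-1)$ of the remaining small values, which is precisely what your monotonicity fact handles.

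Your Part~2 plan has a gap. The coupling you describe is valid and correctly isolates event~(B) as the only source of discrepancy, but the final bound does not go through as written. Markov on $R_2:=\wt W(r)$ gives at best $\Pr[\text{(B) occurs}]\le\Pr[R_2\ge\theta]\le\bP(\wt\sigma,\wt\pi)/\theta$, while on a (B)-event the discretized reward can be as large as $k\theta$; multiplying yields only $k\,\bP(\wt\sigma,\wt\pi)=O(\opt)$, not $O(\ep)\opt$. Your alternative handle $\sum_i q_i\delta_i$ does not help either: $\delta_i=\bE[(X_i-\theta)^+]/\theta$ can be $\Theta(\ep)$ per item while $\sum_i q_i$ can be $\Theta(m)$, so the union bound on (B) is not small. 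The martingale identity of Section~\ref{sec:pre} controls the expected number of value \emph{increases}, not the probability of seeing a $\theta$ under the inflated discretized distribution, so it does not close this gap.

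The paper avoids this by never discarding the reward on the bad event. It keeps the same $E_1/E_2$ path decomposition as in the \probemax\ proof and, on $E_2$, writes $\wt W(r)=\theta+\wt W'(\ell_r)$. The $\theta$-part is handled \emph{exactly} by the identity $\theta\Pr[\wt X_{\ell_r}=\theta]=\Pr[X_{\ell_r}\ge\theta]\,\bE[X_{\ell_r}\mid X_{\ell_r}\ge\theta]$ together with $\wt\Phi(\ell_r)\le\Phi(\ell_r)$, and is absorbed into $\sum_{r\in E_2}\Phi(r)W(r)$ on the $\pi$-side. The entire additive loss comes from the cross term $\sum_{r\in E_2}\wt\Phi(\ell_r)\Pr[\wt X_{\ell_r}=\theta]\,\wt W'(\ell_r)$, which the paper bounds using the \emph{per-item} fact $\Pr[\wt X_i=\theta]\le\ep$ (this is where it is invoked). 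That per-item bound is strictly sharper than your aggregate Markov bound on $R_2$, and it is what brings the loss down to $O(\ep)\opt$ rather than $O(\opt)$. If you want to rescue the coupling route, you need to import exactly this per-item estimate and retain the residual reward $\wt W'(\ell_r)$ in the bound instead of replacing it by the worst case $k\theta$.
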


\eat{
\begin{lemma}\label{lemma:canonicaltopk}
Let $\pi$ be the joint distribution of random variables and $\wt{\pi}$ be the discretized version of $\pi$. Then, we have:
\begin{enumerate}[1.]
\item For any policy $\sigma$, there exists a canonical policy $\wt{\sigma}$ such that
$$
\bP(\wt{\sigma},\wt{\pi})\ge(1-O(\ep))\bP(\sigma,\pi)-O(\ep)\opt;
$$

\item For any canonical policy $\wt{\sigma}$,
$$
\bP(\wt{\sigma},\pi)\ge \bP(\wt{\sigma},\wt{\pi})-O(\ep)\opt.
$$
\end{enumerate}
\end{lemma}
}

\section{Committed \probek~Problem}\label{sec:probek-commit}
In this section, we prove Theorem \ref{thm:topk_commit}, \ie obtaining a PTAS for the committed \probek. In the committed model, once we probe an item and observe its value realization, we are committed to making an irrevocable decision immediately whether to choose it or not.  If we add the item to the final chosen set $C$, the realized profit is collected. Otherwise, no profit is collected and we are going to probe the next item.
%Similarly, the process of applying a policy $\sigma$ on an instant $\pi$ can be represented as a binary decision tree $T(\sigma,\pi)$. Each node $v$ in $\cT_{\sigma}$ is labeled by a unique random variable $X_{v}$. It has two children---
%\subsection{$\probemax$ on Committed Model}
%\begin{thm}
%There is a polynomial time algorithm that finds a committed policy with the expected profit at least $(1-\ep)\opt$ for the committed \probek~problem. 
%\end{thm}

%\begin{proof}
Let $\sigma^{\ast}$ be the optimal committed policy. Suppose $\sigma^{\ast}$ is going to probe the item $i$ and choose the item $i$ if $X_i$ realizes to a value $\theta\in S_i$, where $S_i$ is the support of the random variable $X_i$. Then $\sigma^{\ast}$ would choose the item $i$ if $X_i$ realizes to a larger value $s\ge \theta$. We call $\theta$ threshold for the item $i$. 
Thus the committed policy $\sigma^{\ast}$ for the committed \probek~problem can be represented as a decision tree $\cT_{\sigma^{\ast}}$.  Every node $v$ is labeled by an unique item $a_v$ and  a threshold $\theta(v)$, which means the policy chooses the item $a_v$ if $X_v$ realizes to a size $s\ge \theta(v)$, and otherwise rejects it. 
%It can be equivalently viewed as a Bernoulli case where  each item $i$ has weight $w_i=\bE[X_i\,|\,X_i\ge \theta]$ and is ``active'' with probability $p_i=\Pr[X_i\ge \theta]$.

Now, we present a stochastic dynamic program for this problem. For each item $i$, we create a set of actions $\cB_i=\{b_{i}^{\theta}\}_{\theta}$, where $b_{i}^{\theta}$ represents the action that we probe item $i$ with the threshold $\theta$. Since we  assume discrete distribution (given explicitly as the input), there are at most a polynomial number of thresholds. Hence the set of action $\cA=\cup_{i\in[n]}\cB_i$ is bounded by a polynomial. The only requirement is that at most one action from $\cB_i$ can be selected. 

Let $I_t$ be the the number of items that have been chosen at the period time $t$. Then we set $\cV=\{0,1,\ldots,k\}$, $I_1=0$. Since we can probe at most $m$ items, we set $T=m$. When we select an action $b_i^{\theta}$ to probe the item $i$ and observe its value realization, say $X_i$, we define the system dynamic functions to be
\begin{equation}
I_{t+1}=f(I_t,b_i^{\theta})=
\left\{
\begin{array}{cl}
I_t+1   & \text{if } X_i\ge \theta, I_t<k,\\
I_t & \text{otherwise; } 
\end{array}
\right.
g(I_t,b_i^{\theta})=
\left\{
\begin{array}{cl}
X_i   & \text{if } X_i\ge \theta, I_t<k,\\
0 & \text{otherwise;}
\end{array}
\right.
\end{equation}
for $I_t\in \cV$ and $t=1,2,\ldots,T$, and $h(I_{T+1})=0$.  Since $k$ is a constant, Assumption \ref{cond:main} is immediately satisfied. 
However, in this case, we cannot directly use Theorem \ref{thm:main}, due to the extra requirement that at most one action from each $\cB_i$ can be selected.
In this case, we need to slightly modify the dynamic program in Section \ref{sec:find-optimal} to satisfy the requirement.
To compute $\dpp(i,\sC,\sCA)$, once we decide the position of the item $i$, we need to choose a threshold for the item.  
Since there are at most a polynomial number of  thresholds, it can be computed at polynomial time.
Hence, again, we 
can find a policy $\sigma$ with profit at least
$$
\opt-O(\ep)\cdot\mopt=\left(1-O(\ep)\right)\opt,
$$   
where $\opt$ denotes the expected profit of the optimal policy and $\mopt=\mmopt=\DP_1(0,\cV)=\opt$. 
%\end{proof}

\section{Committed \pandora~Problem}\label{sec:pandora}
In this section, we obtain a PTAS for the committed \pandora~problem. 
%On the committed model, we are committed to making an irrevocable decision immediately whether to pick it or not whenever we probe an item. If we add the item to the output set $C$, the profit is collected. Otherwise, no profit is collected and we are going to probe the next item.
%Similarly, the process of applying a policy $\sigma$ on an instant $\pi$ can be represented as a binary decision tree $T(\sigma,\pi)$. Each node $v$ in $\cT_{\sigma}$ is labeled by a unique random variable $X_{v}$. It has two children---
%\subsection{$\probemax$ on Committed Model}
%\begin{thm}
%There is a polynomial time algorithm that finds a committed policy with the expected profit at least $(1-\ep)\opt$ for the committed \pandora problem. 
%\end{thm}
%\begin{proof}
This can be proved by an analogous argument to Theorem \ref{thm:topk_commit} in Section \ref{sec:probek-commit}.
Similarly, for each box $i$, we create a set of actions $\cB_i=\{b_{i}^{\theta}\}$, where $b_{i}^{\theta}$ represents the action that we open the box $i$ with threshold $\theta$. 
%The only requirement is that at most one action from $\cB_i$ can be selected. Since we  assume discrete distribution, there are at most a polynomial number thresholds. Hence the set of action $\cA=\cup_{i\in[n]}\cB_i$ is bounded by a polynomial.
Let $I_t$ be the number of boxes that have been chosen at the time period $t$.
Then we set $\cA=\cup_{i\in[n]}\cB_i$, $\cV=\{0,1,\ldots,k\}$, $T=n$ and $I_1=0$. When we select an action $b_i^{\theta}$ to open the box $i$ and observe its value realization, say $X_i$, we define the system dynamic functions to be
\begin{equation}
I_{t+1}=f(I_t,b_i^{\theta})=
\left\{
\begin{array}{cl}
I_t+1   & \text{if } X_i\ge \theta, I_t<k,\\
I_t & \text{otherwise; } 
\end{array}
\right.
g(I_t,b_i^{\theta})=
\left\{
\begin{array}{cl}
X_i-c_i   & \text{if } X_i\ge \theta, I_t<k,\\
-c_i & \text{otherwise;}
\end{array}
\right.
\end{equation}
for $I_t\in \cV$ and $t=1,2,\cdots,T$, and $h(I_{T+1})=0$. Notice that we never take an action $b_i^{\theta}$ for a value $I_t<k$ if $\bE[g(I_t,b_i^{\theta})]=\Pr[X_t\ge \theta]\cdot\bE[X_i\,|\,X_i\ge \theta]-c_i<0$. Then Assumption \ref{cond:main} is immediately satisfied. 
Similar to the Committed \probek~Problem, we can choose 
at most one action from each $\cB_i$.
This can be handled in the same way.
So again we can find a policy $\sigma$ with profit at least
$
\opt-O(\ep)\cdot\mopt=\left(1-O(\ep)\right)\opt,
$   
where $\opt$ denotes the expected profit of the optimal policy and $\mopt=\mmopt=\DP_1(0,\cA)=\opt$. 

%\end{proof}

\section{Stochastic Target}\label{sec:target}
In this section, we consider the stochastic target problem and prove Theorem~\ref{thm:main-skrp}. 
%All notations here are consistent with those in Section~\ref{sec:pre} by default. All missing proofs in this section can be found in Appendix~\ref{app:skrp}.
Define the item set $\cA=\{1,2,\ldots,n\}$. Let the value $I_t$ be the total profits of the items in the knapsack at time period $t$. Then we set $T=m$ and $I_1=0$. When we insert an item $i$ into the knapsack and observe its profit realization, say $X_i$, we define the system dynamic functions to be
\begin{equation} 
I_{t+1}=f(I_t,i)=I_t+X_i, \quad g(I_t,i)=0, \text{ and } 
h(I_{T+1})=
\left\{
\begin{array}{cc}
1 & \text{if }I_{T+1}\ge \bT,\\
0 & \text{otherwise;}
\end{array}
\right.
\end{equation}
for $t=1,2,\cdots,T$.  Then Assumption \ref{cond:main} (2,3) is immediately satisfied. But Assumption \ref{cond:main} (1) is not satisfied for that the value space $\cV$ is not of constant size. Hence, we need discretization. 

\eat{
\begin{thm}
There exists an additive PTAS for Stochastic Knapsack with Random Profits if we relax the threshold to $1-\ep$. 
In other words, for any given constant $\ep>0$, there is a polynomial-time approximation algorithm such that the probability of the total rewards exceeding $1-\ep$ is at least $\opt-\ep$,  where $\opt$ is the resultant probability of an optimal adaptive policy.
\end{thm}
}
%\subsection{Discretization}

We use the same discretization technique as in~\cite{li2013} for the Expected Utility Maximization. The main idea is as follows. Without loss of generality, we set $\bT=1$. For an item $b$, we say $X_b$ is a big realization if $X_b>\ep^4$ and small otherwise. For a big realization of $X_b$, we simple define the discretized version of $\wt{X}_b$ as $\lfloor\frac{X_b}{\ep^5} \rfloor \ep^5$. 
For a small realization of $X_b$, we define $\wt{X}_b=0$ if $X_b<d$ and $\wt{X}_b=\ep^4$ if $d \le X_b \le \ep^4$, where $d$ is a threshold such that $\Pr[X_b \ge d\,|\, X_b \le \ep^4] \ep^4=\bE[X_b\,|\, X_b \le \ep^4]$. For more details, please refer to~\cite{li2013}. 

Let $\bP(\sigma, \pi, 1)$ be the expected objective value of the policy $\sigma$ for the instance $(\pi,1)$, where $\pi=\{\pi_i\}$ denotes the set of reward distributions and $1$ denotes the target. Let $\wt{\pi}$ be the discretized version of $\pi$.
Then, we have following lemmas. %The proofs are essentially the same as that of Lemma 4.2 of~\cite{li2013}. 
\begin{lemma}\label{lemma:skrp1}
For any policy $\sigma$, there exists a canonical policy $\wt{\sigma}$ such that 
$$\bP(\wt{\sigma}, \wt{\pi}, (1-2\ep))\ge\bP(\sigma, \pi, 1)-O(\ep).$$
\end{lemma}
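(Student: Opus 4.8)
The plan is to mimic the proof of Lemma~\ref{lemma:maxcanonical}(1), adapted to the target-probability objective. Recall the discretization: for box $b$ a realization is \emph{big} if $X_b > \ep^4$ and \emph{small} otherwise; big realizations are rounded down to multiples of $\ep^5$, and the small part is collapsed to $\{0,\ep^4\}$ in a mean-preserving way. Given an arbitrary adaptive policy $\sigma$ for the true instance $(\pi,1)$, I would define the canonical policy $\wt\sigma$ to be the policy that, on probing item $i$ and seeing realization $s\in S_i$, behaves exactly as $\sigma$ does on the \emph{discretized} realization $D_{X_i}(s)$ — so $\wt\sigma$ and $\sigma$ traverse literally the same decision tree, identifying a true realization path with its discretized image. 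The objective $\bP(\sigma,\pi,1)=\Pr[\sum_{i\in P}X_i\ge 1]$ then has to be compared with $\bP(\wt\sigma,\wt\pi,1-2\ep)=\Pr[\sum_{i\in P}\wt X_i\ge 1-2\ep]$, where the inserted set $P$ is the (random) set of items probed along the realized path, which is the same set under the coupling.

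The core estimate is: along any fixed realization path, the total discretized value $\sum \wt X_i$ is at least $\sum X_i - 2\ep$ with high probability, and more precisely $\bE[(\,\sum X_i - \sum\wt X_i\,)^+]$ is small. I would split the loss into three sources. First, rounding big realizations down to multiples of $\ep^5$ loses at most $\ep^5$ per item; since the relaxed target is $1$ and $\sum X_i$ is what we care about near the threshold, the number of items whose big realization matters is controlled, and the total big-rounding loss is $O(\ep)$ on the relevant paths — here I would truncate the policy via Lemma~\ref{lemma:trun} so that $\mu(\cR)\le O(1/\ep)$ on every path, bounding the expected number of value-increasing steps. Second, the small realizations: by the mean-preserving choice of the threshold $d$, $\bE[\wt X_b \mathbf{1}[X_b\le\ep^4]] = \bE[X_b\mathbf{1}[X_b\le\ep^4]]$, so the accumulated small part forms (after centering) a martingale along the path, and a Freedman/Azuma-type concentration bound — using that each increment is bounded by $\ep^4$ and that the number of increments is $O(1/\ep)$ worth of "mass" — shows the small-part deficit exceeds $\ep$ only with probability $O(\ep)$. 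Third, on the exceptional paths (where the deficit is large, or where truncation happened) we simply give up the at most $1$ contributed to the objective; these paths carry total probability $O(\ep)$. Combining, on a $1-O(\ep)$ fraction of paths $\sum\wt X_i \ge \sum X_i - 2\ep$, so whenever $\sum X_i\ge 1$ we get $\sum \wt X_i \ge 1-2\ep$, giving $\bP(\wt\sigma,\wt\pi,1-2\ep)\ge \bP(\sigma,\pi,1)-O(\ep)$.

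The main obstacle is the concentration step for the small realizations: unlike the $\probemax$ setting where the objective is an expectation and linearity does most of the work, here the objective is a threshold probability, so a purely first-moment bound on the expected deficit is not by itself enough — I need the deficit to be small \emph{pointwise} on most paths, which is exactly what the martingale concentration delivers. The delicate point is that the relevant filtration is generated by the adaptive policy's observations, so the "number of increments" is itself random and path-dependent; handling this cleanly requires stopping the martingale at the (random) time the path ends and invoking a concentration inequality valid for bounded-increment martingales with a bound on the sum of conditional variances, the latter being $O(\ep^4)\cdot O(1/\ep)=O(\ep^3)$, which makes the deviation probability for an $\ep$-deviation of order $\exp(-\Omega(\ep^{-1}))=O(\ep)$. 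I would cite the corresponding argument in Li--Yuan~\cite{li2013} for the Expected Utility Maximization problem, since the discretization is taken verbatim from there, and only spell out the modifications needed because our objective is $\Pr[\cdot\ge 1-2\ep]$ rather than an expected utility.
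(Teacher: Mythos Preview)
Your approach is essentially the same as the paper's, and in fact the paper's proof is considerably shorter than your sketch: the paper simply invokes the concentration bound proved in \cite{li2013} (stated here as Equation~\eqref{eqn:ask-dis}) as a black box, namely that for any set $F$ of leaves one has $\sum_{v\in F:\,|W(v)-\wt W(v)|\ge 2\ep}\Phi(v)=O(\ep)$, and then observes that the set of leaves with $W(v)\ge 1$ but $\wt W(v)<1-2\ep$ is contained in that bad set. So the whole argument is three lines once you grant the discretization lemma from \cite{li2013}. You eventually say you would cite \cite{li2013} for exactly this, which is the right move; the martingale sketch you give for the small realizations is morally the content of that lemma.

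Two points in your write-up need tightening. First, your invocation of Lemma~\ref{lemma:trun} to bound the number of big realizations is misplaced: that lemma lives in the framework \emph{after} Assumption~\ref{cond:main} holds (constant $|\cV|$), whereas here you are working with the undiscretized instance. The correct count is elementary and does not need Lemma~\ref{lemma:trun}: without loss of generality $\sigma$ stops as soon as the running sum reaches $1$, so along any path the sum of all but the last item is below $1$, hence there are at most $\ep^{-4}$ big realizations, giving total big-rounding loss at most $\ep^{-4}\cdot\ep^5=\ep$. Second, your description of $\wt\sigma$ as ``acting as $\sigma$ does on $D_{X_i}(s)$'' is not well-defined, since $\sigma$'s tree branches on true values, not discretized ones; the standard fix (which the paper uses) is to make $\wt\sigma$ a \emph{randomized} canonical policy that, on observing a discretized value $d$, picks a branch $s$ with $D(s)=d$ with probability proportional to $\Pr[X=s]$, so that under the natural coupling the edge probabilities agree exactly with those in $\cT_\sigma$.
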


%The proof is essentially the same as that of part 1 in Lemma 4.2 of~\cite{li2013}.

%--------------------commented
%--------------------commented
%Notice that we need to compare $\bP(\wt{\sigma}, \pi, (1-O(\ep)))$ with $\bP(\sigma, \pi, 1)$. 
%For notation convenience, we omit the $B$.
\begin{lemma}\label{lemma:skrp2}
For any canonical policy $\wt{\sigma}$,
$$\bP(\wt{\sigma}, \pi, (1-2\ep))\ge\bP(\wt{\sigma}, \wt{\pi}, 1)-O(\ep).$$
\end{lemma}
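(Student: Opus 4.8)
The plan is to analyze a single run of the canonical policy $\wt\sigma$ and to couple the two instances through the rounding map. By construction $\wt\pi_i$ is the law of $\wt X_i=D(X_i)$ for a fixed (possibly atom-splitting) deterministic rounding $D$ applied to $X_i\sim\pi_i$: $D$ rounds a big realization down to a multiple of $\ep^5$ and sends a small realization to $0$ or to $\ep^4$ according to the threshold $d$. Since $\wt\sigma$ is canonical it branches only on the rounded values $D(X_i)$ of the items already inserted, so under this coupling the random set $P$ of inserted items (equivalently, the realization path $\cR$) has exactly the same law whether $\wt\sigma$ is run against $\pi$ (reading $D(X_i)$) or against $\wt\pi$. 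Hence
$$\bP(\wt\sigma,\wt\pi,1)=\Pr\Big[\textstyle\sum_{i\in P}\wt X_i\ge 1\Big]\quad\text{and}\quad\bP(\wt\sigma,\pi,1-2\ep)=\Pr\Big[\textstyle\sum_{i\in P}X_i\ge 1-2\ep\Big]$$
with the \emph{same} random $P$. As the events $\sum_{i\in P}\wt X_i\ge 1$ and $\sum_{i\in P}X_i<1-2\ep$ together force $\sum_{i\in P}(\wt X_i-X_i)>2\ep$, it suffices to prove $\Pr\big[\sum_{i\in P}(\wt X_i-X_i)>2\ep\big]\le O(\ep)$.

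I would do this with a martingale concentration argument. List the items of $P$ in insertion order $i_1,i_2,\dots$, let $\cF_t$ be the history through step $t$, and set $S_t=\sum_{j\le t}(\wt X_{i_j}-X_{i_j})$. Each increment uses fresh randomness independent of $\cF_t$, and $\bE[\wt X_i-X_i]\le 0$ for every $i$: a big realization is rounded down, and on $\{X_i\le\ep^4\}$ the rounding is \emph{mean-preserving} by the very definition of $d$ (which makes $\bE[\wt X_i\,\mathbf{1}[X_i\le\ep^4]]=\bE[X_i\,\mathbf{1}[X_i\le\ep^4]]$). Thus $\{S_t\}$ is a supermartingale with increments bounded by $\ep^4$ in absolute value. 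The crucial point is its predictable quadratic variation $V_\tau=\sum_j\bE[(\wt X_{i_j}-X_{i_j})^2\mid\cF_{j-1}]$: using the definition of $d$ once more one verifies that each conditional second moment is $O(\ep^8)$ times the ``mass'' $\mu_i=\Pr[\wt X_i>0]$ (up to the routine big-realization term), so that $V_\tau\le O(\ep^8)\cdot\mu(\cR)$. Capping each $X_i$ at $\bT=1$ (which is free for the target objective) keeps the value space $O(1)$; since any item that can raise the value raises it by at least $\ep^4-\ep^5$, one gets $\bE[\mu(\cR)]=O(\ep^{-4})$, and Lemma~\ref{lemma:trun} then lets us assume $\mu(\cR)=O(\ep^{-5})$ on every realization path at a cost of $O(\ep)$ absorbed into the statement. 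Hence $V_\tau=O(\ep^3)$, and Freedman's inequality for bounded-increment supermartingales gives $\Pr[S_\tau>2\ep]\le\exp\big(-\Omega(\ep^2/V_\tau)\big)=\exp(-\Omega(\ep^{-1}))=O(\ep)$, completing the argument.

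The main obstacle is exactly the estimate $V_\tau=o(\ep^2)$. Treating $S_\tau$ only as a bounded-increment supermartingale and applying Azuma with per-step range $\ep^4$ gives a bound like $\exp(-\Omega(1/(|P|\ep^6)))$, which is vacuous since $|P|$ may be as large as $n$; and a crude second-moment bound gives only $V_\tau=\Theta(\ep^2)$, so that $2\ep$ is merely a constant number of standard deviations of $S_\tau$ and one obtains only a constant, not $O(\ep)$, tail. Driving $V_\tau$ genuinely below $\ep^2$ forces one to use simultaneously (i) the mean-preserving choice of $d$, which turns each conditional second moment from $O(\ep^8)$ into $O(\ep^8)$ times the small mass $\mu_i$, and (ii) the pathwise control of $\mu(\cR)$ from Lemma~\ref{lemma:trun}, combined with the $\ep^4$-granularity of value increases against the $O(1)$-bounded value space. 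These reductions (capping, canonicalization, truncation, and the second-moment bookkeeping) are carried out as in~\cite{li2013}; Lemma~\ref{lemma:skrp1} is the symmetric statement, proved by the same coupling run in the other direction, where the small-realization rounding can also hurt, which is why its target is relaxed too.
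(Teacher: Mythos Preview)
Your reduction is exactly the paper's: couple the two executions through the rounding map so that the set $P$ of inserted items has the same law in both runs, observe that the gap between the two objectives is at most $\Pr\big[\sum_{i\in P}(\wt X_i-X_i)>2\ep\big]$, and then bound that tail probability. The paper does not re-prove the tail bound; it simply quotes Equation~\eqref{eqn:ask-dis}, which is Lemma~4.2 of~\cite{li2013}, as a black box and finishes in four lines.

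Where you go further is in re-deriving that concentration estimate from scratch, and your derivation is sound. The key variance bound $\bE[(\wt X_i-X_i)^2]\le O(\ep^8)\,\mu_i$ holds for the reason you indicate: on small realizations the choice of $d$ gives $\bE[X_i\,\mathbf 1\{X_i\le\ep^4\}]=\ep^4\Pr[d\le X_i\le\ep^4]$, so even the ``rounded-down'' piece $\bE[X_i^2\,\mathbf 1\{X_i<d\}]\le d\cdot\bE[X_i\,\mathbf 1\{X_i<d\}]\le \ep^8\Pr[d\le X_i\le\ep^4]$ picks up the factor $\mu_i$; the big-realization piece is $O(\ep^{10})\mu_i$. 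Your use of the Lemma~\ref{lemma:trun} martingale with step $\ge\ep^4$ and value bounded by $1$ gives $\bE[\mu(\cR)]=O(\ep^{-4})$, hence the truncation at $\mu(\cR)=O(\ep^{-5})$ costs only $O(\ep)$ in probability, and then $V_\tau=O(\ep^3)$ so Freedman delivers $\exp(-\Omega(\ep^{-1}))$. (Even Chebyshev would suffice here, since $V_\tau/\ep^2=O(\ep)$.) One small point: Lemma~\ref{lemma:trun} is stated under Assumption~\ref{cond:main} with $|\cV|=O(1)$, which is not literally true for the discretized target instance; you are right to re-run the martingale argument directly with the $\ep^4$-granularity rather than invoke the lemma verbatim.

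So your proposal is correct and is essentially the paper's argument with the cited concentration lemma unpacked.
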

The proof of the lemma can be found in Appendix \ref{app:skrp}.
\begin{proof}[The Proof of Theorem \ref{thm:main-skrp}]
Suppose $\sigma^{\ast}$ is the optimal policy with expected value $\opt=\bP(\sigma^{\ast},\pi,1)$. Given an instance $\pi$, we compute the discretized distribution $\wt{\pi}$. By Lemma \ref{lemma:skrp1}, there exists a policy $\wt{\sigma}^{\ast}$ such that
$$
\bP(\wt{\sigma}^{\ast}, \wt{\pi}, (1-2\ep))\ge\bP(\sigma^{\ast}, \pi, 1)-O(\ep)=\opt-O(\ep).
$$
Now, we present a stochastic dynamic  program for the instance $(\wt{\pi},1-2\ep)$. Define the value set $\cV=\{0,\ep^5,2\ep^5,\ldots,1\}$, the item set $\cA=\{1,2,\ldots,n\}$, $T=m$ and $I_1=0$. When we insert an item $i$ into the knapsack and observe its profit realization, say $X_i$, we define the system dynamic functions to be
\begin{equation} 
I_{t+1}=f(I_t,i)=\min\{1,I_t+X_i\}, \quad g(I_t,i)=0, \text{ and } 
h(I_{T+1})=
\left\{
\begin{array}{cc}
1 & \text{if }I_{T+1}\ge 1-2\ep,\\
0 & \text{otherwise;}
\end{array}
\right.
\end{equation}
for $I_t\in \cV$ and $t=1,2,\ldots,T$.  Then Assumption \ref{cond:main} is immediately satisfied. By Theorem \ref{thm:main}, we can find a policy $\sigma$ with value $\bP(\sigma, \wt{\pi},1-2\ep)$ at least
$$
\opt_d-O(\ep)\cdot\mopt\ge\bP(\wt{\sigma}^{\ast}, \wt{\pi}, (1-2\ep))
-O(\ep)=\opt-O(\ep),
$$   
where $\opt_d$ denotes the expected value of the optimal policy for instance $(\wt{\pi},1-2\ep)$ and $\mopt=\mmopt=\DP_1(1,\cA)= 1$. By Lemma \ref{lemma:skrp2}, we  have
$$
\bP(\sigma,\pi,(1-4\ep))\ge\bP(\sigma,\wt{\pi},(1-2\ep))-O(\ep)\ge \opt-O(\ep),
$$ 
which completes the proof.
\end{proof}

\eat{
\begin{proof}[The Proof of Lemma \ref{lemma:skrp1}]
Consider a randomized canonical policy $\wt{\sigma}$ which has the same structure as $\sigma$ and we just replace the size of each edge $w_e$ with the discretized version of $\wt{w}_e$. 

Notice that $\bP(\wt{\sigma}, \wt{\pi}, (1-2\ep))$ is the sum of all paths $R(v)$ with $\wt{W}(v) \ge 1-2\epsilon$. WLOG assume $\cD$ is the set of all leaves $v$ such that $W(v) \ge 1$ and $\wt{\cD}$ is the set of leaves $v$ with $\wt{W}(v) \ge 1-2\epsilon$. Therefore we have

$$\bP(\sigma, \pi, 1)=\sum_{v \in \cD} \Phi(v),~~ \bP(\wt{\sigma}, \wt{\pi}, (1-2\ep))=\sum_{v \in \wt{\cD}} \Phi(v)$$

Consider the set $\Delta=\cD-\wt{\cD}$. For each $v \in \Delta$, we have $W(v) \ge 1$ and $\wt{W}(v)<1-2\ep$. Thus we claim that $|W(v)-\wt{W}(v)|>2\ep$, implying that $\Delta \subseteq \Delta'\doteq \{v: |W(v)-\wt{W}(v)| \ge 2\epsilon\}$. Thus we have
$$\bP(\wt{\sigma}, \wt{\pi}, (1-2\ep)) \ge \bP(\sigma, \pi, 1)-\sum_{v \in \Delta} \Phi(v)
\ge \bP(\sigma, \pi, 1)-\sum_{v \in \Delta'} \Phi(v) \ge  \bP(\sigma, \pi, 1)-O(\ep)
$$
\end{proof}

\begin{proof}[The Proof of Lemma \ref{lemma:skrp2}]
In our case, we focus on the decision tree $\wt{\sigma}$ and assume all $\wt{w}_e$ take discretized value. The gap in $\bP(\wt{\sigma}, \wt{\pi}, 1)-\bP(\wt{\sigma}, \pi, (1-2\ep))$ is due to the fact that we need to remove all those paths ending at a leaf node $v$ with $W(v)<1-2\ep$ while $\wt{W}(v) \ge 1$. WLOG assume for all leaves node in $T(\wt{\sigma})$, $\wt{W}(v) \ge 1$. 

Define $\Delta' =\{v \in \lf, W(v) <1-\ep\}$. Since each $v$ has $\wt{W}(v) \ge 1$, we see $\wt{W}(v)-W(v)>2\ep$, implying $\Delta' \subseteq \Delta=\{v \in \lf: |W(v)-\wt{W}(v)| \ge 2\ep\}$. By the result of Equation~\eqref{eqn:ask-dis}, we see
$$\sum_{v \in \Delta'} \Phi(v)  \le \sum_{v \in \Delta} \Phi(v)=O(\ep)$$ 

Therefore we claim that 
$$\bP(\wt{\sigma}, \pi, (1-2\ep)) \ge \bP(\wt{\sigma}, \wt{\pi}, 1)-\sum_{v \in \Delta'} \Phi(v) \ge \bP(\wt{\sigma}, \wt{\pi}, 1)-O(\ep) $$

\end{proof}
}

\section{Stochastic Blackjack Knapsack}\label{sec:aon}%<<<
In this section, we consider the stochastic blackjack knapsack and prove Theorem \ref{thm:main-sk-zro}. Define the item set $\cA=\{1,2,\ldots,n\}$. Denote $I_t=(I_{t,1},I_{t,2})$ and let $I_{t,1},I_{t,2}$ be the total sizes and total profits of the items in the knapsack at the time period $t$ respectively. We set $T=n$ and $I_1=(0,0)$. When we insert an item $i$ into the knapsack and observe its size realization, say $s_i$, we define the system dynamics function to be
\begin{equation} 
I_{t+1}=f(I_t,i)=(I_{t,1}+s_i,I_{t,2}+p_i),\ g(I_t,i)=0, \text{ and } 
h(I_{T+1})=
\left\{
\begin{array}{cc}
I_{T+1,2} & \text{if }I_{T+1,1}\le \bC,\\
0 & \text{otherwise;}
\end{array}
\right.
\end{equation}
for $t=1,2,\cdots,T$.  Then Assumption \ref{cond:main} (2,3) is immediately satisfied. But Assumption \ref{cond:main} (1) is not satisfied for that the value space $\cV$ is not of constant size. Hence, we need discretization. Unlike the stochastic traget problem, we need to discretize the sizes and profits simultaneously.

Consider a given adaptive policy $\sigma$. For each node $v\in \cT_{\sigma}$, we have $P(v)=\sum_{i\in \cR(v)}p_i$ where $\cR(v)$ is the realization path from root to $v$. 
Define $\cD=\{v\in \lf:W(v)\le \bC\}$ where $\lf$ is the set of leaves on $\cT_\sigma$. Then we have
\begin{equation}\label{equ:knap_sbk}
\bP(\sigma)=\sum_{v\in \cD}\Phi(v)\cdot P(v).
\end{equation}
Without loss of generality, we assume $\bC=1$ and $X_i \in [0,1]$ for any $i\in[n]$.
%(we can always use a discrete distribution to approximate a continuous distribution within an arbitrary error). 
%We choose the same discretization technique which is used in Section \ref{sec:target}. 
%Now we prove several key lemmas as follows. 
Let $\bP(\sigma, \pi, 1)$ be the expected profit of the policy $\sigma$ for the instance $(\pi,1)$, where $\pi=\{\pi_i\}$ denotes the set of size distributions and $1$ denotes the capacity. 
%Consider a given input $(\pi,1)$ and policy $\sigma$. We will show that we can always use a canonical policy $\hat{\sigma}$ to approximate $\sigma$ over $\pi$ after relaxing capacity to $1+4\ep$ such that the gap in the expected profits is at most $O(\ep)$.

\subsection{Discretization}
Next, we show that item profits can be assumed to be bounded $\theta_2=\opt/\ep^2$. We set $\theta_1=\opt/\ep$ and $\theta_3=\opt/\ep^3$. Now, we define an item to be a {\em huge profit} item  if it has profit greater that or equal to $\theta_2$. We use the same discretization technique as in \cite{bhalgat2011} for the stochastic knapsack. For a huge item $b_i$ with size $X_i$ and profit $p_i$, we define a new size $\hat{X}_i$ and profit $\hat{p}_i$ as follows: for $\forall s\le 1$ 
\begin{equation}\label{equ:knap_scale}
\Pr[\hat{X}_i=s]=\Pr[\hat{X}_i=s]\cdot \frac{p_i}{\theta_2}, \quad \Pr[\hat{X}_i=1+4\ep]=1-\sum_{s\le 1}\Pr[\hat{X}_i=s]
\end{equation}
and $ \hat{p}_i=\theta_2$. In Lemma \ref{lemma:knap_profit_bound}, we show that this transformation can be performed with only an $O(\ep)$ loss in the optimal profit.
Before to prove the lemma, we need following useful lemma.
\begin{lemma}\label{lemma:knap_profit_trun} 
For any policy $\sigma$ on instance $(\pi,\bC)$, there exists a policy $\sigma'$ such that $\bP(\sigma',\pi,\bC)=(1-O(\ep))\bP(\sigma,\pi,\bC)$ and in any realization path, the sum of profit of items except the last item that $\sigma'$ inserts is less than $\theta_1$.
\end{lemma}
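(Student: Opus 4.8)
The plan is to mimic the truncation argument of Lemma~\ref{lemma:trun}, but now cut along each realization path at the moment the accumulated profit of the \emph{committed so far} items (excluding the one currently being inserted) first reaches the threshold $\theta_1=\opt/\ep$. Concretely, given $\sigma$ on instance $(\pi,\bC)$, I would walk down the decision tree $\cT_\sigma$ and, on each root-to-leaf path, let $v_r$ be the first node at which $P(\text{parent of }v_r)\ge\theta_1$ — i.e. the partial sum of profits of all strictly earlier items on the path is already at least $\theta_1$. At such a node we stop: $\sigma'$ inserts the item $a_{v_r}$ (so that the ``last item'' clause is respected) and then halts, collecting whatever profit it has if it does not overflow. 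This immediately gives the structural guarantee in the statement: on any realization path of $\sigma'$, the sum of profits of all items other than the last inserted one is $<\theta_1$.

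The main step is bounding the profit loss $\bP(\sigma,\pi,\bC)-\bP(\sigma',\pi,\bC)$. The key observation is a Markov-type bound: since $\opt=\bP(\sigma^\ast,\pi,\bC)\ge\bP(\sigma,\pi,\bC)$ and the reward collected on a successful (non-overflowing) leaf equals the total profit $P(v)$ on that path, the probability that $\sigma$ ever accumulates committed profit $\ge\theta_1$ along a successful path is small. More precisely, let $F$ be the set of cut nodes $v_r$; every such node lies on a path whose profit prefix (excluding $a_{v_r}$) is $\ge\theta_1$, so $\sum_{v_r\in F}\Phi(v_r)\cdot\theta_1\le\bE[\text{profit accumulated by }\sigma]\le\opt$ (the profit is nondecreasing along a path and bounded below on the truncation event), hence $\sum_{v_r\in F}\Phi(v_r)\le\ep$. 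Now the profit that $\sigma$ collects on the subtrees hanging below the cut nodes is at most $\sum_{v_r\in F}\Phi(v_r)\cdot\bP(v_r)$, and since each $\bP(v_r)\le\DP_1(\cdot)=\mopt=\opt$ for this problem (the value-space analysis of Section~\ref{sec:aon} gives $\mopt=\opt$), this is at most $\ep\cdot\opt\le\ep\cdot\bP(\sigma,\pi,\bC)/(1-\ep)=O(\ep)\bP(\sigma,\pi,\bC)$ once we note $\bP(\sigma,\pi,\bC)=(1-O(\ep))\opt$ may be assumed (or, if $\bP(\sigma,\pi,\bC)$ is much smaller than $\opt$ the lemma is vacuous since we may take $\sigma'=\sigma$ after noting the last-item clause is trivially met — I will handle this edge case explicitly). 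The truncated policy $\sigma'$ keeps whatever profit $\sigma$ had up to and including $a_{v_r}$ on overflow-free paths, so $\bP(\sigma',\pi,\bC)\ge\bP(\sigma,\pi,\bC)-O(\ep)\bP(\sigma,\pi,\bC)=(1-O(\ep))\bP(\sigma,\pi,\bC)$.

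The subtle point I expect to be the main obstacle is making the Markov bound rigorous in the presence of the ``blackjack'' overflow rule: profit is only actually \emph{collected} on non-overflowing leaves, whereas the quantity I want to bound is the committed profit at an \emph{internal} cut node $v_r$, whose descendants may all overflow and collect nothing. So I cannot directly say ``expected collected profit $\ge\theta_1\cdot\Pr[\text{reach }F]$''. The fix is to observe that on the event of reaching $v_r$, regardless of what happens later, $\sigma'$ itself (by stopping right there, and since $a_{v_r}\in[0,1]$ so a single further item adds at most $1$ to the size) collects profit $\ge\theta_1$ whenever its own partial load stays $\le\bC$ — but even that can overflow. The cleanest route is instead to bound things on the side of $\sigma'$: after truncation, $\bP(\sigma',\pi,\bC)$ loses, relative to $\sigma$, only the conditional expected future profit $\bP(v_r)$ below each cut node, which is $\le\mopt=\opt$ per node, and $\sum\Phi(v_r)\le\ep$ follows because along each such path $\sigma$'s own objective (profit if no overflow) would have needed the prefix sum $\theta_1$ to be ``paid for'' — formalized by noting that $\opt\ge\bP(\sigma)\ge\sum_{v\in\cD}\Phi(v)P(v)\ge\theta_1\sum_{v\in\cD,\,P(\cdot)\ge\theta_1}\Phi(v)$; one then relates the cut-node mass $\sum_{v_r\in F}\Phi(v_r)$ to this ``heavy successful leaf'' mass plus the mass of heavy-but-overflowing leaves, the latter handled by the same overflow-loss accounting already implicit in comparing $\bP(\sigma)$ to $\opt$. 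I will present this accounting carefully, as it is the only place the blackjack feature bites.
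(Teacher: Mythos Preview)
Your approach is the paper's: truncate when accumulated profit crosses $\theta_1$, then bound the loss via a Markov-type inequality. The paper's proof is two sentences; it simply asserts $\sum_{v\in F}\Phi(v)\le\ep$ and that the loss is at most $\ep\cdot\opt$.

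However, your specific cut is off by one in a way that both breaks the structural conclusion and \emph{creates} the blackjack obstacle you then wrestle with. You take $v_r$ to be the first node with $P(\text{parent of }v_r)\ge\theta_1$ and then \emph{insert} $a_{v_r}$. But then the last inserted item is $a_{v_r}$, and the sum of profits excluding it is exactly $P(\text{parent of }v_r)\ge\theta_1$, contradicting the lemma's conclusion. Your sentence ``this immediately gives the structural guarantee'' is therefore false as stated. Moreover, inserting $a_{v_r}$ can overflow, which is precisely why your Markov argument gets tangled in ``heavy-but-overflowing leaves''.

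The fix (and what the paper does) is to \emph{not} insert the item at the cut node. With your indexing: stop at $v_r$ without inserting $a_{v_r}$. Then the last inserted item is $a_{\mathrm{parent}(v_r)}$, and the sum excluding it is $P(\text{grandparent of }v_r)<\theta_1$, since $v_r$ was the \emph{first} node satisfying the threshold condition. This gives the structural property directly. For the Markov bound, the blackjack issue evaporates: since the original $\sigma$ was about to insert $a_{v_r}$, a sensible policy had not yet overflowed, so the accumulated size at that moment is $\le\bC$ and $\sigma'$ deterministically collects $P(\text{parent of }v_r)\ge\theta_1$ on this event. Hence $\opt\ge\bP(\sigma')\ge\theta_1\sum_{v_r\in F}\Phi(v_r)$, giving $\sum_{v_r\in F}\Phi(v_r)\le\ep$ with no leakage through overflowing descendants. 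Finally, the per-cut loss is not $\bP(v_r)$ (which includes the already-accumulated profit and is \emph{not} $\le\opt$; your claim $\mopt=\opt$ is wrong here before discretization), but $\bP(v_r)-P(\text{parent of }v_r)$, which is bounded by the SBK value of the residual instance and hence $\le\opt$ by monotonicity in items and capacity. Total loss $\le\ep\cdot\opt$, which for (near-)optimal $\sigma$ is $O(\ep)\bP(\sigma)$.
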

\begin{proof}
We interrupt the process of the policy $\sigma$ on a node $v$ when the first time that $P(v)\ge \theta_1$ to get a new policy $\sigma'$, \ie, we have a truncation on the node $v$ and do not add items (include $v$) any more in the new policy $\sigma'$. 
Let $F$ be the set of the nodes on which we have truncation. Then we have $\sum_{v\in F}\Phi(v)\le \ep$. Thus, the total profit loss is equal to
$
\sum_{v\in F}\Phi(v)\opt\le \ep\opt.
$
\end{proof}
W.l.o.g, we assume that all (optimal or near optimal) policies $\sigma$ considered in this section satisfy the following property.
\begin{enumerate}[(P3)]
\item  In any realization path, the sum of profit of items except the last item that $\sigma$ inserts is less than $\theta_1$.
\end{enumerate}
\begin{lemma}\label{lemma:knap_profit_bound} 
Let $\pi$ be the distribution of size and profit for items and $\hat{\pi}$ be the scaled version of $\pi$ by Equation \eqref{equ:knap_scale}. Then, the following statement holds:
\begin{enumerate}
\item For any policy $\sigma$, there exists a policy $\hat{\sigma}$ such that $$\bP(\hat{\sigma},\hat{\pi},\bC)=(1-O(\ep))\bP(\sigma,\pi,\bC).$$  

\item For any policy $\sigma$,
 $$\bP(\sigma,\pi,\bC)=(1-O(\ep))\bP(\sigma,\hat{\pi},\bC).$$ 
\end{enumerate}
\end{lemma}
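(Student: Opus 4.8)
The whole argument turns on the role of \emph{huge} items, so I would first record the structural facts about them that follow from Property~(P3). A huge item $b_i$ has $p_i\ge\theta_2=\opt/\ep^2>\theta_1$, so if $\sigma$ (which we may assume satisfies (P3)) ever inserted $b_i$ in a non-last position, the total profit of the non-last items on that path would already exceed $\theta_1$, contradicting (P3); hence $b_i$ is always the last item inserted on every realization path through the node $v$ labelled $b_i$. Two consequences: (i) the nodes labelled by huge items form an antichain of $\cT_\sigma$ (no path contains two of them), so $\sum_{v\text{ huge}}\Phi(v)\le1$ and the subtree hanging off such a $v$ consists only of the leaves produced by the realizations of $b_i$; (ii) the profit $P(v)$ accumulated before reaching such a $v$ is at most $\theta_1=\ep\,\theta_2\le\ep\,p_i$, an $O(\ep)$-fraction of $b_i$'s own profit. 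I would also note that a huge item fits with tiny probability: the policy ``insert $b_i$ and stop'' has value $\Pr[X_i\le\bC]\cdot p_i\le\opt$, so $\Pr[X_i\le\bC]\le\opt/p_i\le\ep^2$; in particular $\Pr[X_i=s]\le\opt/p_i<\theta_2/p_i$ for every $s\le\bC$, which is precisely what makes the rescaled weights $\frac{p_i}{\theta_2}\Pr[X_i=s]$ in \eqref{equ:knap_scale} legitimate probabilities and the residual mass assigned to the overflow value $1+4\ep$ nonnegative (so ``$X_i\in[0,1]$'' here must be read as applying to the non-huge items).

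Next I would isolate the single identity that does all the work: since $\hat p_i=\theta_2$ and $\Pr[\hat X_i\le t]=\frac{p_i}{\theta_2}\Pr[X_i\le t]$ for every $t\le\bC$, for every load $w\le\bC$
\[
\Pr[\hat X_i\le\bC-w]\cdot\hat p_i \;=\; \Pr[X_i\le\bC-w]\cdot p_i .
\]
Now fix a policy $\sigma$ (satisfying (P3)) and a node $v$ labelled by a huge item $b_i$, with load $W(v)$ and pre-profit $P(v)$. By (i) the entire contribution of the subtree at $v$ to the objective is $\Phi(v)\Pr[X_i\le\bC-W(v)]\bigl(P(v)+p_i\bigr)$ under $\pi$ and $\Phi(v)\Pr[\hat X_i\le\bC-W(v)]\bigl(P(v)+\theta_2\bigr)$ under $\hat\pi$. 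Splitting each into a ``profit-term'' and a ``$P(v)$-term'', the profit-terms are \emph{equal} by the identity above, and the two $P(v)$-terms are both $O(\ep)$-small: the one under $\pi$ is $\le\ep\,\Phi(v)\Pr[X_i\le\bC-W(v)]\,p_i$ (as $P(v)\le\ep p_i$), so summed over all huge nodes it is $\le\ep$ times the total objective, i.e.\ $O(\ep)\opt$; the one under $\hat\pi$ is $\le\Phi(v)\cdot\ep^2\cdot\theta_1=\ep\opt\cdot\Phi(v)$ (using $\Pr[\hat X_i\le\bC-W(v)]\le\frac{p_i}{\theta_2}\Pr[X_i\le\bC]\le\ep^2$ and $P(v)<\theta_1$), so by (i) it also sums to $O(\ep)\opt$.

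For part~1, given $\sigma$ I would take $\hat\sigma$ to be the policy on $\hat\pi$ with the same decision tree, reading a realization $\hat X_i=1+4\ep$ of a scaled huge item as a terminal overflow (contributing $0$); the contributions of every non-huge item and the whole branching structure above any huge node are untouched, so by the previous paragraph $\bP(\hat\sigma,\hat\pi,\bC)\ge\bP(\sigma,\pi,\bC)-O(\ep)\opt\ge(1-O(\ep))\bP(\sigma,\pi,\bC)$ (the last step used only for $\sigma=\sigma^\ast$, where $\bP(\sigma,\pi,\bC)=\opt$). Part~2 is the same computation read symmetrically: for a fixed $\sigma$, $\bP(\sigma,\pi,\bC)$ and $\bP(\sigma,\hat\pi,\bC)$ agree on all non-huge contributions and differ on the huge ones by at most $O(\ep)\opt$, hence $\bP(\sigma,\pi,\bC)=\bP(\sigma,\hat\pi,\bC)\pm O(\ep)\opt=(1-O(\ep))\bP(\sigma,\hat\pi,\bC)$. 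A truncation step in the style of Lemma~\ref{lemma:knap_profit_trun} ensures the policies we actually apply this to satisfy (P3), so the ``huge $=$ last'' reduction is always available.

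The only delicate point is the bookkeeping of the ``$P(v)$-term'': one needs \emph{both} that $P(v)=O(\ep)\,p_i$ (so the profit dragged along before a huge item is negligible next to $\theta_2$) \emph{and} that a huge item fits with probability $O(\ep^2)$ (so that \eqref{equ:knap_scale} is well-defined and the $\hat\pi$-side error stays $O(\ep)\opt$), and these two estimates squeeze the accumulated-profit contamination down to $O(\ep)\opt$ from opposite sides. Everything else — that the overflow branch contributes nothing, that non-huge items see identical distributions under $\pi$ and $\hat\pi$, and that at most one huge node per path keeps $\sum_v\Phi(v)\le1$ — is routine, and the lemma then feeds directly into the subsequent simultaneous discretization of sizes and profits.
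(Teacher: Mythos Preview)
Your proof is correct and follows the paper's route: use (P3) to force every huge item to be the last on its path, then compare the per-node contributions under $\pi$ and $\hat\pi$ via the identity $\Pr[\hat X_i\le t]\,\theta_2=\Pr[X_i\le t]\,p_i$. Where you split each contribution into a ``profit-term'' and a ``$P(v)$-term'' and bound the latter two separately, the paper instead uses the one-line inequalities $\frac{p_v}{\theta_2}(P(v)+\theta_2)\ge P(v)+p_v$ (since $p_v\ge\theta_2$) for Part~1 and $\frac{p_v}{\theta_2}(P(v)+\theta_2)=p_v\bigl(1+P(v)/\theta_2\bigr)\le(1+\ep)(p_v+P(v))$ (since $P(v)\le\theta_1=\ep\theta_2$) for Part~2 --- cosmetically tighter but the same content; your check that $\Pr[X_i\le\bC]\le\ep^2$ for huge items, so that the rescaling \eqref{equ:knap_scale} is actually a probability distribution, is a well-definedness detail the paper does not spell out.
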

\begin{proof}[Proof of Lemma \ref{lemma:knap_profit_bound}] 
For the first result, by Lemma \ref{lemma:knap_profit_trun}, there exists a policy $\hat{\sigma}$ such that $\bP(\hat{\sigma},\pi,\bC)=(1-O(\ep))\bP(\sigma,\pi,\bC)$ and in any realization path, there are at most one huge profit item and always at the end of the policy. For huge profit item $v$, the expected profit contributed by the realization path from root to $v$ to $\bP(\hat{\sigma},\pi,\bC)$ is 
$$
\Phi(v)\cdot \Pr[X_v\le \bC-W(v)]\cdot (P(v)+p_v).
$$
In $\bP(\hat{\sigma},\hat{\pi},\bC)$ with scaled distributions on huge profit items, the expected profit contributed by the realization path from the root to $v$ is 
\begin{align*}
&\quad\Phi(v)\cdot \Pr[\hat{X}_v\le \bC-W(v)]\cdot (P(v)+\theta_2)\\
&=\Phi(v)\cdot \left(\Pr[X_v\le \bC-W(v)]\cdot\frac{p_v}{\theta_2}\right)\cdot (P(v)+\theta_2).
\end{align*}
Since $v$ is a huge profit item, we have $p_v\ge \theta_2$, which implies $\frac{p_v}{\theta_2}\cdot (P(v)+\theta_2)\ge P(v)+p_v$. This completes the proof of the first part.

Now, we prove the second part. By Property (P3), for a huge item $v$, we have $P(v)\le \opt/\ep$. Then we have
$$
\frac{p_v}{\theta_2}\cdot (P(v)+\theta_2)
=p_v\cdot \left(1+\frac{P(v)}{\theta_2}\right)
\le p_v\cdot(1+\ep)
\le(1+\ep)(p_v+P(v)).
$$
This completes the proof of the second part.
\end{proof}

In order to discretize the profit, we define the approximate profit $\wt{\bP}(\sigma,\hat{\pi})=\sum_{v\in \cD}\Phi(v)\cdot \wt{P}(v)$ where
\begin{equation}
\wt{P}(v)=\theta_3 \cdot\left[1-\prod_{i\in \cR(v)}\left(1-\frac{p_i}{\theta_3}\right)\right]
\end{equation}
Lemma \ref{lemma:knap_skb_disprofit} below can be used to bound the gap between the approximate profit and the original profit.

\begin{lemma}\label{lemma:knap_skb_disprofit}
For any adaptive policy $\sigma$ for the scaled distribution $\hat{\pi}$, we have
$$
\bP(\sigma, \hat{\pi},\bC)\ge \wt{\bP}(\sigma,\hat{\pi},\bC)\ge (1-O(\ep))\bP(\sigma,\hat{\pi},\bC).
$$
\end{lemma}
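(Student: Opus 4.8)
The plan is to establish both inequalities \emph{pointwise} over the leaves of $\cT_\sigma$ and then take the $\Phi(v)$-weighted sum, since both $\bP(\sigma,\hat\pi,\bC)=\sum_{v\in\cD}\Phi(v)\cdot P(v)$ and $\wt{\bP}(\sigma,\hat\pi,\bC)=\sum_{v\in\cD}\Phi(v)\cdot\wt{P}(v)$ run over the same index set $\cD=\{v\in\lf:W(v)\le\bC\}$. Fix a leaf $v\in\cD$ and abbreviate $x_i:=\hat{p}_i/\theta_3$ for $i\in\cR(v)$, so that $P(v)=\sum_{i\in\cR(v)}\hat{p}_i=\theta_3\sum_i x_i$ and $\wt{P}(v)=\theta_3\bigl(1-\prod_i(1-x_i)\bigr)$. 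Note first that each $x_i\in[0,1)$: after the scaling of Lemma~\ref{lemma:knap_profit_bound} every item has profit at most $\theta_2=\opt/\ep^2<\opt/\ep^3=\theta_3$.

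For the upper inequality $\bP\ge\wt{\bP}$, I would simply invoke the Weierstrass product inequality $\prod_i(1-x_i)\ge 1-\sum_i x_i$, which gives $\wt{P}(v)\le\theta_3\sum_i x_i=P(v)$ for every $v$, and then sum. For the lower inequality, the key step is to show $\sum_i x_i=O(\ep)$: by Property~(P3) the profits of all items on $\cR(v)$ except the last sum to less than $\theta_1=\opt/\ep$, and the last item contributes at most $\theta_2=\opt/\ep^2$ (again by the profit bound from Lemma~\ref{lemma:knap_profit_bound}), so $P(v)\le\theta_1+\theta_2\le 2\ep\,\theta_3$ and hence $\sum_i x_i=P(v)/\theta_3\le 2\ep$. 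Then, combining $1-x\le e^{-x}$ with the elementary bound $e^{-t}\le 1-t+t^2/2$ valid for $t\ge0$, I get $\wt{P}(v)=\theta_3\bigl(1-\prod_i(1-x_i)\bigr)\ge\theta_3\bigl(1-e^{-\sum_i x_i}\bigr)\ge\theta_3\bigl(\sum_i x_i-\tfrac12(\sum_i x_i)^2\bigr)=P(v)\bigl(1-\tfrac12\sum_i x_i\bigr)\ge(1-\ep)P(v)$. Summing over $v\in\cD$ with weights $\Phi(v)$ yields $\wt{\bP}(\sigma,\hat\pi,\bC)\ge(1-\ep)\bP(\sigma,\hat\pi,\bC)$, which is the claimed $(1-O(\ep))$ bound.

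The only delicate point — the part I expect to be the real obstacle — is the bound $P(v)=O(\ep)\cdot\theta_3$; it is exactly here that Property~(P3) (together with the per-item profit cap $\hat{p}_i\le\theta_2$ inherited from the previous discretization step) is indispensable, because without it the telescoping product $\prod_i(1-x_i)$ could collapse to something exponentially small while $P(v)$ stays large, and the second inequality would simply be false. Once $\sum_i x_i\le 2\ep$ is in hand, the remainder is just the two scalar inequalities above, so no further work is needed.
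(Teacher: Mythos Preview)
Your proof is correct and takes essentially the same approach as the paper: both argue pointwise over leaves $v\in\cD$, use the Weierstrass product inequality $\prod_i(1-x_i)\ge 1-\sum_i x_i$ for the upper bound, and use a second-order bound on $\prod_i(1-x_i)$ together with $P(v)\le\theta_1+\theta_2$ (from Property~(P3) plus the per-item cap $\hat{p}_i\le\theta_2$) for the lower bound. The paper writes the second-order step directly as $\prod_i(1-x_i)\le 1-\sum_i x_i+(\sum_i x_i)^2$ rather than routing through $e^{-t}\le 1-t+t^2/2$, but this is the same argument with a slightly different constant.
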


\begin{proof}
Fix a node $v$ on the tree $\cT_{\sigma}$.
For the left side, we have
\begin{equation*}
\wt{P}(v)=\theta_3 \cdot\left[1-\prod_{i\in \cR(v)}\left(1-\frac{p_i}{\theta_3}\right)\right]
\le\theta_3 \cdot\left[1-\left(1-\sum_{i\in \cR(v)}\frac{p_i}{\theta_3}\right)\right]
=\sum_{i\in\cR(v)}p_i=P(v).
\end{equation*}
For the right size, we have
\begin{align*}
\wt{P}(v)&=\theta_3- \theta_3\cdot\left[\prod_{i\in \cR(v)}\left(1-\frac{p_i}{\theta_3}\right)\right]\\
&\ge\theta_3 -\theta_3\cdot\left[1-\sum_{i\in \cR(v)}\frac{p_i}{\theta_3}+\left(\sum_{i\in \cR(v)}\frac{p_i}{\theta_3}\right)^2\right]\\
&=\left(\sum_{i\in\cR(v)}p_i\right)\cdot \left[1-\frac{\sum_{i\in \cR(v)}p_i}{\theta_3}\right]\\
&\ge (1-O(\ep))P(v),
\end{align*}
where the last inequality holds by Property (P3) that $P(v)\le \theta_1+\theta_2$. 
\end{proof}

Now, we choose the same discretization technique to discretize the sizes with parameter $\ep^3$ which is used in Section \ref{sec:target}. For an item $b$, we say $X_b$ is a big realization if $X_b>\ep^{4\times3}$ and small otherwise. For a big realization of $X_b$, we simple define the discretized version of $\wt{X}_b$ as $\lfloor\frac{X_b}{\ep^{5\times 3}} \rfloor \ep^{5\times 3}$. 
For a small realization of $X_b$, we define $\wt{X}_b=0$ if $X_b<d$ and $\wt{X}_b=\ep^{4\times 3}$ if $d \le X_b \le \ep^{4\times 3}$, where $d$ is a threshold such that $\Pr[X_b \ge d\,|\, X_b \le \ep^{4\times 3}] \ep^{4\times 3}=\bE[X_b\,|\, X_b \le \ep^{4\times 3}]$. For more details, please refer to~\cite{li2013}.

\begin{lemma} \label{lemma:zro-canonical}
Let $\hat{\pi}$ be the distribution of size and profit for items and be $\wt{\pi}$ be the discretized version of $\hat{\pi}$. Then, the following statements holds:
\begin{enumerate}
\item For any policy $\sigma$, there exists a canonical policy $\wt{\sigma}$ such that 
$$\bP(\wt{\sigma}, \wt{\pi}, (1+2\ep))\ge(1-O(\ep))\bP(\sigma, \hat{\pi}, 1).$$
\item For any canonical policy $\wt{\sigma}$,
$$\bP(\wt{\sigma}, \hat{\pi}, (1+2\ep))\ge(1-O(\ep))\bP(\wt{\sigma}, \wt{\pi}, 1).$$
\end{enumerate}
\end{lemma}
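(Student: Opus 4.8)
The plan is to follow the template of Lemmas~\ref{lemma:skrp1} and~\ref{lemma:skrp2} from the stochastic target section, with two adjustments: profits now appear as multiplicative weights in the objective, and the capacity is relaxed \emph{upward} (to $1+2\ep$) because overflow hurts rather than helps here. For part~1, fix a policy $\sigma$ for the scaled instance $(\hat\pi,1)$ satisfying Property~(P3), and let $\wt\sigma$ be the randomized canonical policy with the same tree structure as $\cT_\sigma$ and each edge weight $w_e$ replaced by its size-discretization $\wt w_e$, so that the reaching probability $\Phi(v)$ — and the path profit $P(v)=\sum_{i\in\cR(v)}p_i$, which depends only on the item sequence — is unchanged at every leaf $v$. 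Writing $W(v)$ and $\wt W(v)$ for the total true and discretized size along $\cR(v)$, we then have $\bP(\sigma,\hat\pi,1)=\sum_{v:\,W(v)\le 1}\Phi(v)P(v)$ and $\bP(\wt\sigma,\wt\pi,1+2\ep)=\sum_{v:\,\wt W(v)\le 1+2\ep}\Phi(v)P(v)$.

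The second step is the symmetric-difference bound. A leaf $v$ contributing to $\bP(\sigma,\hat\pi,1)$ but not to $\bP(\wt\sigma,\wt\pi,1+2\ep)$ satisfies $W(v)\le 1<1+2\ep<\wt W(v)$, hence $|W(v)-\wt W(v)|>2\ep$, so
\[
\bP(\wt\sigma,\wt\pi,1+2\ep)\ \ge\ \bP(\sigma,\hat\pi,1)\;-\!\!\!\sum_{v:\,|W(v)-\wt W(v)|\ge 2\ep}\!\!\!\Phi(v)\,P(v).
\]
By Property~(P3) and the profit-bounding of Lemma~\ref{lemma:knap_profit_bound}, every item has profit at most $\theta_2$ and the profit accumulated before the last item on any path is below $\theta_1$, so $P(v)<\theta_1+\theta_2=O(\opt/\ep^2)$ for every leaf; hence the loss is at most $O(\opt/\ep^2)\cdot\sum_{v:\,|W(v)-\wt W(v)|\ge 2\ep}\Phi(v)$. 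Everything thus reduces to the discretization-error estimate
\[
\sum_{v:\,|W(v)-\wt W(v)|\ge 2\ep}\Phi(v)=O(\ep^3),
\]
which makes the loss $O(\ep)\cdot\opt$ — the bound we need, since the policies to which the lemma is applied are near-optimal. Part~2 is handled symmetrically: starting from a canonical $\wt\sigma$, a leaf counted by $\bP(\wt\sigma,\wt\pi,1)$ but not by $\bP(\wt\sigma,\hat\pi,1+2\ep)$ again has $|W(v)-\wt W(v)|>2\ep$, and the same two estimates close the argument.

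The main obstacle is exactly the estimate $\sum_{v:\,|W(v)-\wt W(v)|\ge 2\ep}\Phi(v)=O(\ep^3)$ for the size discretization carried out at scale $\ep^3$ (cut-off $\ep^{12}$, rounding granularity $\ep^{15}$); this is the ingredient imported from \cite{li2013} and already used in Section~\ref{sec:target}, which I would re-invoke here rather than re-derive. Along any realization path the error from \emph{big} realizations is deterministic and at most (number of big realizations)$\,\times\ep^{15}$, hence $O(\ep^3)$ and harmless, while the errors from \emph{small} realizations have conditional mean zero — this is precisely how the threshold $d$ defining $\wt X_b$ was chosen — so their running sums form a martingale with increments at most $\ep^{12}$ and small predictable quadratic variation (with the number of effective steps and the total size controlled through Lemma~\ref{lemma:trun} and Property~(P3)); a Freedman/Azuma-type inequality then bounds the probability that this martingale ever exceeds $\ep$ by far less than $\ep^3$. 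One small extra check is that the huge-profit items of $\hat\pi$, which carry an atom at size $1+4\ep$, stay strictly above $1+2\ep$ after discretization, so those realizations overflow on both sides and introduce no discrepancy. Combining the concentration bound with the per-leaf profit bound $O(\opt/\ep^2)$ yields both inequalities.
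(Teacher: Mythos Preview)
Your proposal is correct and follows essentially the same route as the paper: construct the randomized canonical policy with identical tree structure so that $\Phi(v)$ and $P(v)$ are preserved, bound the symmetric difference of the ``successful'' leaf sets by $\{v:|W(v)-\wt W(v)|\ge 2\ep\}$, invoke the discretization estimate from~\cite{li2013} (at scale $\ep^3$, yielding total probability $O(\ep^3)$), and multiply by the per-leaf profit cap $P(v)\le \theta_1+\theta_2=O(\opt/\ep^2)$ coming from Property~(P3). Your extra remarks---the martingale sketch behind the concentration bound and the observation that the $1+4\ep$ atom on huge-profit items survives discretization---are sound supplementary detail that the paper omits.
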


\subsection{Proof of Theorem \ref{thm:main-sk-zro}}
Now, we ready to prove Theorem \ref{thm:main-sk-zro}.
\begin{proof}[The proof of Theorem \ref{thm:main-sk-zro}]
Suppose $\sigma^{\ast}$ is the optimal policy with expected profit $\opt=\bP(\sigma^{\ast},\pi,1)$. Given an instance $\pi$, we compute the scaled distribution $\hat{\pi}$ and discretized distribution $\wt{\pi}$. By Lemma \ref{lemma:knap_skb_disprofit}, Lemma \ref{lemma:zro-canonical} (1) and Lemma \ref{lemma:knap_profit_bound} (1),  there exist a policy $\wt{\sigma}^{\ast}$ such that
\begin{align*} 
&\quad\wt{\bP}(\wt{\sigma}^{\ast}, \wt{\pi}, (1+2\ep))\\ 
&\ge (1-O(\ep))\bP(\wt{\sigma}^{\ast}, \wt{\pi}, (1+2\ep))\quad \text{[Lemma \ref{lemma:knap_skb_disprofit}]}\\
&\ge (1-O(\ep))\bP(\sigma^{\ast},\hat{\pi}, 1)\quad \text{[Lemma \ref{lemma:zro-canonical} (1)]}\\
&\ge(1-O(\ep))\bP(\sigma^{\ast}, \pi, 1)\quad \text{[Lemma \ref{lemma:knap_profit_bound} (1)]}\\
&= (1-O(\ep))\opt.
\end{align*}
Now, we present a stochastic dynamic program for the instance $(\wt{\pi},1+2\ep)$. Define the value set $\cV=\{0,\ep^{5\times 3},2\ep^{5\times 3},\ldots,1+3\ep\}\times \{0,1\}$ and the item set $\cA=\{1,2,\ldots,n\}$. We set $T=n$ and $I_1=0$. 
When we insert an item $i$ into the knapsack, we observe its size realization $s_i$ and toss a coin to get a value $\wt{p}_i$ with $\Pr[\wt{p}_i=1]=\frac{\hat{p}_i}{\theta_3}$ and  $\Pr[\wt{p}_i=0]=1-\frac{\hat{p}_i}{\theta_3}$ . Then we define the system dynamics function to be
\begin{equation} 
I_{t+1}=f(I_t,i)=(I_{t+1,1},I_{t+1,2})=(\min\{1+3\ep,I_{t,1}+s_i\},\max\{I_{t,2},\wt{p}_i\}) 
\end{equation}
and $g(I_t,i)=0$ for $I_t\in \cV$ and $t=1,2,\cdots,T$. The terminal function is  
\begin{equation} 
h(I_{T+1})=
\left\{
\begin{array}{cc}
\theta_3\cdot I_{T+1,2} & \text{if }I_{T+1}\le 1+2\ep,\\
0 & \text{otherwise;}
\end{array}
\right.
\end{equation}
Then Assumption \ref{cond:main} is immediately satisfied. By Theorem \ref{thm:main}, we can find a policy $\sigma$ with profit  $\wt{\bP}(\sigma, \wt{\pi},1+2\ep)$ at least
$$
\opt_d-O(\ep^4)\cdot\mopt\ge\wt{\bP}(\wt{\sigma}^{\ast}, \wt{\pi}, (1+2\ep))-O(\ep)\opt=(1-O(\ep))\opt,
$$   
where $\opt_d$ denotes the expected approximate profit of the optimal policy for instance $(\wt{\pi},1+2\ep)$ and $\mopt=\mmopt=\DP_{1}((0,1),\cA)=\theta_3=\frac{\opt}{\ep^3}$. By Lemma \ref{lemma:knap_skb_disprofit}, Lemma \ref{lemma:zro-canonical} (2) and Lemma \ref{lemma:knap_profit_bound} (2), we  have
\begin{align*}
&\quad\bP(\sigma,\pi,(1+4\ep))\\
&\ge (1-O(\ep))\bP(\sigma,\hat{\pi},(1+4\ep))\quad \text{[Lemma \ref{lemma:knap_profit_bound} (2)]}\\
&\ge (1-O(\ep))\bP(\sigma,\wt{\pi},(1+2\ep))\quad \text{[Lemma \ref{lemma:zro-canonical} (2)]}\\
&\ge (1-O(\ep))\wt{\bP}(\sigma,\wt{\pi},(1+2\ep))\quad \text{[Lemma \ref{lemma:knap_skb_disprofit}]}\\
&\ge (1-O(\ep))\opt,
\end{align*}
which completes the proof.
\end{proof}

\subsection{Without Relaxing the Capacity}\label{sec:knap_sbk_without}
Before design a policy for $\sbk$ without relaxing the capacity $\bC$, we establish a connection between adaptive policies for $\sk$ and $\sbk$. 
For a particular stochastic knapsack instance $\cJ$, we use $\optskp$ to denote the expected profit of an optimal policy for stochastic knapsack. Similarly, we denote $\optsbk$ for stochastic blackjack knapsack.
Note that a policy for $\sbk$ is also a policy for $\sk$.

\begin{lemma}\label{lemma:knap_sk_sbk}
For any policy $\sigma$ for $\sk$ on instance $\cJ=(\pi,\bC)$, there exists a policy $\sigma'$ for $\sbk$ such that
\begin{equation}
\bP_{\sbk}(\sigma',\pi,\bC)\ge \frac{1}{4}\cdot\bP_{\sk}(\sigma,\pi,\bC).
\end{equation}
\end{lemma}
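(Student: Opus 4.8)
The plan is to turn the given $\sk$ policy $\sigma$ into an $\sbk$ policy by \emph{early stopping}, losing only a constant factor (this is exactly the ingredient needed to combine a constant‑factor $\sk$ policy with $\optsbk\le\optskp$ and deduce Theorem~\ref{thm:knap_sbk_without}). Normalize $\bC=1$; recall that $\bP_{\sk}(\sigma)$ (suppressing $\pi,\bC$) equals the expected total profit of the items that $\sigma$ inserts and that actually fit. The fundamental difficulty — and what I expect to be the main obstacle — is the all‑or‑nothing feature of $\sbk$: once an inserted item overflows we lose everything, so we cannot run $\sigma$ and ``see'' whether the next item fits; each stopping decision must be committed before the corresponding insertion. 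In particular, an item whose size realization can be large is dangerous to attempt, so such items must be handled separately from the ``bulk''.

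Concretely, I would split $\bP_{\sk}(\sigma)$ by a size threshold (say $1/2$) on the \emph{realized} size of each inserted item: let $V_{\mathrm{big}}$ be the expected profit $\sigma$ collects from items that fit with realized size $>1/2$, and $V_{\mathrm{small}}$ from those that fit with realized size $\le 1/2$, so $\bP_{\sk}(\sigma)=V_{\mathrm{big}}+V_{\mathrm{small}}$. On any realization path at most one ``big'' item can be among the items that fit (two items of size $>1/2$ cannot coexist within capacity $1$), so $V_{\mathrm{big}}$ is essentially an expectation over ``at most one opportunity per path''; I would build an $\sbk$ policy $\sigma_{\mathrm{big}}$ that, using the fact that $\sbk$ may cherry‑pick, short‑cuts directly to the most profitable such ``big'' node of $\cT_\sigma$ and commits to stopping right after collecting that item — so whenever the big item fits we stop safely with the accumulated profit, and whenever it overflows we get $0$, consistently with $\sbk$. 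For $V_{\mathrm{small}}$ I would build $\sigma_{\mathrm{small}}$ that follows $\sigma$ but stops before inserting an item once the accumulated load is large enough to guarantee the (small) next item fits; since every item it attempts has a small realization, it never overflows on the surviving paths and collects the accumulated profit. Taking the better of $\sigma_{\mathrm{big}}$ and $\sigma_{\mathrm{small}}$ costs a factor $2$.

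Within each sub‑policy I expect to lose a further factor $2$ through a simple averaging / Markov‑type step: for the ``small'' part a deterministic load threshold need not capture all of $V_{\mathrm{small}}$ (e.g.\ $\sigma$ may keep collecting tiny items after the load passes the threshold), so I would randomize the threshold — equivalently, randomize the target profit level at which $\sigma_{\mathrm{small}}$ stops — and argue, using that the profit‑to‑go is monotone nonincreasing along each root‑to‑leaf path (as in Lemma~\ref{lemma:block}), that in expectation at least half of $V_{\mathrm{small}}$ is retained; similarly, choosing the stopping node for $\sigma_{\mathrm{big}}$ appropriately (or randomly among the relevant nodes) retains at least half of $V_{\mathrm{big}}$. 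Composing the two factors of $2$ gives $\bP_{\sbk}(\sigma')\ge \tfrac14\,\bP_{\sk}(\sigma)$.

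The delicate points, where the write‑up needs real care, are: (i) making $\sigma_{\mathrm{small}}$ genuinely overflow‑free while provably keeping a constant fraction of $V_{\mathrm{small}}$ — this likely forces the size threshold and the stopping rule to be chosen jointly; and (ii) handling $V_{\mathrm{big}}$ without the probability of reaching the chosen ``big'' node collapsing — here the point is precisely that $\sigma_{\mathrm{big}}$ need not slavishly follow $\sigma$ up to that node but can route to it directly, so only its own single, safe‑by‑construction insertion carries overflow risk. I do not expect the individual estimates to be hard once the two policies are set up correctly; the art is entirely in the setup.
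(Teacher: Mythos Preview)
Your size-based split is not the route the paper takes, and as written it has a real gap. The paper's argument is considerably simpler and avoids sizes altogether: it thresholds on \emph{profit}. After normalizing so that every subtree of $\cT_\sigma$ has expected $\sk$ value at most $\bP_{\sk}(\sigma)$ (otherwise replace $\sigma$ by that subtree), set $\theta=\bP_{\sk}(\sigma)/2$ and let $\sigma'$ run $\sigma$ but stop, without inserting the current item, at the first node where the accumulated profit is at least $\theta$. Because an $\sk$ policy never proceeds past an overflow, every such stopping node is reached with all inserted items fitting, so $\sigma'$ is automatically overflow-free and its $\sbk$ value on a stopped path is at least $\theta$. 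A four-line averaging computation over the truncated tree (paths that stop versus paths that never reach $\theta$, using $P(v)<\theta$ on the latter and $\bP(v)\le 2\theta$ on the former) gives $\sum_{v\in F}\Phi(v)P(v)\ge \theta/2=\bP_{\sk}(\sigma)/4$.

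The gap in your plan is precisely the point you flag for $\sigma_{\mathrm{small}}$ but do not resolve. A load threshold cannot make $\sigma_{\mathrm{small}}$ overflow-free: when you insert an item while the load is below $1/2$, that item's realization is still random and may be large, in which case you overflow and lose every small-profit item already collected. That lost profit is part of $V_{\mathrm{small}}$, not $V_{\mathrm{big}}$, so on such paths neither sub-policy accounts for it, and the ``factor $2$ from randomizing the threshold'' does not address this leak. Your $\sigma_{\mathrm{big}}$ is also underspecified: you cannot ``route directly'' to a node of $\cT_\sigma$ while preserving the probability of reaching it and the residual capacity there, and a single fixed item can capture an arbitrarily small fraction of $V_{\mathrm{big}}$ (think of $n$ items each big with probability $1/n$). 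The clean fix is exactly the paper's: drop the size decomposition and stop on a profit threshold, which makes the overflow issue disappear because you only ever stop at nodes that $\sigma$ itself would have continued from.
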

\begin{proof}
W.l.o.g, we assume that for any node $v\in \cT_{\sigma}$, we have $\bP(v)\le \bP_{\sk}(\sigma,\pi,\bC)$. Otherwise, we use the subtree $\cT_{v}$ to instead $\cT_{\sigma}$ for $\sk$. Set $\theta=\bP_{\sk}(\sigma,\pi,\bC)/2$. We interrupt the process of the policy $\sigma$ on a node $v$ when the first time that the summation of is larger than or equal to $\theta$ to get a new policy $\sigma'$, \ie we have a truncation on the node $v$ and do not insert the item (include $v$) any more in the new policy $\sigma'$. Let $F$ be the set of the nodes on which we have a truncation. Let $\bar{F}=\lf\setminus F$ be the set of rest leaves, where $\lf$ is the set of leaves of the tree $\cT_{\sigma'}$. Then we have
\begin{align*}
2\theta&=\sum_{v\in \bar{F}}\Phi(v)\cdot P(v)+\sum_{v\in F}\Phi(v)\cdot [P(v)+\bP(v)]\\
&\le \theta\cdot\sum_{v\in \bar{F}}\Phi(v)+\sum_{v\in F}\Phi(v) [P(v)+2\theta]\\
&= \theta+\sum_{v\in F}\Phi(v) [P(v)+\theta]\\
&\le \theta+2\sum_{v\in F}\Phi(v) P(v).
\end{align*}
Thus the expect profit of the policy $\sigma'$ for $\sbk$ is equal to $$\sum_{v\in F}\Phi(v) P(v)\ge \frac{1}{2}\cdot\theta=\frac{1}{4}\cdot
\bP_{\sk}(\sigma,\pi,\bC).$$
\end{proof}

\begin{lemma}
For any stochastic knapsack instance $\cJ$, we have
\begin{equation}
\optskp\ge\optsbk\ge \frac{1}{4}\optskp.
\end{equation}
\end{lemma}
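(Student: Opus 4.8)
The plan is to derive both inequalities directly from Lemma~\ref{lemma:knap_sk_sbk} together with the elementary observation that the blackjack semantics are never more generous than the ordinary stochastic knapsack semantics on the same decision tree. Concretely, fix an instance $\cJ=(\pi,\bC)$.

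For the upper bound $\optskp\ge\optsbk$, I would start from an optimal $\sbk$ policy $\sigma^\ast$, whose decision tree $\cT_{\sigma^\ast}$ is also a valid policy for $\sk$ (it inserts items one at a time, respecting no extra structure). Comparing the two objectives leaf by leaf: for a leaf $v$ with $W(v)\le\bC$ both semantics collect at least $P(v)$, while for a leaf where overflow has occurred the $\sbk$ objective collects $0$ and the $\sk$ objective collects a nonnegative amount. Summing over leaves weighted by $\Phi(v)$ gives $\bP_{\sk}(\sigma^\ast,\pi,\bC)\ge\bP_{\sbk}(\sigma^\ast,\pi,\bC)=\optsbk$, and since $\optskp$ is the maximum over all $\sk$ policies we conclude $\optskp\ge\optsbk$.

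For the lower bound $\optsbk\ge\frac14\optskp$, I would take an optimal $\sk$ policy $\sigma$ with $\bP_{\sk}(\sigma,\pi,\bC)=\optskp$ and apply Lemma~\ref{lemma:knap_sk_sbk} verbatim: it produces a $\sbk$ policy $\sigma'$ with $\bP_{\sbk}(\sigma',\pi,\bC)\ge\frac14\bP_{\sk}(\sigma,\pi,\bC)=\frac14\optskp$. Since $\optsbk$ is the best achievable $\sbk$ value, $\optsbk\ge\bP_{\sbk}(\sigma',\pi,\bC)\ge\frac14\optskp$, which finishes the proof.

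The only point requiring any care is the leaf-by-leaf comparison in the first inequality, where one must make sure the two objectives are being evaluated on \emph{the same} realization paths and that the overflow rule indeed only removes profit under the blackjack model; all the real work — the truncation-at-profit-$\theta$ argument that loses at most a factor $4$ — has already been done in Lemma~\ref{lemma:knap_sk_sbk}, so no further obstacle is expected.
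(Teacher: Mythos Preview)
Your proposal is correct and matches the paper's approach: the paper states this lemma without a separate proof, treating it as an immediate corollary of Lemma~\ref{lemma:knap_sk_sbk} for the lower bound and of the remark ``Note that a policy for $\sbk$ is also a policy for $\sk$'' for the upper bound. Your leaf-by-leaf comparison is exactly the unpacking of that remark, and your invocation of Lemma~\ref{lemma:knap_sk_sbk} on an optimal $\sk$ policy is precisely what the paper intends.
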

For any fixed $\ep\ge 0$ and instance $\cJ$, by the result of \cite{gat2011}, there is a polynomial time algorithm to compute a policy $\sigma$ for $\sk$ with expected profit $(\frac{1}{2}-\ep)\optskp$. By Lemma \ref{lemma:knap_sk_sbk}, we can find a policy $\sigma'$ for $\sbk$ expected profit at least
$$
\frac{1}{4}\times (\frac{1}{2}-\ep)\optskp\ge(\frac{1}{8}-\ep)\optsbk.
$$
This completes the proof of Theorem \ref{thm:knap_sbk_without}.

\section{Concluding Remarks}%<<<
In the paper, we formally define a model based on stochastic dynamic programs. This is a generic model. There are a number of stochastic optimization problems which fit in this model. We design a polynomial time approximation schemes for this model.% For the technical novelty, our algorithms involved block adaptive policies and signature which is highly non-trivial.

We also study two important stochastic optimization problems, \probemax~problem and stochastic knapsack problem.
%We propose an approximation algorithm for stochastic probing problem.
Using the stochastic dynamic programs, we design a PTAS for \probemax~problem, which improves the best known approximation ratio $1-1/e$. To improve the approximation ratio for \probemax~with a matroid constraint is still a open problem.
 
Next, we focus the variants of stochastic knapsack problem: stochastic blackjack knapsack and stochastic target problem. Using the stochastic dynamic programs and discretization technique, we design a PTAS for them if allowed to relax the capacity or target. To improve the ratio for stochastic knapsack problem and variants without relaxing the capacity is still a open problem.

%In this paper, we introduce a general framework for obtaining PTAS for stochastic dynamic programs, and show that by using this framework, we can construct PTASs for a number of difficult stochastic optimization problems.

\section*{Acknowledgements}
We would like to thank Anupam Gupta for several helpful discussions
during the various stages of the paper.
Jian Li would like to thank the Simons Institute for the Theory of
Computing, where part of this research was
carried out. Hao Fu would like to thank Sahil Singla for
useful discussions about Pandora's Box problem. 
Pan Xu would like to thank Aravind Srinivasan for his many useful comments.

%>>>

\bibliographystyle{plain}
\bibliography{\jobname}

\appendix

\section{Proof of Lemma \ref{lemma:maxcanonical} }\label{secapp:probemax}

\noindent 
{\bf Lemma~\ref{lemma:maxcanonical}.}
{\em 	
Let $\pi=\{\pi_i\}$ be the set of distributions of random variables and $\wt{\pi}$ be the discretized version of $\pi$. Then, we have:
\begin{enumerate}[1.]
\item For any policy $\sigma$, there exists a (canonical) policy $\wt{\sigma}$ such that
$$
\bP(\wt{\sigma},\wt{\pi})\ge(1-O(\ep))\bP(\sigma,\pi)-O(\ep)\opt;
$$

\item For any canonical policy $\wt{\sigma}$,
$$
\bP(\wt{\sigma},\pi)\ge \bP(\wt{\sigma},\wt{\pi}).
$$
\end{enumerate}
}

\begin{proof}[Proof of Lemma \ref{lemma:maxcanonical}]
Recall that for each node $v$ on the decision tree $\cT_{\sigma}$, the value $I_v$ is the maximum among the realized value of the probed items right before probing the item $a_v$. For a path $\cR$, we use $W(\cR)$ to the denote the value of the last node on the path. %Then we have $W(v)=W(\cR(v))=I_v$, 
% we denote the profit we get so far before probing $v$ as 
%$$
%W(v)=W(\cR(v))=\max_{i\in \cR(v)}\{X_i\}
%$$
%and the probability of reaching $v$ as $\Phi(v)=\Phi(\cR(v))=\prod_{e\in \cR(v)}p_e$.
%where $\cR(v)$ is the path from the root to the node $v$.
 Let $ E$ be the set of all root-to-leaf paths in $T_{\sigma}$.
%We use $\bP(\sigma,\pi)$ or $\bP(\cT)$ to denote the expected profit that the policy $\sigma$ can obtain with the given distribution $\pi$ 
Then we have
\begin{equation}
\bP(\sigma,\pi)=\sum_{r\in  E}\Phi(r)\cdot W(r).
\end{equation}
%Conditioned on the event that the policy $\sigma$ has arrived at a node $v$, we use $\bP(v,\sigma,\pi)$ to denote the conditional expected profit that the policy $\sigma$ can obtain. Thus $\bP(v,\sigma,\pi)=W(v)$ if $v$ is a leaf (i.e., the dummy node). We also use the shorthand notation $\bP(v)$ or $\bP(\cT_{v})$ if the context is clear. Then, we can recursively define the conditional expected profit of the subtree $\cT_v$ rooted at $v$ to be
%\begin{equation}
%\bP(v)=\sum_{e=(v,u)}p_e\cdot\bP(u),
%\end{equation}
%if $v$ is an internal node. Note that $\bP(\cT_v)$ is the conditional expected profit of the policy $\sigma$ can obtain, rather than  the expected profit of the subtree $\cT_v$. We use $\opt$ to denote the expected profit of the optimal adaptive policy.

For the first result of Lemma \ref{lemma:maxcanonical}, we prove that there is a randomized canonical policy $\sigma_{r}$ such that $\bP(\sigma_r,\wt{\pi})\ge(1-O(\ep))\bP(\sigma,\pi)-O(\ep)\opt$. Thus such a deterministic policy $\wt{\sigma}$ exists. Let  $\theta=\frac{\nopt}{\ep}$ be a threshold. We interrupt the process of the policy $\sigma$ on a node $v$ when the first time we probe an item whose weight exceeds this threshold to get a new policy $\sigma'$ %to get a new policy $\sigma'$. 
%For any realization path $r$ in $\cT(\sigma,\pi)$,
%we interrupt our policy at a node $v$ when the first time we have been probed a large value which is bigger than $\theta$, 
%we end up at the first node $v$ which satisfies $W(v)\ge \theta$ to get a policy $\sigma'$
\ie we have a truncation on the node $v$ and do not probe items (include $v$) any more in the new policy $\sigma'$. 
%For any realization path $r$ in $\cT(\sigma,\pi)$, we call the set of $v$ 
The total profit loss is equal to

$$
\sum_{v\in LF}\Phi(v)\cdot[\bP(v)-I_v]
\le \sum_{v\in LF}\Phi(v)\cdot \opt
= \opt\times\sum_{v\in LF}\Phi(v)
\le O(\ep)\cdot \opt,
$$
where $LF$ is the set of the nodes on which we have a truncation. The last inequality holds because $\opt\ge \sum_{v\in LF}\Phi(v)\cdot \bP(v)\ge \theta\cdot\sum_{v\in LF}\Phi(v)$.
%
%\begin{proof}

The randomized policy $\sigma_r$ is derived from $\sigma'$ as follows. $\cT(\sigma_r,\wt{\pi})$ has the same tree structure as $\cT(\sigma',\pi)$. If $\sigma_r$ probes an item $\wt{X}$ and observes a discretized size $d\in \cV$, it chooses a random branch in $\cT(\sigma_r,\wt{\pi})$ among those sizes that are mapped to $d$, i.e., $\{w_e\ |\ D_X(w_e)=d\}$ according to the probability distribution
$$
\Pr[\text{branch $e$ is chosen}]=\frac{\Pr[X=w_e]}{\sum_{s\,|\,D_X(s)=d}\Pr[X=s]}.
$$
%For any edge $e$ in $\cT(\sigma_r,\wt{\pi})$ which is 
Then by Equation \eqref{equ:scale}, if $w_e<\theta$ we have
$$\wt{p}_e=\Pr[\wt{X}=d]\cdot \Pr[\text{branch $e$ is chosen}]=p_e\cdot\frac{1-\Pr[\wt{X}=\theta]}{\Pr[X<\theta]}$$
and
$$\wt{w}_e=\left\lfloor\frac{w_e}{\ep \nopt}\right\rfloor\cdot\ep \nopt\ge w_e-\ep \nopt.$$
\begin{fact}
For any node $v$ in the tree $\cT(\sigma',\pi)$ such that $I_v<\theta$, we have
\begin{equation}\label{equ:fact}
\wt{\Phi}(v)\ge (1-O(\ep))\Phi(v).
\end{equation}
\end{fact}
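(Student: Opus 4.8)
\textbf{Proof plan for the Fact ($\wt{\Phi}(v)\ge(1-O(\ep))\Phi(v)$ for nodes $v$ with $I_v<\theta$).}

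The plan is to compute $\wt{\Phi}(v)/\Phi(v)$ as an explicit product over the edges of $\cR(v)$ and show it is $1-O(\ep)$. First I would note that, since the value is non-decreasing and $I_v<\theta$, every edge $e$ on $\cR(v)$ is a \emph{small} edge: for Probemax $I_{u'}=\max\{I_u,w_e\}\ge w_e$ on each edge $(u,u')$ of the path, and monotonicity forces $I_{u'}\le I_v<\theta$, hence $w_e<\theta$. Therefore the identity $\wt{p}_e=p_e\cdot\frac{1-\Pr[\wt X=\theta]}{\Pr[X<\theta]}$ established just above (\ref{equ:scale}) applies to \emph{every} edge of $\cR(v)$. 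Writing $a_1,\dots,a_\ell$ for the distinct items probed along $\cR(v)$, $q^{(j)}:=\Pr[X_{a_j}\ge\theta]$, and $\wt p^{(j)}_\theta:=\Pr[\wt X_{a_j}=\theta]=\bE\!\left[X_{a_j}\mathbf{1}[X_{a_j}\ge\theta]\right]/\theta$ (by \eqref{equ:big}), this telescopes to
$$
\wt{\Phi}(v)=\Phi(v)\prod_{j=1}^{\ell}\frac{1-\wt p^{(j)}_\theta}{1-q^{(j)}}\ \ge\ \Phi(v)\prod_{j=1}^{\ell}\bigl(1-\wt p^{(j)}_\theta\bigr)\ \ge\ \Phi(v)\Bigl(1-\sum_{j=1}^{\ell}\wt p^{(j)}_\theta\Bigr),
$$
using $\prod_j(1-q^{(j)})\le 1$ and $\prod_j(1-x_j)\ge 1-\sum_j x_j$ for $x_j\in[0,1]$. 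So it all reduces to the single estimate $\sum_{j}\wt p^{(j)}_\theta=O(\ep)$, equivalently $\sum_{j}\bE\!\left[X_{a_j}\mathbf{1}[X_{a_j}\ge\theta]\right]=O(\opt)$ after dividing by $\theta=\nopt/\ep\ge(1-1/e)^2\opt/\ep$.

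The crux is this last estimate. The naive per-item bound $\bE[X_{a_j}\mathbf{1}[X_{a_j}\ge\theta]]\le\bE[X_{a_j}]\le\opt$ only gives $\ell\cdot\opt$, which is useless since $\ell$ can be as large as $m$; instead I would use that the non-adaptive policy probing $\{a_1,\dots,a_\ell\}$ (legal since $\ell\le T=m$) has value $\bE[\max_jX_{a_j}]\le\opt$, together with $\theta$ being huge relative to $\opt$. First, $\theta\bigl(1-\prod_j(1-q^{(j)})\bigr)\le\theta\Pr[\exists j:X_{a_j}\ge\theta]\le\bE[\max_jX_{a_j}]\le\opt$, which forces $\sum_jq^{(j)}\le-\ln(1-\opt/\theta)\le\tfrac12$ for $\ep$ small enough (as $\opt/\theta=O(\ep)$). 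Then, with $Z_j:=X_{a_j}\mathbf{1}[X_{a_j}\ge\theta]$ (independent, nonnegative, $\Pr[Z_j>0]=q^{(j)}$), the pointwise inequality $\max_jZ_j\ge\sum_jZ_j\prod_{j'\ne j}\mathbf{1}[Z_{j'}=0]$ and independence give
$$
\opt\ \ge\ \bE\Bigl[\max_jZ_j\Bigr]\ \ge\ \Bigl(\sum_j\bE[Z_j]\Bigr)\Bigl(1-\sum_{j'}q^{(j')}\Bigr)\ \ge\ \tfrac12\sum_j\bE[Z_j],
$$
so $\sum_j\bE[Z_j]\le 2\opt$ and hence $\sum_j\wt p^{(j)}_\theta=\tfrac1\theta\sum_j\bE[Z_j]\le 2\opt/\theta=O(\ep)$. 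Combining the two displays yields $\wt{\Phi}(v)\ge(1-O(\ep))\Phi(v)$.

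I expect the second step to be the main obstacle: one must see that the extra factor $\theta$ in the denominator of $\wt p^{(j)}_\theta$ is exactly compensated by $\theta\gg\opt$, and one must handle the case where several items on the path overflow $\theta$ — which is why the argument routes through $\sum_jq^{(j)}\le\tfrac12$ and a max-versus-sum comparison rather than bounding term by term. Everything else is bookkeeping with the explicit discretization formulas \eqref{equ:big}–\eqref{equ:scale} and the reduction to small edges.
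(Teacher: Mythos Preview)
Your argument is correct, and shares the paper's core idea: the items $a_1,\dots,a_\ell$ along $\cR(v)$ form a feasible probing set (since $\ell\le T=m$), so its Probemax value is at most $\opt$; combined with $\theta=\nopt/\ep\ge(1-1/e)^2\opt/\ep$, this forces the product $\prod_j(1-\wt p^{(j)}_\theta)$ to be $1-O(\ep)$.

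Where you differ is in how that last step is executed. The paper bounds the \emph{product} directly: it asserts that a policy probing the items on $\cR(v)$ earns at least $\theta\bigl[1-\prod_j(1-\wt p^{(j)}_\theta)\bigr]$ (this can be seen, e.g., by probing sequentially and stopping at the first $X_{a_j}\ge\theta$, then using $q^{(j)}\le\wt p^{(j)}_\theta$ to telescope), whence $\prod_j(1-\wt p^{(j)}_\theta)\ge 1-\opt/\theta=1-O(\ep)$ in one line. You instead bound the \emph{sum} $\sum_j\wt p^{(j)}_\theta$, which costs you an extra pass through $\sum_j q^{(j)}\le\tfrac12$ and the max--versus--sum inequality. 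Your route is a bit longer but entirely elementary; the paper's is slicker but leaves the ``path as a policy'' lower bound to the reader.
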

When we regard the path $\cR(v)$ as a policy, the expected profit of the path $\cR(v)$ can obtain is at least
$$
\theta \cdot\left[1-\prod_{i\in \cR(v)}\left(1-\Pr[\wt{X}_i=\theta]\right)\right],
$$
which is less than $\opt$, where $\cR(v)$ is the path from the root to the node $v$.
 Then we have $\prod_{i\in \cR(v)}\left(1-\Pr[\wt{X}_i=\theta]\right)\ge 1-O(\ep)$, which implies that
$$
\wt{\Phi}(v)= \Phi(v)\cdot\prod_{i\in \cR(v)}\frac{1-\Pr[\wt{X}_i=\theta]}{\Pr[X_i\le\theta]} \ge (1-O(\ep))\Phi(v).
$$
\eat{
recall $\cR(v)$ is the realization path from the root to $v$ in $\cT(\sigma,\pi)$. If $\max_{e\in \cR(v)}w_e<\theta$, we have
$$
\wt{\Phi}(v)
=\prod_{e\in \cR(v)}\wt{p}_{e}
%=\prod_{e\in v}\left\{p_e\cdot\frac{1-\Pr[\wt{X_{v^i}}=\theta]}{\Pr[X_{v^i}<\theta]}\right\}
=\ \Phi(v)\cdot \prod_{i< j}\frac{1-\Pr[\wt{X}_{v^i}=\theta]}{\Pr[X_{v^i}<\theta]},
$$
where $\cR(v)=\{v^1,\cdots,v^{j-1},v^{j}=v\}$ $(j\le m)$.
Since $$
\opt\ge\bE[\max_{i<j}X_i] \ge \bE[\max_{i< j}\wt{X}_i]\ge \theta\cdot(1-\prod_{i< j}[1-\Pr[\wt{X}_i=\theta]),$$
we have
\begin{equation}\label{equ:pro}
\wt{\Phi}(v)\ge \Phi(v)\cdot\prod_{i< j}(1-\Pr[\wt{X}_{v^i}=\theta])\ge (1-O(\ep))\Phi(v).
\end{equation}
}
Now we bound the profit that we can obtain from $\cT(\sigma_r,\wt{\pi})$. Let $E$ be the set of all root-to-leaf paths in $\cT(\sigma',\pi)$.
We split it into two parts $ E_1=\{r\in E:W(r)< \theta\}$ and  $ E_2=\{r\in E:W(r)\ge \theta\}$.
For the first part, we have
\begin{align*}
\sum_{r\in E_1}\wt{\Phi}(r)\cdot\wt{W}(r)&\ge\sum_{r\in E_1}\wt{\Phi}(r)\cdot\left[W(r)-O(\ep \nopt)\right]\\
&\ge(1-O(\ep))\left[\sum_{r\in E_1}\Phi(r)\cdot W(r)\right]-O(\ep \nopt).
\end{align*}
As mentioned before, for any path $r\in  E_2$, we interrupt the process of the policy $\sigma$ when the first time we probe an item whose weight exceeds this threshold $\theta$. We use $\ell_r$ to denote the item for path $r$. By Equation \eqref{equ:big}, we have
$
\Pr[\wt{X}=\theta]\cdot\theta=\Pr[X\ge \theta]\cdot\bE[X\ |\ X\ge \theta].
$
Then, we have
\begin{align*}
\sum_{r\in  E_2}\wt{\Phi}(r)\cdot\wt{W}(r)
&=\sum_{r\in  E_2}\wt{\Phi}(\ell_r) \cdot \Pr[\wt{X}_{\ell_r}=\theta]\cdot \theta\\
&=\sum_{r\in  E_2}\wt{\Phi}(\ell_r) \cdot  \Pr[X_{\ell_r}\ge \theta]\cdot \bE[X_{\ell_r}\ |\ X_{\ell_r}\ge \theta]\\
&\ge\sum_{r\in  E_2}(1-O(\ep))\Phi(\ell_r) \cdot  \Pr[X_{\ell_r}\ge \theta]\cdot\bE[X_{\ell_r}\ |\ X_{\ell_r}\ge \theta]\\
&=(1-O(\ep))\sum_{r\in E_2}\Phi(r)\cdot W(r).
\end{align*}
In summation, the expected profit $\bP(\sigma_r,\wt{\pi})$ is equal to
\begin{align*}
\sum_{r\in  E}\wt{\Phi}(r)\cdot\wt{W}(r)
&\ge(1-O(\ep))\sum_{r\in E}\Phi(r)\cdot W(r)-O(\ep \nopt)\\
&=(1-O(\ep))\bP(\sigma',\pi)-O(\ep)\opt\\
&=(1-O(\ep))\bP(\sigma,\pi)-O(\ep)\opt.
\end{align*}

Next, we prove the second result of Lemma \ref{lemma:maxcanonical}. Recall that a canonical policy makes decisions based on the discretized sizes. Then $\cT(\wt{\sigma},\pi)$ has the same tree structure as $\cT(\wt{\sigma},\wt{\pi})$, except that it obtain the true profit  rather than the discretized profit.
By Equation \eqref{equ:scale}, for an edge $e$ with a weight $\wt{w}_e<\theta$ on $\cT(\wt{\sigma},\wt{\pi})$, we have
$$
\pi_{e}=\sum_{s\in S:D_X(s)=\wt{w}_e}\Pr[X=s]\ge \wt{\pi}_{e}.
$$
%By Equation \eqref{equ:scale}, for an edge $e$ with a weight $\wt{w}_e<\theta$ in $\cT(\wt{\sigma},\wt{\pi})$,  we have $\wt{p}_e\le p_e$ and $\wt{w}_e$ is less than the corresponding true sizes of the edge (note that the true sizes are not unique which are sampled from the original distribution).
\begin{fact}
For any node $v$ in the tree $\cT(\wt{\sigma},\wt{\pi})$ with $I_v< \theta$, we have
\begin{equation}
\wt{\Phi}(v)\le \Phi(v).
\end{equation}
\end{fact}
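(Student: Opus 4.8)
The plan is to reduce the Fact to the edgewise inequality $\pi_e\ge\wt{\pi}_e$ that has just been recorded and then multiply along the realization path. Since the canonical policy $\wt{\sigma}$ branches solely on the discretized realization $D_X(\cdot)\in\cV$, the two trees $\cT(\wt{\sigma},\pi)$ and $\cT(\wt{\sigma},\wt{\pi})$ are isomorphic; in particular a node $v$ and its realization path $\cR(v)$ are common to both, and $\Phi(v)=\prod_{e\in\cR(v)}\pi_e$, $\wt{\Phi}(v)=\prod_{e\in\cR(v)}\wt{\pi}_e$, where $\pi_e$ and $\wt{\pi}_e$ denote the transition probabilities of the true and the discretized instances. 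So it suffices to prove $\pi_e\ge\wt{\pi}_e$ for every edge $e$ lying on $\cR(v)$, and the Fact follows by taking the product.

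The only place the hypothesis $I_v<\theta$ is used is to argue that every edge on $\cR(v)$ is a \emph{small} edge, i.e.\ $\wt{w}_e<\theta$. By Assumption~\ref{cond:main}(2) and the \probemax\ dynamics $I_{t+1}=\max\{I_t,\wt{X}_{a_t}\}$, the value $I_v$ right before probing $a_v$ equals the running maximum $\max_{e\in\cR(v)}\wt{w}_e$ of the discretized realizations observed along $\cR(v)$ (this identity holds in either tree, since the branch labels $\wt{w}_e$ are the discretized quantities in both). A ``large'' realization anywhere on the path would have $D_X(\cdot)=\theta$ and would force $I_v\ge\theta$; hence $I_v<\theta$ implies $\wt{w}_e<\theta$ for all $e\in\cR(v)$.

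For these small edges the inequality is already available: as recorded just above the statement, $\pi_e=\sum_{s\in S:\,D_X(s)=\wt{w}_e}\Pr[X=s]$, while by Equation~\eqref{equ:scale} we have $\wt{\pi}_e=\wt{p}_{\wt{w}_e}=\tfrac{1-\Pr[\wt{X}=\theta]}{\Pr[X<\theta]}\sum_{s:\,D_X(s)=\wt{w}_e}\Pr[X=s]\le\sum_{s:\,D_X(s)=\wt{w}_e}\Pr[X=s]=\pi_e$, the scaling factor being at most $1$ because $\Pr[\wt{X}=\theta]=\Pr[X\ge\theta]\,\bE[X\mid X\ge\theta]/\theta\ge\Pr[X\ge\theta]$. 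Taking the product over $e\in\cR(v)$ then gives $\Phi(v)=\prod_e\pi_e\ge\prod_e\wt{\pi}_e=\wt{\Phi}(v)$, as claimed.

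I do not anticipate a genuine obstacle here. Unlike the companion estimate in the first half of the lemma (Equation~\eqref{equ:fact}), which goes in the opposite direction and therefore needs the bound $\prod_i(1-\Pr[\wt{X}_i=\theta])\ge 1-O(\ep)$, this direction merely discards a factor that is trivially $\le 1$, so the inequality is exact with no $\ep$-loss. The single step that deserves a sentence of care is the identification of $I_v$ with the running maximum of the discretized realizations along $\cR(v)$ in both trees; everything else is a one-line product estimate.
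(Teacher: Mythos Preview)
Your proposal is correct and follows essentially the same route as the paper: the paper records the edgewise inequality $\pi_e\ge\wt{\pi}_e$ for edges with $\wt{w}_e<\theta$ immediately before stating the Fact and then leaves the Fact unproved, treating it as a direct consequence by multiplying along $\cR(v)$. Your write-up supplies the one missing sentence the paper omits, namely that $I_v<\theta$ forces every edge on $\cR(v)$ to be small because $I_v$ is the running maximum of the discretized realizations; otherwise the arguments coincide.
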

%$$
%p_{e}=\sum_{s\in S\,|\,D_{X}(s)=\wt{w}_e}p_s\ge \sum_{s\in S\,|\,D_{X}(s)=\wt{w}_e}p_s\cdot \frac{1-\Pr[\wt{X}=\theta]}{\Pr[X<\theta]}=\wt{p}_e.
%$$
Similarly, we split the root-to-leaf paths set $E$ into two parts $ E_1=\{r\in E:\max_{e\in r}\wt{w}_e< \theta\}$ and  $ E_2=\{r\in E:\max_{e\in r}\wt{w}_e= \theta\}$. Then, we have
\begin{align*}
\bP(\wt{\sigma},\wt{\pi})
%&=\sum_{r\in  E}\wt{\Phi}(r)\wt{W}(r)\\
&=\sum_{r\in E_1}\wt{\Phi}(r)\cdot\wt{W}(r)
+\sum_{r\in  E_2}\wt{\Phi}(r)\cdot\wt{W}(r)\\
&=\sum_{r\in E_1}\wt{\Phi}(r)\cdot\wt{W}(r)+
\sum_{r\in  E_2}\wt{\Phi}(\ell_r) \cdot \Pr[\wt{X}_{\ell_r}=\theta]\cdot \theta\\
&\le\sum_{r\in E_1}\Phi(r)\cdot W(r)
+\sum_{r\in  E_2}\Phi(\ell_r) \cdot \Pr[X_{\ell_r}\ge \theta]\cdot\bE[X_{\ell_r}\ |\ X_{\ell_r}\ge \theta] \\
&=\sum_{r\in E_1}\Phi(r)\cdot W(r)
+\sum_{r\in E_2}\Phi(r)\cdot W(r)\\
&=\bP(\wt{\sigma},\pi)
\end{align*}
\end{proof}

%\subsection{Extensions}
%We can extend the results of the \probemax\ problem to some more general cases. In fact, it still works in the general cases using the same techniques. 

\section{Proof of Lemma \ref{lemma:canonicaltopk}}\label{app:probek}
\noindent{Lemma \ref{lemma:canonicaltopk}.}
{\em\
Let $\pi=\{\pi_i\}$ be the set of distributions of random variables and $\wt{\pi}$ be the discretized version of $\pi$. Then, we have:
\begin{enumerate}[1.]
\item For any policy $\sigma$, there exists a canonical policy $\wt{\sigma}$ such that
$$
\bP(\wt{\sigma},\wt{\pi})\ge(1-O(\ep))\bP(\sigma,\pi)-O(\ep)\opt;
$$

\item For any canonical policy $\wt{\sigma}$,
$$
\bP(\wt{\sigma},\pi)\ge \bP(\wt{\sigma},\wt{\pi})-O(\ep)\opt.
$$
\end{enumerate}
}
\begin{proof}
This can be proved by an analogous argument as Lemma \ref{lemma:maxcanonical}. For the first result, we design a randomized canonical policy $\sigma_r$ as before. Here, $W(r)$ is the summation of the top-$k$ weights on the path $r$. For a root-to-leaf path $r$, the profit we get is equal to
\begin{equation}
W(r)=\max_{C\subseteq r,|C|\le k}\left[\sum_{i\in C}X_i\right].
\end{equation}
Now we bound the profit we can obtain from $\cT(\sigma_r,\wt{\pi})$. recall that $ E_1=\{r\in E:\max_{e\in r}w_e< \theta\}$ and  $ E_2=\{r\in E:\max_{e\in r}w_e\ge \theta\}$ where $ E$ is the set of all root-to-leaf paths. 
Then for any $r\in  E_1$, we have 
$$
\wt{W}(r)\le W(r)-k\cdot \ep \nopt=W(r)-O(\ep \nopt).
$$
For the first part, we have
$$
\sum_{r\in E_1}\wt{\Phi}(r)\cdot\wt{W}(r)
\ge(1-O(\ep))\sum_{r\in E_1}\left[\Phi(r)\cdot W(r)\right]-O(\ep \nopt).
$$
For the second part, we have
\begin{align*}
\sum_{r\in  E_2}\wt{\Phi}(r)\cdot\wt{W}(r) &=\sum_{r\in  E_2}\wt{\Phi}(\ell_r) \cdot \Pr[\wt{X}_{\ell_r}=\theta]\cdot (\theta+\wt{W}'(\ell_r))\\
&\ge\sum_{r\in  E_2}\wt{\Phi}(\ell_r) \cdot  \Pr[X_{\ell_r}\ge \theta]\cdot \left(\bE[X_{\ell_r}|X_{\ell_r}\ge \theta]+\wt{W}'(\ell_r)\right)\\
&\ge\sum_{r\in  E_2}\wt{\Phi}(\ell_r) \cdot  \Pr[X_{\ell_r}\ge \theta]\cdot\left(\bE[X_{\ell_r}|X_{\ell_r}\ge \theta]+W'(\ell_r)-O(\ep \nopt)\right)\\
&\ge(1-O(\ep))\sum_{r\in E_2}\Phi(r)\cdot W(r)-O(\ep \nopt)
\end{align*}
where $W'(r)$ is the summation of top $k-1$ weights on the path $r$. 
In summation, the expected profit $\bP(\sigma_r,\wt{\pi})$ is equal to
$$
\sum_{r\in  E}\wt{\Phi}(r)\cdot\wt{W}(r)\ge
(1-O(\ep))\bP(\sigma,\pi)-O(\ep)\opt.
$$
Now, we prove the second result. Similarly, we have
$$
\sum_{r\in E_1}\wt{\Phi}(r)\cdot\wt{W}(r)
\le\sum_{r\in E_1}\Phi(r)\cdot W(r)
$$
and
\begin{align*}
\sum_{r\in  E_2}\wt{\Phi}(r)\cdot\wt{W}(r)
&=\sum_{r\in  E_2}\wt{\Phi}(\ell_r) \cdot \Pr[\wt{X}_{\ell_r}=\theta]\cdot (\theta+\wt{W}'(\ell_r))\\
&\le\sum_{{r\in  E_2}}\Phi(\ell_r) \cdot \Pr[X_{\ell_r}\ge \theta]\cdot\bE[X_{\ell_r}\,|\,X_{\ell_r}\ge \theta]+O(\epsilon)\opt\\
&\le\sum_{r\in E_2}\Phi(r)\cdot W(r)+O(\epsilon)\opt
\end{align*}
where the first inequality holds since $\Pr[\wt{X}=\theta]\le \ep$.
Hence, the proof of the lemma is completed.
\end{proof}

\eat{
\subsection{ Non-adaptive \probek~Problem}
Now, we consider the non-adaptive version of the \probek~problem.
Define
\begin{equation}
f(P)=\bE\left[\max_{\{i_1,i_2,\ldots,i_k\}\subseteq P}\sum_{j\in[k]}{X_{i_j}}\right]
\end{equation}
Our goal is to find a subset $P\subseteq[n]$ of cardinality $m$ such that the expected reward $f(P)$ is maximized.
\begin{thm}\label{thm:nontopk}
There exists a PTAS for the non-adaptive \probek~problem.
\end{thm}
\begin{proof}
Suppose $P^{\ast}=\{X_1,X_2,\ldots,X_m\}$ is the optimal solution with expected reward $\opt$. Then there is a pseudo-adaptive policy $\sigma^{\ast}$ only using the items in the set $P^{\ast}=\{X_1,X_2,\ldots,X_m\}$. The pseudo-adaptive policy is adaptive policy which can be represented a decision tree. In fact, for any root-to-leaf path, the set of items on the path is identical. 

Lemma \ref{lemma:canonicaltopk}, Lemma \ref{lemma:block} and Lemma \ref{lemma:sig} still hold. Now, we can find a nearly optimal pseudo-block-adaptive policy only using $m$ items in total. This can be done by modifying a little bit about the dynamic program \eqref{equ:dp-probe-max}:
\begin{equation}\label{equ:dp2}
\dpp(i, \sC, \sCA)=\max \Big(\dpp(i-1, \sC, \sCA),\dpp(i-1,\sC', \sCA-1)\Big),
\end{equation}
where the capacity $\sCA$ is the number of all items used in the tree, rather than a vector capacity for each root-to-leaf path. We can use the dynamic program to find a nearly optimal pseudo-block-adaptive policy $\sigma$ (\ie a non-adaptive policy) with expect profit $(1-O(\ep))\opt$.
\end{proof}

}

\section{Proof of Lemma~\ref{lemma:skrp1} and Lemma \ref{lemma:skrp2}}
\noindent{\bf Lemma~\ref{lemma:skrp1}.}\label{app:skrp}
{\em\  
For any policy $\sigma$, there exists a canonical policy $\wt{\sigma}$ such that 
$$\bP(\wt{\sigma}, \wt{\pi}, (1-2\ep))\ge\bP(\sigma, \pi, 1)-O(\ep).$$
}

\noindent{\bf Lemma~\ref{lemma:skrp2}.}
{\em\
For any canonical policy $\wt{\sigma}$,
$$\bP(\wt{\sigma}, \pi, (1-2\ep))\ge\bP(\wt{\sigma}, \wt{\pi}, 1)-O(\ep).$$
}

Consider a given adaptive policy $\sigma$ and for each $v \in \cT_\sigma$, let $W(v)$ and $\wt{W}(v)$ be the sum of rewards on the path $\cR(v)$ before and after discretization respectively.
%WLOG we assume that $W(v) \le 1$ for all $v \in  \cT_\sigma$. 
Recall that $\Phi(v)$ is the probability associated with the path $\cR(v)$. In the proof of Lemma 4.2 of~\cite{li2013}, it shows that for any given set $F$ of nodes in $\cT_\sigma$ which contains at most one node from each root-leaf path, our discretization has the below property:

\begin{equation} \label{eqn:ask-dis}
\sum_{v \in F: |W(v)-\wt{W}(v)| \ge 2\epsilon} \Phi(v)=O(\ep).
\end{equation}

\begin{proof}[The Proof of Lemma \ref{lemma:skrp1}]
Consider a randomized canonical policy $\wt{\sigma}$ which has the same structure as $\sigma$.  If $\sigma_r$ inserts an item $\wt{X}$ and observes a discretized size $d\in \cV$, it chooses a random branch in $\cT(\sigma_r,\wt{\pi})$ among those sizes that are mapped to $d$, i.e., $\{w_e\ |\ D_X(w_e)=d\}$ according to the probability distribution
$$
\Pr[\text{branch $e$ is chosen}]=\frac{\Pr[X=w_e]}{\sum_{s\,|\,D_X(s)=d}\Pr[X=s]}.
$$
Then, the probability of an edge on $\cT_{\sigma_r}$ is the same as that of the corresponding edge on $\cT_{\sigma}$. The only different is two edges are labels with different weight $w_e$ on $\cT_{\sigma}$ and $\wt{w}_e$ on $\cT_{\sigma_r}$.
%we just replace the size of each edge $w_e$ with the discretized version of $\wt{w}_e$. 

Notice that $\bP(\wt{\sigma}, \wt{\pi}, (1-2\ep))$ is the sum of all paths $\cR(v)$ with $\wt{W}(v) \ge 1-2\epsilon$. %WLOG assume $\cD$ is the set of all leaves $v$ such that $W(v) \ge 1$ and $\wt{\cD}$ is the set of leaves $v$ with $\wt{W}(v) \ge 1-2\epsilon$. 
Define $\cD=\{v\in \lf:W(v)\ge 1\}$ and $\wt{\cD}=\{v\in \lf:\wt{W}(v)\ge 1-2\ep\}$, where $\lf$ is the set of leaves on $\cT(\sigma,\pi)$. Therefore we have
$$\bP(\sigma, \pi, 1)=\sum_{v \in \cD} \Phi(v),~~ \bP(\wt{\sigma}, \wt{\pi}, (1-2\ep))=\sum_{v \in \wt{\cD}} \Phi(v).$$

Consider the set $\Delta_1=\cD\setminus\wt{\cD}$. For each $v \in \Delta_1$, we have $W(v) \ge 1$ and $\wt{W}(v)<1-2\ep$. Thus we claim that $|W(v)-\wt{W}(v)|>2\ep$, implying that $\Delta_1 \subseteq \Delta\doteq \{v\in\lf: |W(v)-\wt{W}(v)| \ge 2\epsilon\}$. Thus we have
$$\bP(\wt{\sigma}, \wt{\pi}, (1-2\ep)) \ge \bP(\sigma, \pi, 1)-\sum_{v \in \Delta_1} \Phi(v)
\ge \bP(\sigma, \pi, 1)-\sum_{v \in \Delta} \Phi(v) \ge  \bP(\sigma, \pi, 1)-O(\ep).
$$
\end{proof}

\begin{proof}[The Proof of Lemma \ref{lemma:skrp2}]
In our case, we focus on the decision tree $\cT(\wt{\sigma},\wt{\pi},1)$ and assume all $\wt{w}_e$ take discretized value. $\cT(\wt{\sigma},\wt{\pi},1)$ has the same tree structure as $\cT(\wt{\sigma},\pi,1-2\ep)$.
%The gap in $\bP(\wt{\sigma}, \wt{\pi}, 1)-\bP(\wt{\sigma}, \pi, (1-2\ep))$ is due to the fact that we need to remove all those paths ending at a leaf node $v$ with $W(v)<1-2\ep$ while $\wt{W}(v) \ge 1$. WLOG assume for all leaves node in $T(\wt{\sigma})$, $\wt{W}(v) \ge 1$. 

Define $\Delta_2 =\{v \in \lf, W(v) <1-2\ep,\wt{W}(v)\ge 1\}$, where $\lf$ is the set of leaves in $\cT_\sigma$. Then we see $\wt{W}(v)-W(v)>2\ep$, implying $\Delta_2 \subseteq \Delta=\{v \in \lf: |W(v)-\wt{W}(v)| \ge 2\ep\}$. By the result of Equation~\eqref{eqn:ask-dis}, we see
$$\sum_{v \in \Delta_2} \Phi(v)  \le \sum_{v \in \Delta} \Phi(v)=O(\ep).$$ 
Therefore we claim that 
$$\bP(\wt{\sigma}, \pi, (1-2\ep)) \ge \bP(\wt{\sigma}, \wt{\pi}, 1)-\sum_{v \in \Delta_2} \Phi(v) \ge \bP(\wt{\sigma}, \wt{\pi}, 1)-O(\ep). $$
\end{proof}

\section{Proof of Lemma \ref{lemma:zro-canonical}}\label{app:zro}

\noindent{
\bf Lemma \ref{lemma:zro-canonical}.}
{\em\
Let $\hat{\pi}$ be the distribution of size and profit for items and be $\wt{\pi}$ be the discretized version of $\hat{\pi}$. Then, the following statements hold:
\begin{enumerate}
\item For any policy $\sigma$, there exists a canonical policy $\wt{\sigma}$ such that 
$$\bP(\wt{\sigma}, \wt{\pi}, (1+2\ep))\ge(1-O(\ep))\bP(\sigma, \hat{\pi}, 1).$$
\item For any canonical policy $\wt{\sigma}$,
$$\bP(\wt{\sigma}, \hat{\pi}, (1+2\ep))\ge(1-O(\ep))\bP(\wt{\sigma}, \wt{\pi}, 1).$$
\end{enumerate}
}
\begin{proof}[The proof of Lemma \ref{lemma:zro-canonical}]
%This proofs are essentially the same as that of Lemma \ref{lemma:skrp1}.
For the first result, consider a randomized canonical policy $\wt{\sigma}$ which has the same structure as $\sigma$.  If $\sigma_r$ inserts an item $\wt{X}$ and observes a discretized size $d\in \cV$, it chooses a random branch in $\cT(\sigma_r,\wt{\pi})$ among those sizes that are mapped to $d$, i.e., $\{w_e\ |\ D_X(w_e)=d\}$ according to the probability distribution
$$
\Pr[\text{branch $e$ is chosen}]=\frac{\Pr[X=w_e]}{\sum_{s\,|\,D_X(s)=d}\Pr[X=s]}.
$$
Then, the probability of an edge on $\cT_{\sigma_r}$ is the same as that of the corresponding edge on $\cT_{\sigma}$. The only different is two edges are labels with different weight $w_e$ on $\cT_{\sigma}$ and $\wt{w}_e$ on $\cT_{\sigma_r}$.

%we consider a randomized canonical policy $\wt{\sigma_r}$ which has the same structure as $\sigma$ and we just replace the size of each edge $w_e$ with the discretized version of $\wt{w}_e$. 
We have  $P(v)=\sum_{i\in \cR(v)}p_i$ which is less than $O(\opt/\ep)$ by Lemma \ref{lemma:knap_profit_bound}. 
Define $\cD=\{v\in \lf:W(v)\le 1\}$ and $\wt{\cD}=\{v\in \lf:\wt{W}(v)\le 1+2\ep\}$, where $\lf$ is the set of leaves on $\cT(\sigma,\pi)$.
Then we have
$$\bP(\sigma, \hat{\pi}, 1)=\sum_{v\in\cD} \Phi(v)\cdot P(v), \quad
\bP(\wt{\sigma}, \wt{\pi}, 1+2\ep)=\sum_{v\in \wt{\cD}} \Phi(v)\cdot P(v).$$
%Let $\Delta=\{v \in \lf: W(v) \le 1, \wt{W}(v)>1+2\ep\}$. Thus we have
Define $\Delta=\{v\in \lf:|W(v)-\wt{W}(v)|\ge 2\ep\}$. Then we have $\cD\setminus \wt{\cD}\subseteq \Delta$. By the result of Equation \eqref{eqn:ask-dis}, we have 
$$\sum_{v\in \cD\setminus\wt{\cD}}\Phi(v)
\le\sum_{v \in \Delta} \Phi(v)=O(\ep^3).$$
By Property (P1), for any node $v$, we have $P(v)\le \theta_1+\theta_2$.Then the gap $\bP(\sigma, \pi, 1)-\bP(\wt{\sigma}, \wt{\pi}, (1+2\ep)$ is less than
$$
\sum_{v\in D\setminus \wt{D}}\Phi(v)\cdot P(v)\le \sum_{v\in D\setminus\wt{D}}\Phi(v)\cdot \frac{2\opt}{\ep^2}=O(\ep)\opt.
$$
%Notice that from the result~\eqref{eqn:zro-dis}, we see 
%$\sum_{v \in \Delta} \Phi(v)\lam W(v)=  O(\lam \ep)$.
This completes the proof of the first part.

%This proof is essentially the same as that of part 2 in Lemma 4.2 of~\cite{li2013}.
Now, we prove the second part. Since a canonical policy makes decisions based on the discretized, $\cT(\wt{\sigma},\wt{\pi},1)$ has the same tree structure as $\cT(\wt{\sigma},\hat{\pi},1+2\ep)$. Define $\cD=\{v\in \lf:W(v)\le 1+2\ep\}$ and $\wt{\cD}=\{v\in \lf:\wt{W}(v)\le 1\}$, where $\lf$ is the set of leaves on $\cT(\wt{\sigma},\wt{\pi},1)$.
Then we have
$$\bP(\wt{\sigma}, \hat{\pi}, 1+2\ep)=\sum_{v\in\cD} \Phi(v) \cdot P(v), \quad
\bP(\wt{\sigma}, \wt{\pi}, 1)=\sum_{v\in \wt{\cD}} \Phi(v)\cdot P(v).$$
Then the gap $\bP(\wt{\sigma}, \wt{\pi}, 1)-\bP(\wt{\sigma}, \hat{\pi}, (1+2\ep)$ is equal to
$$
\sum_{v\in \wt{D}\setminus D}\Phi(v)\cdot P(v)\le \sum_{v\in \wt{D}\setminus D}\Phi(v)\cdot \frac{2\opt}{\ep^2}=O(\ep)\opt.
$$
\end{proof}

\end{document}